\newenvironment{definition}{\begin{Definition}}{\end{Definition}}
\newtheorem{claim}{Claim}[section]
\newenvironment{proofof}[1]{\rm \trivlist \item[\hskip \labelsep{\it #1.\/}]}{\outerparskip 0pt\endtrivlist}
\newtheorem{theorem}{Theorem} %[section]
\newtheorem{lemma}{Lemma}[section]
\newtheorem{definition}[lemma]{Definition}
\newtheorem{claim}[lemma]{Claim}
\newenvironment{proofof}[1]{\begin{proof}[#1]}{\end{proof}}
\newcounter{prop}
\newenvironment{properties}
{
\addtocounter{prop}{1}
\begin{enumerate}[labelindent=0pt,label=(\Alph{prop}\arabic*),itemindent=1em,itemsep=-1pt]
}
{
\end{enumerate}
}
\newcommand{\set}[1]{\left\{#1\right\}}
\newcommand{\cardinal}[1]{\left|#1\right|}
\DeclareMathOperator{\E}{\mathbb E}
\newcommand{\R}{{\mathbb R}}
\newcommand{\calA}{{\mathscr A}}
\newcommand{\calB}{{\mathscr B}}
\newcommand{\cA}{{\mathscr A}}
\newcommand{\cB}{\mathscr{B}}
\newcommand{\cC}{{\mathscr C}}
\newcommand{\cD}{\mathscr{D}}
\newcommand{\cE}{{\mathscr E}}
\newcommand{\cI}{{\mathsf I}}
\newcommand{\sfH}{{\mathsf H}}
\newcommand{\sfL}{{\mathsf L}}
\newcommand{\eps}{\varepsilon}
\renewcommand{\epsilon}{\varepsilon}
\newcommand{\poly}{\mathrm{poly}}
\newcommand{\polylog}{\mathrm{polylog}}
\newcommand{\x}{{\mathbf x}}
\newcommand{\Exp}{\EX}
\newcommand{\EX}{\hbox{\bf E}}
\newcommand{\OPT}{{\mathsf{OPT}}}
\newcommand{\LP}{{\mathsf{LP}}}
\newcommand{\floor}[1]{\left\lfloor#1\right\rfloor}
\newcommand{\ignore}[1]{}
\renewcommand{\epsilon}{\varepsilon}
\newcommand{\Sec}[1]{\texorpdfstring{\hyperref[sec:#1]{\S\ref*{sec:#1}}}{\S\ref*{sec:#1}}} %section
\newcommand{\Thm}[1]{\texorpdfstring{\hyperref[thm:#1]{Theorem~\ref*{thm:#1}}}{Theorem~\ref*{thm:#1}}} %theorem
\newcommand{\Lem}[1]{\texorpdfstring{\hyperref[lem:#1]{Lemma~\ref*{lem:#1}}}{Lemma~\ref*{lem:#1}}} %lemma
\newcommand{\Clm}[1]{\texorpdfstring{\hyperref[clm:#1]{Claim~\ref*{clm:#1}}}{Claim~\ref*{clm:#1}}} %claim
\newcommand{\App}[1]{\hyperref[app:#1]{Appendix~\ref*{app:#1}}} %appendix
\title{On $(1,\epsilon)$-Restricted Assignment Makespan Minimization}
\author{Deeparnab Chakrabarty\footnote{Microsoft Research. {\tt dechakr@microsoft.com}} \and Sanjeev Khanna\thanks{Department of Computer and Information Science, University of Pennsylvania,
Philadelphia, PA 19104. Email: {\tt sanjeev@cis.upenn.edu}.  Supported in part by National Science Foundation grant CCF-1116961.}  \and Shi Li\footnote{TTIC. {\tt shili@ttic.edu}}}
\begin{document}
\date{}
\maketitle

\begin{abstract}
Makespan minimization on unrelated machines is a classic problem in approximation algorithms. No polynomial time $(2-\delta)$-approximation algorithm is known for the problem for constant $\delta> 0$. This is true even for certain special cases, most notably the {\em restricted assignment} problem where each job has the same load on any machine but can be assigned to one from a specified subset. Recently in a breakthrough result, Svensson~\cite{Sve11} proved that the integrality gap of a certain configuration LP relaxation is upper bounded by $1.95$ for the restricted assignment problem; however, the rounding algorithm is {\em not known} to run in polynomial time.\smallskip

In this paper we consider the $(1,\epsilon)$-restricted assignment problem where each job is either heavy ($p_j = 1$) or light ($p_j = \eps$), for some parameter $\eps > 0$. 
Our main result is a $(2-\delta)$-approximate \emph{polynomial time} algorithm for the $(1,\epsilon)$-restricted assignment problem for a fixed constant $\delta> 0$. Even for this special case, the best polynomial-time approximation factor known so far is $2$. We obtain this result by rounding the configuration LP relaxation for this problem.
A simple reduction from vertex cover shows that this special case remains NP-hard to approximate to within a factor better than $7/6$.
\end{abstract}

\setcounter{page}{0}
\thispagestyle{empty}
\ifdefined\CR

\else
\newpage
\fi

\section{Introduction}

In the makespan minimization problem, we are given a set $M$ of $m$ machines, and a set $J$ of $n$ jobs where a job $j$ contributes a load of $p_{ij}$ to a machine $i$, if assigned to it. The goal is to assign each job to a machine, so that the maximum load on any machine is minimized. Formally, one seeks an allocation $\sigma:J\to M$ minimizing $\max_{i\in M} \sum_{j:\sigma(j)=i} p_{ij}$. In 1990, Lenstra, Shmoys, and Tardos \cite{LST90} gave a $2$-approximation for the problem, and showed that it is NP-hard to obtain an approximation factor better than $3/2$. 
Closing this gap is an outstanding problem in approximation algorithms.

In order to understand the problem better, researchers have focused on special cases. 
The most notable among them is the {\em restricted assignment} problem. %(also known informally as the Santa Claus version due to the closely related paper~\cite{BS06} on the `max-min' version). 
In this problem, each job $j\in J$ has an inherent load $p_j$ but it can be assigned only to a machine  from a specified subset. Equivalently, each $p_{ij} \in \{p_j,\infty\}$ with $p_{ij} = \infty$ for machines $i$ which $j$ cannot be assigned to.
 The hardness of $3/2$ carries over to the restricted assignment problem and no  polynomial time $(2-\delta)$-algorithm is known for any constant $\delta > 0$.
In a breakthrough, Svensson~\cite{Sve11} proved that the integrality gap of a certain configuration LP for the restricted assignment problem is at most  $33/17$.  Svensson's result can thus be used to efficiently {\em estimate the value} of the optimum makespan to within a factor of $33/17$; however, no {\em polynomial time} algorithm to compute a corresponding schedule is known. Nonetheless, it gives hope\footnote{Without discussing this technically, we refer the reader to the articles by Feige~\cite{F08}, and Feige and Jozeph~\cite{FJ14}.} that the restricted assignment case may have a polynomial time `better-than-factor-$2$' algorithm. 

Our paper makes progress on this front.  We study the $(1,\epsilon)$-restricted assignment problem, in which 
all jobs fall in two classes: heavy or light. Every heavy job has load, $p_j = 1$, and each light job has $p_j = \eps$, for some parameter $\eps > 0$, and the goal is to find a schedule which minimizes the makespan.
We give a $(2-\delta^*)$-approximate \emph{polynomial time} algorithm for the $(1,\epsilon)$-restricted assignment problem for a constant $\delta^*\!>\! 0$.
%That is, we assume there exists a schedule which on any machine, either assigns exactly one heavy job, or assigns a collection of  light jobs of total load at most $1$. 

The $(1,\epsilon)$-case is interesting because in some sense it is the simplest case which we do not understand. If all jobs have the same size, the problem becomes a matching problem. If there are two job sizes, we can assume they are $1$ and $\epsilon < 1$ by scaling. The reader should think of $\eps$ as a number that tends to $0$, as there is a simple $(2-\epsilon)$-approximation if each job has size either $1$ or $\eps$ (see \App{large-eps}).  The $(1,\epsilon)$-restricted assignment problem is already hard -- it is NP hard to obtain an approximation factor $<7/6$ for this problem (see \App{hardness}), and no $(2-\delta)$-approximation is known for any $\delta$ independent of $\eps$.  It is plausible that an understanding of the $(1,\epsilon)$-case might lead to an understanding of  the restricted assignment case; indeed,
% via grouping jobs in carefully constructed classes according to their loads. 
 Svensson \cite{Sve11}, in his paper, first gives an improved integrality gap of $(5/3+\eps)$ for this special case before proving  his result for the general case. 
 %Of course, the general case algorithm and analysis is much more sophisticated, but arguably, the ideas arise from understanding the  $(1,\epsilon)$-case.
%Why is this special case conceptually easier than the general case even as it seems to capture many of the difficulties?
% Note that once we have an assignment of the heavy jobs to machines, we can decide in polynomial time whether there is an allocation of light jobs so that the total makespan is at most $(2-\delta)$ or not, for any $\delta>0$. Thus, this special case teases out the heart of the problem: finding a `good' assignment of the heavy jobs subsequent to which there is enough capacity in the machines for the light jobs to be assigned. This itself is a non-trivial problem, and in this paper we provide the first efficient algorithm which does it.

\begin{theorem}(Main Result.)
\label{thm:main}
 There is a polynomial time $(2-\delta^*)$-approximation algorithm, where $\delta^*>0$ is a specific constant,  for makespan minimization in the $(1,\epsilon)$-restricted assignment case.
\end{theorem}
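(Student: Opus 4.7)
The plan is to round the configuration LP at target makespan $T := 2 - \delta^*$. By scaling, we may assume $\OPT = 1$, so every heavy job has load $1$, every light job has load $\eps$, and in the optimal solution each machine receives at most one heavy job. Solve the configuration LP for $T$; this is feasible, since even Svensson's weaker $(5/3+\eps)$ bound for the $(1,\eps)$-case already certifies it. The key structural observation is that in any schedule with makespan $< 2$, each machine carries at most one heavy job (two heavy jobs alone exceed $T$), and a machine that carries a heavy job can host at most $1-\delta^*$ worth of light load; a machine with no heavy job can host up to $2-\delta^*$ of light load. So the task naturally splits into (i)~assigning heavy jobs to machines (essentially a matching problem), and (ii)~packing light jobs into the residual capacity.

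My approach would be a two-phase rounding. In the first phase, use the LP marginals on the heavy jobs to select an (almost) integral assignment of heavy jobs to machines via a bipartite flow / matching computation, designating a set $M_H$ of ``heavy-bearing'' machines. In the second phase, solve a residual fractional assignment for the light jobs where each machine in $M_H$ has light-capacity $1-\delta^*$ and each machine outside has capacity $2-\delta^*$, then round via a Lenstra--Shmoys--Tardos (LST)-style procedure. Because light jobs have size only $\eps$, the LST rounding loses only an additive $O(\eps)$ per machine, which can be absorbed into $\delta^*$ provided $\eps \leq \delta^*/c$ for a suitable constant $c$; the complementary regime where $\eps$ is not tiny is handled by the simple $(2-\eps)$-approximation mentioned in the excerpt.

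The hard part — and the place where the entire $\delta^*$ gain is created — is the first phase: a naive integral assignment of heavy jobs loses an extra $+1$ exactly on the machines where LP-fractional heavy jobs must be re-routed, and that is precisely the $+\max_j p_j = +1$ slack that keeps standard LST rounding stuck at factor $2$. To beat $2$, I would prove a structural lemma asserting that, after a polynomial-time ``cleanup'' of the LP (uncrossing heavy-job support, local swaps on tight configurations, possibly iterated LP resolving on a residual instance), the support graph of the surviving heavy-fractional jobs has a forest/pseudoforest structure on which integral rounding is exact in the LST sense, except that each heavy job may be displaced to a neighbor machine whose LP light load was strictly less than $1-\Omega(\delta^*)$. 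The light-budget slack on those neighbor machines is what pays for the displacement, which is why it is essential to carry out the heavy phase before committing the light phase.

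Putting the pieces together, each machine receives at most one heavy job of load $1$, plus at most $(1-\delta^*)+O(\eps)$ of rounded light load if it is heavy-bearing, or at most $(2-\delta^*)+O(\eps)$ of light load otherwise; rescaling $\delta^*$ down by the $O(\eps)$ slack yields a makespan of $2-\delta^*$ times $\OPT$. The main obstacle I foresee is proving the structural/cleanup lemma for the heavy phase: showing that the configuration LP, restricted to the heavy jobs and after combinatorial preprocessing, admits a polynomial-time integral rounding that only displaces heavy jobs into provably under-loaded neighbors. This is the step where the improvement over $2$ genuinely comes from, and it is almost certainly the technical heart of the paper.
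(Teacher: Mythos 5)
Your high-level decomposition is sound: yes, the problem splits into a heavy phase and a light phase, and yes, the $\delta^*$ gain must come from guaranteeing that heavy-bearing machines have a light load bounded strictly away from $1$. But the technical heart — your ``structural/cleanup lemma'' — is where the proposal breaks down, both because it is underspecified and because the mechanisms you name (uncrossing heavy support, local swaps on tight configurations, iterated LP resolving) are not the ones that actually close the gap, and it is unclear that any purely combinatorial/local-move argument of that flavor could.

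The paper's route is genuinely different. First, it works with a simpler parametrized LP rather than the full configuration LP, and reduces to a \emph{canonical instance}: each heavy job gets a \emph{private} set of candidate machines (obtained by rotating cycles in the bipartite support graph, not by uncrossing), and each light job has support on at most two machines. Second — and this is completely absent from your proposal — it coarsens the instance parameters via a Feige-style iterated merging/randomized-rounding step, so that positive fractional heavy assignments are $\geq 1/q_0$ and light loads shared by a pair are $\geq 1/q_0$; this step uses the asymmetric Lov\'asz Local Lemma and its constructive versions, and it unavoidably \emph{increases} the fractional load per machine to roughly $1 + \sqrt{\log q_0 / q_0}$. That increase swamps the naive $1/q_0$ gain you would hope for, which is exactly why the argument you sketch (``displace each heavy job to a neighbor whose light load was $< 1 - \Omega(\delta^*)$'') does not follow from such a structural lemma on its own. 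The third phase is a \emph{randomized} assignment of heavy jobs (not a matching/flow computation), with correctness proved by a delicate second application of the asymmetric LLL whose dependency structure is controlled precisely by the canonical-instance structure, a Hall-type characterization of $\delta$-good assignments, and a further dense/sparse edge sparsification. None of these probabilistic ingredients appear in your proposal, and replacing them with a deterministic LP-cleanup argument is not a routine substitution — that is where the ``almost certainly the technical heart of the paper'' you flag would have to be supplied, and the paper supplies it with a quite different toolbox.

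One smaller issue: you solve the configuration LP at $T = 2-\delta^*$ and invoke feasibility via Svensson's $5/3$ bound, but a feasible solution at that loose threshold is not informative. What is actually needed is a solution at $T$ close to $\OPT_f$; if $\OPT_f > 1 + \Theta(\delta^*)$ then Lenstra--Shmoys--Tardos alone already beats $2$, and it is only in the regime $\OPT_f \leq 1 + \Theta(\delta^*)$ that the nontrivial rounding is invoked. Your proposal does not make this case split, and the LP at the looser threshold does not carry the constraints (e.g., the bound on light load coexisting with heavy fractional mass) that the paper's rounding relies on.
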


\subsection{Our Techniques}
For concreteness, let us assume for now that the optimal makespan is $1$.
Note that once we have an assignment of the heavy jobs to machines, we can decide in polynomial time whether there is a (fractional) allocation of light jobs so that the total makespan is at most $(2-\delta)$ or not, for any $\delta>0$. Such an assignment of heavy jobs is called a {\em $\delta$-good} assignment, and given such an assignment one can recover an integral assignment for light jobs as well such that the total load on any machine is at most $(2 - \delta + \eps)$. 
 % of the heavy jobs that allows all light jobs to be fractionally assigned such that load on every machine is at most $2 - \delta$ for some $\delta > 0$.
%subsequent to which there is enough capacity in the machines for the light jobs to be assigned. 
%This itself is a non-trivial problem, and in this paper we provide the first efficient algorithm which does it.
%Our algorithm is based on rounding a configuration LP solution for a given instance $(1,\epsilon)$-restricted assignment problem.  The broad goal of the rounding process then is to find an assignment of heavy jobs that allows all light jobs to be fractionally assigned such that load on every machine is at most $2 - \delta$ for some $\delta > 0$. We refer to such an assignment as a {\em $\delta$-good} assignment. Standard techniques can be used to then 
The rounding process to recover a $\delta$-good assignment proceeds in three phases. 

In the first phase, we `reduce' our instance to a {\em canonical instance} where for each heavy job there is a distinct (private) set of machines to which it can be assigned to, and for each unassigned light job, there are at most  two machines to which it can be assigned to. Such a  pre-processing technique has also been used in 
tackling the max-min allocation problem~\cite{BS06,CCK09}. There are two main parameters of interest in a canonical instance, namely, a parameter $p$ that asserts that the positive fractional assignment of a heavy job to a machine is at least $1/p$, and a parameter $q$, that asserts that the total load of light jobs shared by any two machines is either $0$ or is at least $1/q$.

The second phase of the rounding process is a {\em coarsening} of the parameters $p$ and $q$ of the canonical instance where we ensure that whenever a heavy job is fractionally assigned to a machine, the assignment is sufficiently large (at least $1/q_0$ for some constant $q_0$). Furthermore, the total light load shared by any pair machines is either $0$ or at least  $1/q_0$. The flip side is the total fractional load on a single machine could increase from $1$ to roughly $1 + \sqrt{\log q_0/q_0}$.
The technique used to achieve this coarsening is essentially due to Feige~\cite{Fei08} in his work on max-min allocation, and is done by iteratively assigning heavy jobs and `merging' light jobs. The proof uses the asymmetric Lovasz Local Lemma (LLL), and 
polynomial time algorithms to do the same are guaranteed by the recent works of Moser and Tardos~\cite{MT10} and Haeupler, Saha, and Srinivasan~\cite{HSS11}.
%The two properties are achieved by an iterative rounding of the assignment of heavy jobs and merging together light jobs shared by a pair of machines into fewer jobs of larger size. Both these steps are done by a random experiment and we use asymmetric Lovasz Local Lemma (LLL) to argue that with positive probability, any $\delta$-good solution for the new canonical instance is (essentially) also a $\delta$-good solution for the original canonical instance. P

The heart of our approach and our main technical contribution is the third and the final phase of rounding. At this point we have a canonical instance where each heavy job is assigned to a constant number $q_0$ of machines that are private to it, each light job has constant size, and is shared between at most  two machines. Note that if the fractional load on each machine was at most $1$, then things are trivial -- assign the heavy job to the machine which is fractionally assigned more than $1/q_0$ of it, and the total load on it become $2-1/q_0$. 
However, the second step has increased the fractional load from $1$ to $1\!\!+\!\!\sqrt{\log q_0/q_0}$, and this `extra' load swamps the gain of $1/q_0$.
This issue does not arise in the max-min allocation where one targets a constant factor; however, it defeats our goal of beating the 
factor $2$-approximation for makespan minimization.

Nevertheless, if we could find an assignment such that the total light load on any machine receiving a heavy job is at most $1\!-\!1/{\small \polylog q_0}$, then we are in good shape, and this is what we do.
%A simple flow-based assignment procedure shows that this condition implies a `$\delta$-good assignment' for $\delta = 1/\poly\log q_0$. 
We find such an assignment by randomized rounding and again use the (asymmetric) LLL. A key and difficult aspect of making this entire approach work is to have only a small degree of dependence between various ``bad'' events in the final rounding step. This reduction in dependence is the essence of our approach, and is accomplished by the structure of canonical instances, %, reducing the parameters $p$ and $q$ to a constant, 
and further simplifying this structure before picking the final random assignment.

\subsection{Relevant Related Work}
We briefly note some other works on makespan minimization. Ebenlendr, Krc\'{a}l, and Sgall~\cite{EKS08} consider the special case of the restricted assignment makespan minimization problem where each job could be assigned to at most $2$ machines, and design a polynomial time $1.75$-approximation algorithm for the same. Interestingly, even when jobs can go to at most two machines, the general makespan minimization problem seems difficult; Verschae and Wiese~\cite{VW11} show that the configurational LP has integrality gap tending to $2$. Kolliopoulos and Moysoglou~\cite{KM13} consider the restricted assignment problem with two jobs sizes as well; they show that Svensson's estimation algorithm can be improved to $1.883$ for this case. See \App{large-eps} for a slightly better factor.

The `dual' problem to makespan minimization, max-min allocation, where jobs need to be allocated to maximize the minimum load, has seen interesting developments recently. A constant factor approximation is known for the `restricted assignment' case (the so-called {\em Santa Claus} problem), where each job has the same load for all the machines it can go to. This follows from the works of Bansal and Sviridenko~\cite{BS06}, Feige~\cite{Fei08}, and the constructive LLL version due to~\cite{MT10,HSS11}. Our work closely follows this line and exhibits its utility for the makespan minimization problem. Another line of work on the Santa Claus problem is via local search; Asadpour, Feige, and Saberi~\cite{AFS08} show an upper bound of $4$ on the integrality gap via a not-known-to-be-polynomial time local search algorithm. Polacek and Svensson~\cite{PS12} use these ideas to give a {\em quasipolynomial} time, $4+\epsilon$-approximation. %One may consider the $1.95$-upper bound for the makespan minimization in this spirit, and it would be interesting to obtain such a factor in quasipolynomial time.
Very recently, Annamalai, Kalaitzis, and Svensson~\cite{AKS15} improve this to get a polynomial time $13$-approximation.
For the {\em general} max-min allocation problem, Chakrabarty, Chuzhoy, and Khanna~\cite{CCK09}, improving upon earlier results by Asadpour and Saberi~\cite{AS07} and Bateni, Charikar, and Guruswami~\cite{BCG09}, give a $O(n^\epsilon)$-approximation algorithm which runs in $O(n^{1/\epsilon})$-time.

\section{Linear Programming Relaxation}
\label{sec:LP}

Recall that we denote the set of all machines by $M$ and the set of all jobs by $J$. We assume that $J$ is partitioned into the set of heavy jobs by $J_\sfH$, and the set of light jobs by $J_\sfL$.  We consistently use $j$ to index jobs, and $i,h$ and $k$ to index machines. For any job $j \in J$, we denote by $M_j$ the set of machines to which job $j$ can be assigned.

Given a guess $T\geq 1$ for the optimal makespan, the {\em configuration LP} w.r.t. $T$ is as follows. For every machine $i$, $\cC_i$ contains subsets of jobs of total load at most $T$ which can be assigned to $i$ . We have a variable $y_{i,C}$ for each machine $i$ and subset $C\in \cC_i$. %The constraints are as follows.
\begin{alignat}{2}
\textstyle \sum_{C\in \cC_i} y_{i,C} & = 1& \qquad \forall i\in M \label{eq:ConfLP1}\tag{Conf LP 1}\\
\textstyle \sum_{i\in M}\sum_{C\in \cC_i:j\in C} y_{i,C} & = 1& \qquad \forall j\in J \label{eq:ConfLP2} \tag{Conf LP 2}
\end{alignat}
\noindent
Given an instance $\cI$, we let $\OPT_f$ be the {\em smallest} $T$ for which the configuration LP has a feasible solution; $\OPT_f$ can be found by binary search.
Trivially, $1\leq \OPT_f \leq \OPT$, where $\OPT$ denotes the optimal (integral) makespan. %\smallskip

In this paper, we use the following simpler parametrized LP relaxation $\LP(\rho,\delta)$ tailored for $(1,\eps)$-instances. 
%We say that a given instance of the $(1,\epsilon)$-restricted assignment problem is {\bf feasible} if there is an assignment of jobs to machines that has makespan exactly $1$. That is, there exists an assignment in which a machine either gets 1 heavy or at most $\floor{1/\eps}$ light jobs. Our starting point is the following parametrized linear programming relaxation $\LP(\rho)$ for feasible $(1,\epsilon)$-restricted assignment instances. 
%We call the LP the compact LP for feasible $(1, \eps)$-restricted assignment instances.
\ifdefined\CR
\begin{alignat}{2}
\textstyle \sum_{i \in M_j}x_{i,j} &= 1 & \qquad \forall j &\in J  \label{CLP:job-covered}\\
\textstyle \sum_{j \in J_\sfH : i \in M_j}x_{i,j} &= z_i & \qquad \forall i &\in M \label{CLP:machine-big-job}\\
\textstyle z_i & \leq 1 & \qquad \forall i&\in M  \label{CLP:zi-atmost-1} \\
\textstyle z_i + \epsilon \sum_{j \in J_\sfL : i \in M_j} x_{i,j} &\leq 1 + \rho\delta &\qquad \forall i &\in M \label{CLP:machine-small-job}\\
(1-\rho) z_i + x_{i,j} &\leq 1 & \qquad \forall j &\in J_\sfL, i \in M_j \label{CLP:compact}\\
x_{i,j}, z_i &\geq 0 & \qquad \forall j &\in J, i \in M_j \nonumber
\end{alignat}
\else
\begin{minipage}{0.37\textwidth}
\begin{alignat}{2}
\textstyle \sum_{i \in M_j}x_{i,j} &= 1 & \qquad \forall j &\in J  \label{CLP:job-covered}\\
\textstyle \sum_{j \in J_\sfH : i \in M_j}x_{i,j} &= z_i & \qquad \forall i &\in M \label{CLP:machine-big-job}\\
\textstyle z_i & \leq 1 & \qquad \forall i&\in M \label{CLP:zi-atmost-1}
\end{alignat}
\end{minipage}
\begin{minipage}{0.58\textwidth}
\begin{alignat}{2}
\textstyle z_i + \epsilon \sum_{j \in J_\sfL : i \in M_j} x_{i,j} &\leq 1 + \rho\delta &\qquad \forall i &\in M \label{CLP:machine-small-job}\\
(1-\rho) z_i + x_{i,j} &\leq 1 & \qquad \forall j &\in J_\sfL, i \in M_j \label{CLP:compact}\\
x_{i,j}, z_i &\geq 0 & \qquad \forall j &\in J, i \in M_j \nonumber
\end{alignat}
\end{minipage}
\fi

\medskip
To get some intuition, 
consider the LP with $\rho=\delta =  0$.
We claim there exists a  feasible solution if $\OPT = \OPT_f = 1$. In this case, it must be that every machine either gets one heavy job or 
at most $\floor{1/\eps}$ light jobs. In particular, any machine getting a light job {\em cannot} get a heavy job.
Constraint~\eqref{CLP:compact} encodes this. %and gives strength to the LP (over the naive LP relaxation).
The connection between $\LP(\rho,\delta)$ and the configuration LP is captured by the following lemma.
\begin{restatable}{lemma}{clp}
\label{lemma:reduce-to-feasible-instances}
\label{lem:reduce-to-feasible-instances}
Given an $(1, \eps)$-restricted assignment instance $\cI$ %whose optimal makespan is at most 
with $\OPT_f\leq 1+\rho\delta$, there is a polynomial time procedure which 
returns another $(1,\eps)$-instance $\cI'$ which has a feasible solution to $\LP(\rho,\delta)$.
Furthermore, given a schedule for $\cI'$ with makespan $T$, the procedure returns a schedule for $\cI$ of makespan $\leq T+\delta$.
%identifies a subset $J'\subseteq J_\sfL$ such that the residual instance $(M, J\setminus J')$ has a feasible solution to $\LP(\rho)$. The procedure also
%returns an assignment $\sigma:\!J'\!\mapsto\! M$ such that for any machine $i$,  $\eps\cardinal{\sigma^{-1}(i)}\leq \delta_1$.
\end{restatable}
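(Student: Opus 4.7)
The plan is to solve the configuration LP on $\cI$, extract the natural fractional values $(z_i, x_{i,j})$, identify the set $J_\pi$ of ``bad'' light jobs for which the compactness constraint \eqref{CLP:compact} is violated, and pre-commit these bad jobs integrally to machines via the algorithmic asymmetric Lovasz Local Lemma so that the added load per machine is at most $\delta$. The residual instance $\cI' := \cI\setminus J_\pi$ will then admit the extracted $(z,x)$ restricted to jobs of $\cI'$ as a feasible $\LP(\rho,\delta)$ solution.

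First, solve the configuration LP and fix an optimal $y$ of value $T^*\leq 1+\rho\delta$. Assuming $\rho\delta<1$, no used configuration contains two heavy jobs. Set
\[
z_i := \sum_{\substack{C\in\cC_i\\ C\cap J_\sfH\neq\emptyset}}\! y_{i,C}, \qquad x_{i,j} := \sum_{C\ni j} y_{i,C}, \qquad a_{i,j} := \sum_{\substack{C\ni j\\ C\cap J_\sfH\neq\emptyset}}\! y_{i,C}.
\]
One checks directly that $(z,x)$ satisfies \eqref{CLP:job-covered}, \eqref{CLP:machine-big-job}, \eqref{CLP:zi-atmost-1}, and \eqref{CLP:machine-small-job}; only \eqref{CLP:compact} can fail. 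Let $S_j:=\{i\in M_j:(1-\rho)z_i+x_{i,j}>1\}$ and $J_\pi:=\{j\in J_\sfL:S_j\neq\emptyset\}$. Two key estimates drive the pre-commitment: (i) for every $i\in S_j$, $a_{i,j}\geq x_{i,j}+z_i-1>\rho z_i$, since the non-heavy configurations on $i$ together carry mass $\leq 1-z_i$; and (ii) $\eps\sum_j a_{i,j}\leq z_i\cdot\rho\delta\leq\rho\delta$, since any used heavy-containing configuration on $i$ has light load at most $T^*-1\leq\rho\delta$.

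Second, round the bad jobs into an integral pre-assignment $\pi\colon J_\pi\to M$. A natural fractional distribution is $p_{i,j}:=a_{i,j}/a^S_j$ for $i\in S_j$ (where $a^S_j:=\sum_{i'\in S_j}a_{i',j}$); using estimate (i) the fractional per-machine load is controlled in terms of $\rho\delta$, and the constant multiplicative slack $\rho<1$ to the target $\delta$ allows one to invoke the algorithmic asymmetric Lovasz Local Lemma of Moser--Tardos and Haeupler--Saha--Srinivasan (already cited in this paper) to obtain in polynomial time a deterministic $\pi$ whose per-machine load is at most $\delta$. Set $\cI':=\cI\setminus J_\pi$ (same machines, same $M_j$'s). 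The extracted $(z,x)$ restricted to jobs of $\cI'$ is then feasible for $\LP(\rho,\delta)$: constraints \eqref{CLP:job-covered}--\eqref{CLP:machine-small-job} are inherited from the configuration LP, and \eqref{CLP:compact} now holds because every remaining light $j\in J_\sfL\cap\cI'$ has $S_j=\emptyset$ by construction.

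A schedule $\sigma'$ for $\cI'$ of makespan $T$ lifts to a schedule for $\cI$ by overlaying $\pi$, giving makespan $\leq T+\delta$ since $\pi$ adds load $\leq\delta$ per machine. The main technical obstacle is the LLL-based rounding in the second step: a careful choice of fractional distribution $p_{i,j}$ and bounding of dependencies is needed to ensure the per-machine load is uniformly bounded by $\delta$. In particular the naive bound $\delta/z_i$ arising from estimate (i) is too weak on machines with small $z_i$, and a more refined spreading or case-split (e.g.\ treating large-$z_i$ and small-$z_i$ machines separately) is likely required to close the gap; everything else in the argument is standard bookkeeping on top of the configuration LP.
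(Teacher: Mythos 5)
You correctly identify the two key estimates, but you then drop a crucial factor from estimate~(ii) and, as a result, manufacture a rounding problem that does not exist. Your estimate~(ii) reads $\eps\sum_j a_{i,j}\leq z_i\cdot\rho\delta\leq\rho\delta$; the last weakening to $\rho\delta$ is what leads you to the ``naive bound $\delta/z_i$'' and the belief that LLL-based spreading is needed. Keep the factor $z_i$. Then the direct deterministic assignment --- send each bad light job $j\in J_\pi$ to an arbitrary machine $i\in S_j$, i.e.\ one with $a_{i,j}>\rho z_i$ --- already works: for every machine $i$,
\begin{equation*}
\eps\,|\sigma^{-1}(i)| \;<\; \eps\sum_{j:\sigma(j)=i}\frac{a_{i,j}}{\rho z_i} \;\leq\; \frac{\eps\sum_j a_{i,j}}{\rho z_i} \;\leq\; \frac{z_i\,\rho\delta}{\rho z_i} \;=\; \delta,
\end{equation*}
since each pre-assigned $j$ contributes $1<a_{i,j}/(\rho z_i)$. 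The $z_i$ in the denominator from estimate~(i) cancels exactly with the $z_i$ in the numerator from estimate~(ii), so the added load per machine is at most $\delta$ uniformly, with no dependence on $z_i$. The paper's proof is precisely this two-line counting argument; there is no Lov\'asz Local Lemma, no randomized rounding, and no case split on $z_i$. Your remaining analysis (extracting $(z,x)$ from the configuration LP, checking constraints \eqref{CLP:job-covered}--\eqref{CLP:machine-small-job}, showing \eqref{CLP:compact} holds for surviving light jobs, and lifting a schedule for $\cI'$ back to $\cI$) matches the paper, so once you fix the arithmetic the argument closes without any of the heavy machinery you were about to invoke.
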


\begin{proof}
Let $y$ be the solution to the configuration LP at $\OPT_f \leq 1+\rho\delta$.
Call a configuration $C$ {\em heavy} if it contains a heavy job. % and light otherwise.
Define
$z_i := \sum_{C\text{ is heavy}}y_{i,C}$ for all $i$, and $x_{i,j} := \sum_{C:j\in C} y_{i,C}$ for all $i,j$.
Note that  for all $i \in M$, we have 
$z_i + \eps\sum_{j\in J_\sfL} x_{i,j} \leq 1+\rho\delta$ since each configuration has load $\leq 1+\rho\delta$.

Now, if for some light job $j$, $\sum_{C \textrm{ heavy}:j\in C} y_{i,C}   > \rho z_i$, we remove $j$ from $J_\sfL$ and set $\sigma(j) = i$. 
Let $J'_\sfL$ be the set of remaining jobs. The new instance is $\cI' = (M,J_\sfH \cup J'_\sfL)$.
For any job $j$ in $J'_\sfL$, 
$x_{i,j} \leq  \sum_{C: \textrm{not heavy}} y_{i,C}  + \rho z_i  =   1 - (1-\rho)z_i$, that is, $(1-\rho)z_i + x_{i,j} \leq 1.$
Thus,  $(z,x)$ is a feasible solution for $\LP(\rho,\delta)$ for $\cI'$. % proving the first part of the lemma. Now, 
Now, given an assignment of jobs for $\cI'$, we augment it to get one for $\cI$ by assigning job $j\in J_\sfL\setminus J'_\sfL$ to $\sigma(j)$.
Note
\ifdefined\CR
\begin{align*}
 &\quad \eps\cardinal{\sigma^{-1}(i)} < \eps\sum_{j\text{ light}} \sum_{C \textrm{ heavy }: j\in C}  \frac{y_{i,C}}{\rho z_i}\\
 &= \frac{\eps}{\rho} \sum_{C: \textrm{ heavy}} \frac{y_{i,C}}{z_i} \cardinal{C \cap J_\sfL} \leq \delta
\end{align*}
\else
\begin{alignat}{1}
 \eps\cardinal{\sigma^{-1}(i)} < \eps\sum_{j\text{ light}} \sum_{C \textrm{ heavy }: j\in C}  \frac{y_{i,C}}{\rho z_i} = \frac{\eps}{\rho} \sum_{C: \textrm{ heavy}} \frac{y_{i,C}}{z_i} \cardinal{C \cap J_\sfL} \leq \delta\nonumber
\end{alignat}
\fi
since $\eps\cardinal{C \cap J_\sfL}\leq \rho\delta$ for heavy $C$.
\end{proof}

The remainder of the paper is devoted to proving the following theorem.

\begin{theorem} 
\label{thm:main-2}
There is a constant $\delta_0 \in(0,1)$, such that %for all $\delta < \delta_0$,
given an instance $\cI$ and a feasible solution to $\LP(\rho\!=\!0.6,\delta_0)$, in polynomial time one can obtain a schedule for $\cI$ of makespan at most $(2\!-\!2\delta_0)$.
\end{theorem}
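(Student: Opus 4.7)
The plan is to execute the three-phase rounding strategy described in the introduction. Starting from a feasible solution $(x,z)$ of $\LP(0.6,\delta_0)$, the goal is to produce in polynomial time an integer assignment $\sigma:J_\sfH\to M$ of heavy jobs that is \emph{$\delta_0$-good}: after placing every heavy job $j$ on $\sigma(j)$, the remaining light jobs admit a fractional assignment with makespan at most $2-2\delta_0$. Given such a $\sigma$, one last step fractionally routes the light jobs using the LP slack and integrally rounds via bipartite matching (losing only an additive $\eps$ per machine), yielding the claimed schedule.

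\textbf{Phase 1 — canonicalization.} First I would use standard cluster-and-flow reductions from the Santa Claus literature~\cite{BS06,CCK09} to restructure the LP support so that (i) distinct heavy jobs have pairwise disjoint ``private'' machine sets, and (ii) each surviving light job is fractionally supported on at most two machines. The output is a \emph{canonical} instance parametrized by $p,q$: every positive fractional heavy assignment is $\ge 1/p$, and every nonzero pairwise shared light load between two machines is $\ge 1/q$. The transformation preserves all LP constraints.

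\textbf{Phase 2 — coarsening $p,q$ to a single constant $q_0$.} Following Feige's Santa Claus technique~\cite{Fei08}, I would iteratively merge small fractional heavy assignments into larger chunks and bundle small shared light-load pieces using random matchings. The asymmetric Lov\'asz Local Lemma, made constructive by Moser--Tardos~\cite{MT10} and Haeupler--Saha--Srinivasan~\cite{HSS11}, guarantees that in polynomial time one produces a canonical instance where both $p$ and $q$ equal a large but fixed constant $q_0$; in particular each heavy job now has at most $q_0$ candidate machines. The price is that the total fractional load on any machine may grow to at most $1+\eta+0.6\delta_0$ with $\eta=O(\sqrt{\log q_0/q_0})$.

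\textbf{Phase 3 — final randomized rounding (the main obstacle).} We must round the heavy-job marginals to an integer assignment $\sigma$ so that the realized residual fractional light load on every machine receiving a heavy job is at most $1-2\delta_0$. Sample each $\sigma(j)$ independently according to the marginals $x_{\cdot,j}$, and for each machine $i$ define a bad event $B_i$ stating that the post-rounding light load on $i$ exceeds $1-\gamma$ for $\gamma:=1/\polylog(q_0)$. Because each light job has size $\eps$, is shared by at most two machines, and has nonzero share $\ge 1/q_0$, a Chernoff bound yields $\Pr[B_i]\le \exp(-\Omega(\polylog q_0))$, comfortably smaller than the naive $\eta$ overshoot. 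The technical heart of the argument is that the dependency graph among the $B_i$'s must have bounded degree for the asymmetric LLL to apply. A priori, $B_i$ depends on every machine sharing a private heavy-job cluster or a light job with $i$, which can be large; the canonical structure (disjoint heavy clusters of size $\le q_0$, pairwise light sharing) already bounds this by $\poly(q_0)$, and a further simplification step — grouping machines into carefully chosen small blocks and redefining bad events per block — brings the degree down to what the asymmetric LLL requires. The constructive LLL algorithms of~\cite{MT10,HSS11} then produce a valid $\sigma$ in polynomial time. Choosing $\delta_0$ small enough that $\gamma>2\delta_0+\eta$, and observing that the LP constraint~\eqref{CLP:compact} with $\rho=0.6$ leaves enough residual capacity on machines without heavy jobs to absorb the displaced light mass, completes the proof. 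The hardest part throughout is engineering the Phase~3 dependency structure so that the LLL condition $\Pr[B_i]\cdot (\text{degree})\lesssim 1$ is satisfied at a constant $q_0$ large enough for $\gamma>2\delta_0+\eta$ to hold.
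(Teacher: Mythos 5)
Your three-phase outline (canonicalize, coarsen $p,q$ to a constant $q_0$ via Feige-style LLL, then a final LLL-based rounding) matches the paper's roadmap of \Thm{1}, \Thm{2}, and \Thm{small-q-and-m}, and Phases 1 and 2 are essentially what the paper does. The gap is entirely in Phase 3, and it is a real one: the naive Chernoff-plus-LLL argument you sketch cannot work, for a reason the paper explicitly flags as the central obstacle.

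The problem is that concentration cannot push a random variable below its mean. After Phase 2, a machine $h$ satisfies $z_h + \sum_k w_{k,h}(1-z_h) + \sum_k w_{h,k}z_k \leq 1+\theta$ with $\theta = \Theta(\sqrt{\log q_0/q_0})$. If you sample $\sigma(j)$ proportionally to $z$, then whatever precise meaning you give to ``post-rounding light load on $h$,'' its expectation is roughly $1+\theta - z_h$. For a machine with $z_h < \delta_0$ --- the paper's ``red'' machines --- this mean is $\approx 1+\theta$, which \emph{already exceeds} your target $1-\gamma$. No amount of bounded-difference or Chernoff concentration yields $\Pr[B_h] \le \exp(-\Omega(\polylog q_0))$ for such $h$; the event is in fact typical. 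Your remark ``Choosing $\delta_0$ small enough that $\gamma > 2\delta_0 + \eta$'' does not help because shrinking $\delta_0$ leaves the mean, and hence the bad-event probability, essentially unchanged. There is also a more basic modeling issue: ``the post-rounding light load on $i$'' is not a well-defined scalar, since the light loads are still to be routed fractionally after $\sigma$ is fixed; the correct object is the Hall-type condition $\lvert S\cap f(J_\sfH)\rvert + w_{S,S} \le (2-\delta)\lvert S\rvert$ for all $S$ (\Thm{alpha-good-equivalent-to-no-bad-set}), i.e.\ one must rule out \emph{bad sets}, not just per-machine overloads.

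What actually makes Phase 3 go through in the paper is a considerably more structured argument. Machines are classified red/green by $\phi_h + z_h$ vs.\ $\delta_0$; the Hall condition is rewritten via a deficiency $\phi(S)$ and boundary $\bnd(S)$ decomposition, and \Lem{reduce-to-star} reduces to checking sets $T\subseteq X\cap R$. Edges of $G_w$ are split into dense/sparse, one incoming dense edge per in-dense red machine is colored red, red cycles are eliminated by a Lenstra--Shmoys--Tardos preprocessing step, and only then is LLL applied --- with four event families $\cA,\cB,\cC,\cD$ rather than a single per-machine event, and with a Galton--Watson argument (\Lem{galton-watson}) to bound dependence among the $\cB_T$ events over connected red-edge sets. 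The per-machine bad event you propose cannot stand in for $\cB_T$: the thing that saves a red root of a red-forest component is not that its own load is small, but that its component is small and has a dense boundary edge leaving $T$. Your proposal has none of the boundary accounting, the red/green split, the dense/sparse split, or the cycle elimination, all of which are needed to keep both the LLL degree small \emph{and} the bad-event probabilities small simultaneously --- which, as you yourself note, is the crux, but which your sketch does not supply a mechanism for.
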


We conclude this section by showing that the preceding theorem suffices to establish our main result.

\begin{proofof}{\bf Proof of Theorem \ref{thm:main}}
Set $\delta^* = \delta_0/2$, where $\delta_0$ is the constant specified in \Thm{main-2}.
Fix an instance $\cI$ and the corresponding $\OPT_f$.
If $\OPT_f > 1 + \rho\delta_0$, then the classic result of Lenstra et al.~\cite{LST90} returns a schedule whose makespan is at most $\OPT_f ~+ 1 \leq \OPT_f\left(1+ \frac{1}{1+\rho\delta_0}\right) \leq \left(2-\delta^*\right)\OPT_f$.
If $\OPT_f \leq 1+ \rho\delta_0$, then \Lem{reduce-to-feasible-instances} can be used to get an instance $\cI'$ for which $\LP(\rho,\delta_0)$ is feasible.
\Thm{main-2} returns a schedule for $\cI'$ with makespan at most $(2 -2\delta_0)$, which when fed to \Lem{reduce-to-feasible-instances} gives
a schedule for $\cI$ of makesan at most $(2 - \delta_0) \leq (2-\delta^*)\OPT_f$ since $\OPT_f\geq 1$.
This proves \Thm{main}.
\end{proofof}

\section{Canonical Instances and \texorpdfstring{$\delta$}{delta}-good Assignments}
\label{sec:canonical_definition}

In this section we introduce the notion of canonical instances and formalize the notion of a $\delta$-good assignment of heavy jobs for these instances.  %Throughout the paper, $q_0$ is a large enough {\em fixed} constant which drives the analysis and $\eps > 0$ is small enough. 

%Later, in \Sec{app-canonical}, we `reduce' an instance with a feasible solution to $\LP(0.6,\delta_0)$ to a canonical instance whose $\delta$-good assignments correspond to schedules of makespan $(2-\delta)$. 
%describe a polynomial time procedure that given an instance $\cI$ of the $(1,\eps)$-restricted assignment problem with a feasible solution to $\LP(0.6)$ returns a canonical instance $\cI'$ such that any $\delta$-good assignment for $\cI'$ implies a schedule of makespan $(2 - \delta + 2\eps)$ for $\cI$, for any $\delta \in (0,1)$.   

In a canonical instance, heavy jobs have size $p_j = 1$. Light jobs can be scheduled fractionally and any light job can be assigned to at most two machines. Thus each light job is of type-$(h, k)$ for some $h, k \in M$; it can only be assigned to $h$ or $k$. It is possible that $h \!=\! k$; when $h \!\neq\! k$, $(h, k)$ and $(k, h)$ are two different job types. Subsequently, it will be clear that these types are differentiated and defined by how the LP assigns the jobs; for now the reader may think of $w_{h,k}$ as the load of light jobs which `belong to $k$ but can be assigned to $h$ if $k$ gets a heavy job'.
  Given $h, k$, we will merge the light jobs of type-$(h, k)$ into a single job of total size equal to the sum of the light jobs. We call this the light load of type-$(h, k)$.
Henceforth, we use ``light load'' instead of ``light jobs'' in a canonical instance. 
  %Then, we define a canonical instance as follows.

\begin{definition}
\label{def:ci-0}
A {\bf canonical instance} is defined by a triple $(\set{M_j: J \in J_\sfH}, w, z)$, where
\begin{properties}
\item for every heavy job $j \in J_\sfH$, $M_j \subseteq M$ is the set of machines that $j$ can be assigned to; for any pair heavy jobs $j \neq j'$, we have $M_j \cap M_{j'} = \emptyset$; \label{property:ci-M}
\item $w \!\in\! \R_{\geq 0}^{M \times M}$ a matrix, where  $w_{h,k}$ is the light load of type-$(h,k)$. If $z_k = 0$ and $h \neq k$, then $w_{h,k} \!= \!0$; if $z_h > 0$ then $w_{h, h} = 0$;  \label{property:ci-w}
\item $z:M \mapsto [0, 0.4]$ is a function on $M$ where $z_i = 0$ if and only if $i \notin \bigcup_{j \in J_\sfH}M_j$. \label{property:ci-z}
 \end{properties}
\end{definition}
\noindent
There is an intrinsic fractional solution defined by the function $z$. If $i \in M_j$ for some $j \in J_\sfH$, then $z_i$ is the fraction of the heavy job $j$ assigned to machine $i$. A heavy job may not be fully assigned, but we will ensure that a decent portion of it is.  If $h\!\neq\! k$, $(1\!-\!z_k)$ fraction the $w_{h,k}$ light load of type-$(h, k)$ is assigned to $k$, and the remaining $z_k$ fraction is assigned to $h$.  The $w_{i,i}$ light load of type-$(i,i)$ is fully assigned to $i$.  
Given a matrix $w \in \R^{M \times M}$, we use the notation $w_{A,B}$ for subsets $A,B\subseteq M$ to denote the sum $\sum_{h\in A,k\in B} w_{h,k}$.   

\begin{definition}
The directed graph $G_w = (M, \{(h,k):h \neq k, w_{h,k} > 0\})$ formed by the support of $w$, with self-loops removed, is called the {\bf light load graph}.
\end{definition}

\begin{definition}
Given a canonical instance $\cI = \left(\set{M_j : j \in {J_\sfH}}, w, z\right)$, we say that $\cI$ is a {\bf $(p, q, \theta)$-canonical instance} for some $p \geq 1, q \geq 1$ and $\theta  \in [0, 0.2)$, if it satisfies the following properties (in addition to Properties~\ref{property:ci-M} to \ref{property:ci-z}):
\begin{properties}
\item for each $i \in M$, either $z_i = 0$, or $z_i \geq 1/p$; 
\label{property:canonical-instance-z-i-large}
\item for every $h, k \in M$, either $w_{h,k} = 0$ or $w_{h, k} \geq 1/q$;
\label{property:canonical-instance-light-load-large}
\item $\sum_{i\in M_j}z_i \geq 0.2 - \theta$, for every $j \in {J_\sfH}$;
\label{property:canonical-instance-big-job-covered}
\item $z_h + \sum_{k\in M}w_{k, h}(1-z_h) + \sum_{k \in M}w_{h,k}z_{k} \leq 1 + \theta$, for every machine $h \in M$.
\label{property:canonical-instance-load-small}
\end{properties}
\end{definition}

Property~\ref{property:canonical-instance-z-i-large} says that none of the heavy job assignments is too small.  Property~\ref{property:canonical-instance-light-load-large} says that any positive load of some type is large. Property~\ref{property:canonical-instance-big-job-covered} says that  a `decent' fraction of each heavy job $j$ is assigned. Property~\ref{property:canonical-instance-load-small} says that the total load assigned to a machine $h \in M$ in the intrinsic fractional solution is bounded.
Our goal is to find a valid assignment $f:J_\sfH \mapsto M$ of heavy jobs to machines which leaves ``enough room'' for the light loads. We say $f$ is {\em valid} if $f(j) \in M_j$ for every $j \in J_\sfH$. This motivates the following definition.

\begin{definition}[$\delta$-good Assignment]
Given a $(p, q, \theta)$-canonical instance and a number $\delta \in (0, 1)$,  a valid assignment $f:J_\sfH\to M$ for a canonical instance is $\delta$-good if all the light loads can be \emph{fractionally} assigned so that each machine has total load at most $2-\delta$.
\end{definition}
\noindent
Define $f({J_\sfH}) = \set{f(j):j\in {J_\sfH}}$. The following theorem (basically Hall's condition)
is a characterization of good assignments. 

\begin{restatable}{theorem}{density}
\label{thm:alpha-good-equivalent-to-no-bad-set}

For a canonical instance, an assignment $f$ of heavy jobs is a $\delta$-good assignment if and only if for every subset $S \subseteq M$,
\begin{equation}
\cardinal{S\cap f({J_\sfH})} + w_{S,S} \leq (2-\delta)|S|.  \label{equation:S-not-bad-for-f}
\end{equation}
\end{restatable}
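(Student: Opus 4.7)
The statement is a Hall-type characterization of which heavy-job assignments leave enough residual room for the light loads, so I would prove the two directions separately, using a max-flow/min-cut argument for the harder direction.

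\emph{Necessity} is a double-counting argument. Suppose $f$ is $\delta$-good and fix any $S \subseteq M$. By definition there is a fractional assignment of each light load $w_{h,k}$ to machines $h$ and $k$ with every machine's total load at most $2-\delta$. The key observation is that any light load of type $(h,k)$ with $\{h,k\} \subseteq S$ (including the self-type $h=k=i\in S$) must stay on $S$ since $h,k$ are its only candidate machines, contributing its full weight $w_{h,k}$ to the load on $S$. Moreover, Property~(A1) of Definition~\ref{def:ci-0} forces distinct heavy jobs onto distinct machines, so exactly $|S\cap f({J_\sfH})|$ heavy jobs are placed on $S$. Summing machine loads over $S$ yields
\[
(2-\delta)|S| \;\geq\; \sum_{i\in S}\text{load}(i) \;\geq\; |S\cap f({J_\sfH})| + w_{S,S},
\]
which is exactly \eqref{equation:S-not-bad-for-f}.

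\emph{Sufficiency.} I would encode the desired fractional light-load assignment as an $s$-$t$ flow. Build a source $s$, a sink $t$, a node $u_{h,k}$ for each type with $w_{h,k}>0$ with edge $s \to u_{h,k}$ of capacity $w_{h,k}$, a node $v_i$ for each machine with edge $v_i \to t$ of capacity $(2-\delta) - \mathbf{1}[i \in f({J_\sfH})]$ (nonnegative since $\delta < 1$), and infinite-capacity edges from $u_{h,k}$ to $v_h$ and to $v_k$ (a single such edge when $h=k$). By construction, a valid fractional assignment of the light loads making every machine's total load at most $2-\delta$ corresponds precisely to an $s$-$t$ flow of value $W := \sum_{h,k} w_{h,k}$ that saturates all source edges.

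By max-flow/min-cut, such a flow exists iff every finite $s$-$t$ cut has capacity at least $W$. Parameterize an arbitrary finite cut by $S := \{i : v_i \text{ lies on the source side}\}$. The infinite middle edges force every $u_{h,k}$ with $\{h,k\}\not\subseteq S$ to the sink side, so its source edge must be cut; conversely, the edges $v_i \to t$ for $i \in S$ contribute $(2-\delta) - \mathbf{1}[i\in f({J_\sfH})]$ each. Hence the cut capacity is at least
\[
\sum_{(h,k):\{h,k\}\not\subseteq S} w_{h,k} \;+\; \sum_{i\in S}\bigl((2-\delta) - \mathbf{1}[i\in f({J_\sfH})]\bigr) \;=\; W - w_{S,S} + (2-\delta)|S| - |S\cap f({J_\sfH})|,
\]
which is $\geq W$ precisely by the assumed inequality \eqref{equation:S-not-bad-for-f}. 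The only delicate point is the bookkeeping for directed types ($(h,k) \neq (k,h)$) and the self-loop case $h=k$; both are handled transparently by letting $u_{h,k}$ connect to both $v_h$ and $v_k$ (identified when $h=k$), after which the equivalence is a routine min-cut unfolding.
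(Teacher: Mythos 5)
Your proof is correct and uses essentially the same idea as the paper: the paper also models the light-load assignment as a flow problem on a bipartite graph with a node per positive type $(h,k)$ connected to $h$ and $k$, and invokes Hall's theorem (equivalent to your max-flow/min-cut argument) to reduce feasibility to the inequality \eqref{equation:S-not-bad-for-f}. The only difference is presentational — you separate necessity into an explicit double-counting step while the paper handles both directions at once through the Hall condition.
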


\begin{proof}
We define an instance of the single-source single-sink network flow problem as follows.  Construct a directed bipartite graph $H = (A, M, E_H)$, where edges are directed from $A$ to $M$. For every $h, k \in M$ such that $w_{h, k} > 0$, we have a vertex $a_{h,k} \in A$ that is connected to $h$ and $k$. All edges in $E_H$ have infinite capacity.  Now an assignment $f$ is $\delta$-good if and only if we can send flow from $A$ to $M$ in $H$ such that: (1) Each vertex $a_{h,k} \in A$ sends exactly $w_{h,k}$ flow, and (2)
Each machine $i \in M$ receives at most $1_{ i \notin f({J_\sfH})} + 1-\delta$ flow, where $1_{i \notin f({J_\sfH})}$ is 1 if $i \notin f({J_\sfH})$ and $0$ otherwise.
By Hall's theorem, there is a feasible flow if and only if the following holds:
%\begin{equation*}
$\sum_{a_{h, k} \in A'}w_{h,k} \leq \sum_{h \in M(A')} \left[1_{ h \notin f({J_\sfH})} + 1-\delta\right], \forall A' \subseteq A$,
%\end{equation*}
where $M(A')$ is the set of vertices in $M$ adjacent to $A'$.  It is easy to see that for every $S \subseteq M$, it suffices to consider the maximal $A'$ with $M(A') = S$. For this $A'$, we have $\sum_{a_{h,k} \in A'}w_{h,k} = w_{S,S}$. Thus, the condition can be rewritten as 
%\begin{equation*}
$w_{S,S} \leq (2 - \delta)|S| - \cardinal{S\cap f({J_\sfH})}, \forall S \subseteq M$.
%\end{equation*}
This finishes the proof.
\end{proof}

\subsection{Roadmap of the Proof}

We are now armed to precisely state the ingredients which make up the proof of \Thm{main-2}. 
In \Sec{app-canonical}, we show how to reduce any instance to a canonical instance.  
The precise theorem that we will prove is the following, where $m=|M|$ is the number of machines. 

\begin{restatable}{theorem}{redtocanon}
\label{thm:reducing-to-canonical-instances}\label{thm:1}
Let $\delta > 0, \delta' \in (0, 1)$ and $\rho = 0.6$. Given an instance $\cI$ of the $(1,\epsilon)$-restricted assignment problem with a feasible solution to $\LP(\rho,\delta)$, there is a polynomial time procedure to obtain an $(\frac{m}{\rho \delta},1/\eps,\rho\delta)$-canonical instance $\cI'$ such that any $\delta'$-good assignment for $\cI'$ implies a schedule of makespan at most $(2-\delta'+2\epsilon)$ for $\cI$.
\end{restatable}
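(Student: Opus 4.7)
The plan is to construct $\cI'$ from the feasible LP solution $(x^*, z^*)$ of $\LP(\rho, \delta)$ through a sequence of local modifications, maintaining a partial integral schedule $\sigma_0$ for the original instance that absorbs jobs resolved along the way. At the end, a $\delta'$-good assignment of the heavy jobs in $\cI'$ composed with $\sigma_0$ will produce a schedule of makespan at most $2 - \delta' + 2\eps$ for $\cI$.

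The heavy-side construction proceeds in two sub-steps. First, I would reduce so that all heavy fractional assignments satisfy $x^*_{i,j} \leq 0.4$: whenever some $x^*_{i,j} > 0.4$ for heavy $j$, integrally assign $j$ to $i$, append it to $\sigma_0$, remove both from the instance, and re-route the LP on the remainder. This is justified because the LP constraint~\eqref{CLP:compact} forces $x^*_{i,j'} \leq 0.6$ for any light $j'$ on such $i$, so the removed light support is limited and can be absorbed by redirecting mass into other machines. The resulting LP on the remaining jobs satisfies $x^*_{i,j} \leq 0.4$ for every heavy $(i,j)$, guaranteeing $z_i \leq 0.4$ in the canonical (Property~\ref{property:ci-z}). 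Second, to achieve the disjointness of Property~\ref{property:ci-M} and the coverage bound of Property~\ref{property:canonical-instance-big-job-covered} ($\sum_{i \in M_j} z_i \geq 0.2 - \rho\delta$), I would partition the machines with positive heavy-fractional support via a flow/matching step: each heavy job $j$ is assigned a disjoint subset $M_j$ of its LP support with $\sum_{i \in M_j} x^*_{i,j} \geq 0.2 - \rho\delta$, setting $z_i := x^*_{i,j}$ for $i \in M_j$; jobs that cannot accumulate this much mass are themselves integrally placed into $\sigma_0$. Feasibility of this partition for most heavy jobs is the technical core and follows from $\sum_i x^*_{i,j} = 1$ together with $x^*_{i,j} \leq 0.4$ via a flow-based simultaneous allocation or a careful charging argument.

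The light-side construction converts $x^*$ restricted to $J_\sfL$ into the matrix $w$. For each light job $j$, I would split its LP assignment into chunks via a flow decomposition, each chunk supported on at most two machines; each chunk contributes to the appropriate $w_{h,k}$ entry, with $h$ chosen so that the canonical's home-and-backup semantics (fraction $1-z_k$ at home $k$, fraction $z_k$ at backup $h$) balances the intrinsic load. Applying the LP bound~\eqref{CLP:machine-small-job} machine-by-machine then yields Property~\ref{property:canonical-instance-load-small} with $\theta = \rho\delta$. For cleanup, zero out any $z_i < \rho\delta/m$ (total $z$-mass lost is at most $\rho\delta$, absorbed into the slack) and any $w_{h,k} < \eps$ (automatic, since a single light job already contributes $\eps$), establishing Properties~\ref{property:canonical-instance-z-i-large} and~\ref{property:canonical-instance-light-load-large}.

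Finally, given a $\delta'$-good assignment $f$ for $\cI'$, \Thm{alpha-good-equivalent-to-no-bad-set} yields a fractional light-load assignment of max load $2 - \delta'$; rounding to integral on the at-most-two-machine supports costs at most $\eps$ per machine, and combining with $\sigma_0$ adds at most another $\eps$ for the truncated $z$-mass, giving the claimed $2 - \delta' + 2\eps$. The main obstacle is the interleaved heavy-side stage: each integral assignment into $\sigma_0$ consumes machines that could have contributed $z$-mass to other heavy jobs' coverage, so the order of integral reductions and the available $\rho\delta$ slack from~\eqref{CLP:machine-small-job} must be managed carefully; moreover, the backup-machine choices on the light side must be coordinated with the heavy-side partition to keep Property~\ref{property:canonical-instance-load-small} satisfied.
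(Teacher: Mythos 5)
Your proposal identifies the right high-level targets (disjoint heavy-job supports with decent $z$-mass, a two-machine light-load matrix $w$, cleanup of small $z_i$), but it leaves exactly the crux unresolved and contains one concrete error.

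The paper does not begin by integrally fixing heavy jobs with $x_{i,j}>0.4$. Instead it rotates cycles so that $H[M\cup J_\sfH]$ becomes a forest, breaks each tree along edges of weight $\le 1/2$, and observes that each resulting sub-tree $\tau'$ is rooted at a heavy job, with machines having at most one child, leaves all being machines, and the leaf set $L$ satisfying $\sum_{i\in L} z_i \ge 1/2$ (Lemma~\ref{lem:sum-z-is-large}); a new heavy job with $M_j = \{i\}$ is created for each non-leaf machine, and one new heavy job with $M_j = L$, and Lemma~\ref{lem:choose-any-leaf-machine} guarantees that any valid assignment of the new jobs induces one of the originals. Only then is $z$ scaled by $1-\rho=0.4$, giving $z_i\le 0.4$ and $\sum_{i\in M_j}z_i\ge 0.2$ simultaneously. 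Your plan instead asserts a disjoint partition $\{M_j\}$ of LP supports with $\sum_{i\in M_j}x^*_{i,j}\ge 0.2-\rho\delta$ ``via a flow-based simultaneous allocation or careful charging argument,'' but that is precisely the step the tree decomposition exists to establish; without rotating to a forest first, overlapping supports can make no such disjoint partition exist with any useful coverage bound, and your Step~1 of pre-fixing high-$x$ heavy jobs removes machines that other heavy jobs rely on for coverage, which you do not account for. In addition, the claimed bound $x^*_{i,j'}\le 0.6$ for light $j'$ whenever some heavy $x^*_{i,j}>0.4$ does not follow from \eqref{CLP:compact}: with $\rho=0.6$ and $z_i>0.4$ you only get $x^*_{i,j'}\le 1-0.4 z_i<0.84$, so the ``absorb by redirecting mass'' step is both weaker than stated and unjustified as an LP-feasibility-preserving move.

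The light side has a subtler mismatch. The paper defines $w_{h,k}$ via the structure of \emph{tight} edges ($x_{i,j}+z_i=1$), which is crucial: each light job of type-$(h,k)$ then has $x_{k,j}=1-z_k$ and $x_{h,j}=z_k$ exactly, so that the intrinsic fractional solution of the canonical instance reproduces the LP's load on each machine and Property~\ref{property:canonical-instance-load-small} holds with $\theta=\rho\delta$. A generic flow decomposition into two-machine chunks does not yield these specific split ratios, so Property~\ref{property:canonical-instance-load-small} would not come out of \eqref{CLP:machine-small-job} in the way you claim; you would need to re-derive the $(1-z_k,z_k)$ structure, which again is what the tight/non-tight edge argument provides. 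Your account of the $2\eps$ slack and the final scheduling is fine, but it rests on these two unestablished structural facts.
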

\noindent
In \Sec{reducing-p-and-q}, we reduce a canonical instance to one with ``small'' $p$ and $q$. More precisely, we prove the following.

\begin{restatable}{theorem}{redpq}
\label{thm:reducing-p-and-q}\label{thm:2}
For some large enough constant $q_0$ the following is true. Given a $(p,q,\theta)$-canonical instance $\cI$, in polynomial time we can obtain a $(q_0,q_0,\theta + 16\sqrt{\log q_0/q_0})$-canonical instance $\cI'$ such that any $\delta$-good assignment for $\cI'$ is a $(\delta - 16\sqrt{\log q_0/q_0})$-good assignment for $\cI$, for every $\delta  \in (16\sqrt{\log q_0/q_0}, 1)$.
\end{restatable}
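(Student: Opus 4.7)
My plan is a two-phase randomized coarsening, made algorithmic by the constructive asymmetric Lov\'asz Local Lemma of Moser--Tardos and its extension by Haeupler--Saha--Srinivasan. In the first phase I coarsen the heavy-job fractions: for each heavy job $j$, I greedily partition the machines of $M_j$ with $z_i<1/q_0$ into blocks of $z$-mass in $[1/q_0,2/q_0]$ (merging any final residual block into an adjacent one), and within each block $B$ I sample a single representative $i^*_B$ with probability $z_{i^*_B}/\sum_{i\in B}z_i$, setting $z^{\text{new}}_{i^*_B}:=\sum_{i\in B}z_i$ and zeroing the other machines in $B$; machines with $z_i\ge 1/q_0$ are untouched. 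The per-block $z$-mass is preserved deterministically, so property C3 is preserved exactly. In the second phase I analogously coarsen the small light loads: group $\{w_{h,k}:w_{h,k}<1/q_0\}$ into blocks of total mass in $[1/q_0,2/q_0]$ and sample one pair per block with probability proportional to its mass, setting its new light load to the block mass. After both phases, every nonzero $z^{\text{new}}_i$ and $w^{\text{new}}_{h,k}$ is at least $1/q_0$, yielding properties C1 and C2 with $p=q=q_0$; properties A1--A3 are preserved by construction since $M^{\text{new}}_j\subseteq M_j$.

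For each machine $h$, let $E_h$ be the bad event that the new intrinsic load at $h$ exceeds the old intrinsic load by more than $16\sqrt{\log q_0/q_0}$. Because $\E[z^{\text{new}}_i]=z^{\text{old}}_i$ and $\E[w^{\text{new}}_{h,k}]=w^{\text{old}}_{h,k}$ by construction, the load at $h$ is a sum of independent random variables each bounded by $O(1/q_0)$ with expectation equal to the old load, and the Chernoff--Hoeffding inequality gives $\Pr[E_h]\le q_0^{-\Omega(\log q_0)}$. Two bad events $E_h,E_{h'}$ are dependent only if $h,h'$ either lie in a common $M_j$ or lie on edges grouped into a common coarsening block, which keeps the dependency degree polynomially bounded in terms of the block structure; the asymmetric LLL (constructively via Moser--Tardos) then finds, in expected polynomial time, an instantiation of the sampling that avoids every $E_h$. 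This delivers property C4 with $\theta'\le\theta+16\sqrt{\log q_0/q_0}$.

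The main obstacle, and the step I expect to require the most care, is the transfer of $\delta$-goodness: by \Thm{alpha-good-equivalent-to-no-bad-set} this reduces to showing that for every subset $S\subseteq M$, $w^{\text{old}}_{S,S}-w^{\text{new}}_{S,S}\le 16\sqrt{\log q_0/q_0}\cdot|S|$. Although $\E[w^{\text{new}}_{S,S}]=w^{\text{old}}_{S,S}$ by construction, a naive union bound over subsets is insufficient since there are $2^{|M|}$ of them. My plan is to fold a second family of bad events $E^*_S$ into the same LLL invocation, where $E^*_S$ is the event that $w^{\text{new}}_{S,S}$ falls below its mean by more than $16\sqrt{\log q_0/q_0}\cdot|S|$. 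Since the random contribution to $w^{\text{new}}_{S,S}$ has variance $O(|S|/q_0)$ with per-variable bound $O(1/q_0)$, Chernoff gives $\Pr[E^*_S]\le q_0^{-\Omega(|S|\log q_0)}$, and the dependency neighborhood of $E^*_S$ consists only of events touching coarsening blocks that intersect $S\times S$. The exponential decay of $\Pr[E^*_S]$ in $|S|$ is designed to offset the $\binom{m}{|S|}$ choices of equal-sized neighbors in the asymmetric LLL condition; verifying this condition using the structural sparsity of a canonical instance is the heart of the proof.
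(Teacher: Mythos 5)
Your one-shot coarsening has a genuine gap in the witness-preservation step, and you have correctly flagged that step as ``the heart of the proof'' --- but the issue is not merely that the LLL bookkeeping remains to be done; it is that a one-shot rounding from $q$ to $q_0$ cannot satisfy the asymmetric LLL condition, because the dependency degree and the bad-event probabilities are governed by incomparable parameters. Concretely: after your coarsening, $\Pr[E^*_S]\le q_0^{-\Omega(|S|)}$ --- note this decays only in $q_0$, not in the original (possibly enormous) $q$. Meanwhile the events $E^*_{S'}$ dependent on $E^*_S$ include, at size $t$, every set $S'$ of size $t$ that is close to $S$ in the light-load graph $G_w$. Even if one first reduces to \emph{connected} witnesses (via \Clm{witness-implies-connected-witness}, which your proposal does not invoke and is needed --- without it the count is $\binom{m}{t}$, unbounded in $q_0$), the number of connected sets of size $t$ through a fixed vertex is of order $(\deg G_w)^{t}$, and $\deg G_w = \Theta(pq)$ (\Clm{degree-is-small}), which is $\poly(q)$, not $\poly(q_0)$. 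The sum $\sum_t (\poly(q))^t \cdot q_0^{-\Omega(t)}$ diverges once $q\gg q_0$, so no choice of $\x(\cdot)$ makes the LLL condition hold.

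This is exactly why the paper (following Feige) proceeds by \emph{iterated halving}: \Lem{reduce-q} and \Lem{reduce-p} alternately replace $q$ by $q/2$ (resp.\ $p$ by $p/2$), and at each step the bad-event probabilities are $q^{-8|S|}$ while the degree of $G_w$ is $O(q^2)$ (using $p\le q$ when reducing $q$, which is why the two lemmas alternate). So probability and dependency decay at the \emph{same} scale, the LLL condition goes through, and the errors $8\sqrt{\log q/q}$ sum geometrically to $16\sqrt{\log q_0/q_0}$. Your scheme also has secondary underspecification issues --- the block-grouping of light loads is not anchored to machines, so the variance of a machine's load and the dependency graph are not controlled by the grouping as stated, whereas the paper simply rounds each $w_{h,k}$ independently and catches mass-loss with the bad events $\cB_j$ in \Lem{reduce-p} --- but the missing iteration and the missing reduction to connected witnesses are the load-bearing gaps.
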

\noindent
Finally, in \Sec{fixq}, we show how given a $(q_0,q_0,\theta)$-canonical instance we can `round' it to a $\delta$-good instance where $\delta$ is inverse {\em polylogarithmic} in the $q$ parameter. Observe, from definition of canonical instances, $(1/q-\theta)$-good assignments are trivial.

\begin{restatable}{theorem}{smallpq}
\label{thm:small-q-and-m}
For some large enough constant $C$, every $q_0 \geq 100$ and every $\theta \in [0, \log^{-5}q_0/4C)$, the following is true. Given a $(q_0,q_0,\theta)$-canonical instance $\cI$, there is a polynomial time procedure  to obtain a $(\log^{-5}q_0/C - 4\theta)$-good assignment for $\cI$.
\end{restatable}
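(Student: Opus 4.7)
The plan is to produce a $\delta$-good assignment of heavy jobs by randomized rounding of the fractional solution $z$, together with the constructive asymmetric Lov\'asz Local Lemma of Moser--Tardos~\cite{MT10} and Haeupler--Saha--Srinivasan~\cite{HSS11}. By \Thm{alpha-good-equivalent-to-no-bad-set}, we need a valid integral $f : J_\sfH \to M$ satisfying $|S \cap f(J_\sfH)| + w_{S,S} \leq (2-\delta)|S|$ for every $S \subseteq M$. To pass from this family of set-based conditions to a local one, I would fix the following deterministic placement of the light loads once $f$ is chosen: each off-diagonal $w_{h,k}$ is placed entirely on $h$ if $k \in f(J_\sfH)$ and entirely on $k$ otherwise, while self-loops $w_{i,i}$ stay on $i$. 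Under this rule the total load ending up on any $S$ is at least $|S \cap f(J_\sfH)| + w_{S,S}$, so once the per-machine load is at most $2-\delta$ everywhere the required set inequality holds. It suffices, therefore, to find $f$ that avoids every bad event $B_i :=$ ``machine $i$ receives total load more than $2-\delta$''.

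I would sample $f$ by letting $f(j) = i \in M_j$ independently for each heavy job $j$ with probability $z_i / \bar z_j$, where $\bar z_j := \sum_{i \in M_j} z_i \in [0.2-\theta,\,1]$ by Properties \ref{property:canonical-instance-z-i-large} and \ref{property:canonical-instance-big-job-covered}. The disjointness of the $M_j$'s (Property \ref{property:ci-M}) makes the random choices $\{f(j)\}$ mutually independent and ensures each machine is involved in at most one of them; consequently $B_i$ depends only on the heavy-job variables indexed by machines in the closed $G_w$-neighborhood of $i$. The canonical-instance bounds $z_i \geq 1/q_0$ and $w_{h,k} \geq 1/q_0$ (when positive), together with Property \ref{property:canonical-instance-load-small}, bound $|M_j|$ by $q_0$ and the degrees of $G_w$ by $\poly(q_0)$, so the LLL dependency degree of $\{B_i\}$ is $d = \poly(q_0)$.

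The probability $\Pr[B_i]$ would then be controlled via a Chernoff/Hoeffding bound on the weighted sum $L_i$ of independent $\{0,1\}$ indicators that constitute the light load at $i$. The hard part---and the reason a $\polylog$ loss in $\delta$ is unavoidable---is controlling $\E[L_i \mid i \in f(J_\sfH)]$: the normalization factor $1/\bar z_j$ can inflate individual hit probabilities by as much as a factor of $5$, pushing the expected light load to a constant rather than strictly below $1$. Getting this expectation down requires a final structural simplification of the canonical instance before sampling (for instance, trimming or rerouting light load whose endpoints have small $\bar z$), which introduces an additive cost of $O(\log^{-a}q_0)$ in $\theta$ for some small constant $a$; the hypothesis $\theta < \log^{-5}q_0/(4C)$ provides precisely enough slack to absorb this loss. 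Once the conditional expectation is uniformly below $1 - \Omega(\log^{-a}q_0)$, the tail bound gives $\Pr[B_i] \leq \exp(-\Omega(\log^{3} q_0)) \leq 1/\poly(q_0)$ for $\delta = \log^{-5}q_0/C - 4\theta$ with $C$ sufficiently large, so $\Pr[B_i]\cdot d < 1/e$ and the asymmetric LLL applies. The constructive variants of~\cite{MT10,HSS11} then produce the desired assignment in polynomial time.
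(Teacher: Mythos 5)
Your setup gets as far as the paper does in spirit—randomized rounding proportional to $z_i$, followed by LLL—but the single family of bad events you propose, namely ``machine $i$ receives load more than $2-\delta$,'' cannot be made to work, and the hand-waved ``structural simplification'' does not fix this. The obstacle is not merely a constant inflation factor; it is fatal. Under your deterministic light-load rule, a machine $h$ that gets a heavy job receives $1+\sum_{k\in X} w_{h,k}$, and $\E\bigl[\sum_{k\in X}w_{h,k}\bigr]\leq 11\sum_k w_{h,k}z_k\leq 11$, where $11$ comes from the normalization $1/\bar z_j$ together with the cycle-elimination loss. There is no a priori upper bound on $w_{h,M}$ below $\Theta(q_0)$, and instances exist where this expectation is indeed $\Theta(1)\gg 1$. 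In that regime $\Pr[B_h]$ is close to $1$ conditioned on $h\in X$, so the events are not rare and no LLL (or trimming preserving this rounding scheme) can rescue the argument. Put differently, you are requiring a pointwise load bound that is strictly stronger than Hall's condition, and the fractional solution does not support it.

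The paper's proof never aims for per-machine load below $2-\delta$. Instead it rewrites Hall's condition via the deficiency $\phi_h$ and boundary $\bnd(S)$ as $|S\setminus X|+\phi(S)+z(S)+\bnd(S)\geq\delta|S|$, which localizes to small sets $T\subseteq X\cap R$ (Lemma~\ref{lem:reduce-to-star}, requiring only that $\sum_{k\in X\cap R}w_{h,k}$ stay \emph{polylogarithmic}, not below $1$). It then classifies edges as dense/sparse, builds a forest of red edges among red machines, eliminates red cycles via a generalized-assignment step, and charges the required boundary $\delta|T|$ to tree roots. Controlling the red-tree component size $|T|$ requires a separate family of events $\cB_T$ whose dependence structure is handled by a Galton--Watson branching-process bound (Lemma~\ref{lem:galton-watson}), alongside the auxiliary events $\cC_h,\cD_h$. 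None of these ideas appear in your proposal, and they are precisely the content that the placeholder phrase ``trimming or rerouting light load whose endpoints have small $\bar z$'' would need to supply. As written, the proof has a genuine gap at its central step.
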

\noindent
Assuming the above theorems, the proof of \Thm{main-2} follows easily.
\begin{proofof}{\bf Proof of Theorem \ref{thm:main-2}} Let $C$ be as in Theorem~\ref{thm:small-q-and-m}, and choose $q_0$  such that $400\sqrt{\log q_0/q_0} \leq   \log^{-5}q_0/C$. Let $\delta_0 := \log^{-5}q_0/6C$. Given a feasible solution to $\LP(\rho,\delta_0)$, we convert it to a $(\frac{m}{\rho\delta_0},1/\epsilon, \rho\delta_0)$-canonical instance $\cI$ using \Thm{reducing-to-canonical-instances}. Then using \Thm{reducing-p-and-q}, we obtain a $(q_0,q_0,\rho\delta_0 + 16\sqrt{\log q_0/q_0})$-canonical instance $\cI'$. Given $\cI'$, via \Thm{small-q-and-m}, we obtain a $\delta$-good assignment with $\delta =  6\delta_0  - 4\rho\delta_0 - 64\sqrt{\log q_0/q_0}$. This in turn implies a  $(\delta - 16\sqrt{\log q_0/q_0} = 3.6\delta_0-80\sqrt{\log q_0/q_0})$-good assignment for $\cI$. By choice of parameters, this is a $(2\delta_0 + 2\eps)$-good assignment when $\eps \leq 0.2\delta_0$.
By \Thm{1}, this implies a schedule of makespan $(2-2\delta_0)$ which proves \Thm{main-2}.
\end{proofof}
\noindent
The rest of the paper proves the above theorems in \Sec{app-canonical}, \Sec{reducing-p-and-q}, and \Sec{fixq} respectively which can be read in any order.

\section{Reduction to Canonical Instances}

\label{sec:app-canonical}
\label{app:app-canonical}
This section is devoted to proving \Thm{1}.
\redtocanon*

Let $x$ be any feasible solution for $\LP(\rho = 0.6,\delta)$.  The solution $x$ defines the following weighted bipartite graph $H = (M, J, E_H, x)$: if $x_{i,j} > 0$ for some $i \in M, j \in J$,  there is an edge $(i, j) \in E_H$ of weight $x_{i,j}$. We will create the desired canonical instance $\cI'$ by performing the following sequence of transformation steps. 

\subsection{Processing Heavy Jobs}
Without loss of generality, we can assume $H[M \cup {J_\sfH}]$ is a forest. Indeed, if there is an even cycle in the sub-graph, we can \emph{rotate} the cycle as follows. Color the edges of the cycle alternately as red and black. Uniformly decrease the $x$ values of red edges and increase $x$ values of the black edges. Observe that Constraints~\eqref{CLP:job-covered} and \eqref{CLP:machine-big-job} remain satisfied, and Constraints \eqref{CLP:zi-atmost-1}, \eqref{CLP:machine-small-job}  and \eqref{CLP:compact} are untouched since $z_i$'s and $x_{i,j}$'s for light jobs $j$ did not change.  Apply the operation until the $x$-value of some edge in the cycle becomes $0$.  By applying this operation repeatedly, we can guarantee that the graph $H[M \cup {J_\sfH}]$ is a forest. 
Some heavy jobs $j$ may be completely assigned to a machine $i$; in this case, the edge $(i,j)$ forms a two-node tree, since $z_i \leq 1$. We call such trees trivial.

We now further modify the instance so that each connected component in $H[M \cup {J_\sfH}]$ is a star, with center being a heavy job, and leafs being machines.  Consider any nontrivial tree $\tau$ in the forest $H[M \cup {J_\sfH}]$. We root $\tau$ at an arbitrary heavy job.  If the weight $x_{i,j}$ between any heavy job $j$ in $\tau$ and its parent machine $i$ is at most $1/2$, we remove the edge $(i, j)$ from $\tau$.  After this operation, $\tau$ is possibly broken into many trees.

Now focus on one particular such tree $\tau'$. Note the following facts about $\tau'$: (i) $\tau'$ is rooted at a heavy job $j^*$; (ii) every machine $i$ in $\tau'$ has either 0 or 1 child since $x_{i,j} > 1/2$ for any child $j$ of $i$ in $\tau'$ and \eqref{CLP:zi-atmost-1} holds; (iii) all leaves are machines since a heavy job can only be partially assigned to its parent.  Thus, in $\tau'$, the number of heavy jobs is exactly the number of non-leaf-machines plus 1. 
\begin{lemma}
\label{lem:sum-z-is-large}
Let $L$ be the set of leaf-machines in $\tau'$. Then $\sum_{i \in L}z_i \geq 1/2$.
\end{lemma}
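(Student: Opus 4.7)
The plan is to prove $\sum_{i\in L} z_i \geq 1/2$ by a double-counting estimate: I will lower-bound $\sum_{i\in\tau'} z_i$ by summing contributions per heavy job, then upper-bound $\sum_{i\in N} z_i$ (where $N$ is the set of non-leaf machines of $\tau'$) using \eqref{CLP:zi-atmost-1}, and subtract.

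First I would record the structural count: since each non-leaf machine in $\tau'$ has exactly one heavy-job child and the only ``extra'' heavy job is the root $j^*$, the number of heavy jobs in $\tau'$ is $|N|+1$. Next I would swap the order of summation:
\[
\sum_{i\in\tau'} z_i \;=\; \sum_{i\in\tau'}\sum_{j\in J_\sfH:(i,j)\in H} x_{i,j} \;\geq\; \sum_{j\in\tau'\cap J_\sfH}\ \sum_{i\in\tau':(i,j)\in H} x_{i,j},
\]
where the inequality comes from possibly dropping contributions from heavy jobs outside $\tau'$ (which are nonnegative). For any heavy job $j\in\tau'$ with $j\neq j^*$, its neighbors in $H$ (equivalently in $\tau$) are its parent machine in $\tau$ and its children machines in $\tau$; none of $j$'s child edges are ever deleted (edges are only removed when the heavy job is the child of the machine, which makes the heavy job itself a new root), and $j$'s parent machine is still in $\tau'$ because $j\neq j^*$. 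Hence $j$'s full mass is accounted for inside $\tau'$ and the inner sum equals $\sum_{i:(i,j)\in H} x_{i,j}=1$ by \eqref{CLP:job-covered}.

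For $j=j^*$, two cases: if $j^*$ was already the root of $\tau$, the same argument gives contribution $1$; otherwise the edge to $j^*$'s former parent machine $i_p$ in $\tau$ was deleted, so $x_{i_p,j^*}\leq 1/2$, and the inner sum equals $1-x_{i_p,j^*}\geq 1/2$. Combining, $\sum_{i\in\tau'} z_i \;\geq\; |N|\cdot 1 + 1/2 = |N|+1/2$. Finally, \eqref{CLP:zi-atmost-1} gives $z_i\leq 1$ for each $i\in N$, so $\sum_{i\in N} z_i\leq |N|$, and subtracting yields
\[
\sum_{i\in L} z_i \;=\; \sum_{i\in\tau'} z_i \;-\; \sum_{i\in N} z_i \;\geq\; 1/2.
\]
There is no serious obstacle; the only delicate bookkeeping is verifying that the edge-deletion step cannot strip any neighbor of a non-root heavy job of $\tau'$, which is why the $1/2$ loss is confined to the single root $j^*$.
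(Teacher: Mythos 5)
Your proof is correct and is essentially the paper's argument, just with the key step $\sum_{i\in M(\tau')} z_i \geq t - 1/2$ unpacked more explicitly via the swap of summation order and the case analysis on $j^*$. The paper states this inequality directly from the observation that only the single edge to $j^*$'s former parent (of weight at most $1/2$) can leak heavy-job mass out of $\tau'$, then subtracts $\sum_{i\in N} z_i \leq |N| = t-1$ exactly as you do.
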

\begin{proof}
Suppose there are $t$ heavy jobs in the tree $\tau'$. Since we may remove an edge of weight at most $1/2$ connecting the root of $\tau'$ to its parent in $\tau$, we have $\sum_{i \in M(\tau')}z_i \geq t-1/2$, where $M(\tau')$ is the set of machines in $\tau'$.  Since $z_i \leq 1$ for each $i \in M(\tau') \setminus L$ and $|M(\tau') \setminus L| = t-1$, we have $\sum_{i \in L}z_i \geq t-1/2 - (t-1) = 1/2$.
\end{proof}

We assign heavy jobs in $\tau'$ to machines in $\tau'$ as follows. Each non-leaf machine of $\tau'$ is guaranteed to be assigned a heavy job.  There is one extra heavy job left, and we assign it to a leaf machine.  The following lemma shows that any leaf-machine can yield to a valid assignment for the heavy jobs.

\begin{lemma}
\label{lemma:choose-any-leaf-machine}
\label{lem:choose-any-leaf-machine}

Let $i$ be any leaf-machine in $\tau'$. There is a valid assignment of heavy jobs in $\tau'$ to machines in $\tau'$ such that
\begin{enumerate}[itemsep=-2pt]
\item Any non-leaf-machine is assigned exactly one heavy job;
\item $i$ is assigned exactly one heavy job;
\item No heavy jobs are assigned to other leaf-machines.
\end{enumerate}
\end{lemma}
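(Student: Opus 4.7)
The structure of $\tau'$ is a rooted tree, with root a heavy job $j^*$, and levels alternating between heavy jobs (even levels) and machines (odd levels). Crucially, by the properties (ii) and (iii) noted just before the lemma, each non-leaf machine has exactly one heavy-job child, while a heavy job may have several machine children. So going downward from any non-leaf machine, the tree ``continues linearly'' through a single heavy-job child, whereas going downward from a heavy job, the tree may branch. I would exploit this asymmetry by routing along the unique path from the chosen leaf $i$ up to the root.

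The plan is to define the assignment in two pieces. First, consider the path from $i$ to $j^*$, which has the form $v_0, v_1, v_2, \ldots, v_L$ with $v_0 = i$, $v_L = j^*$, $L$ odd, and with $v_{2s}$ machines and $v_{2s-1}$ heavy jobs. Along this path, assign the heavy job $v_{2s-1}$ to the machine $v_{2s-2}$ for each $s = 1, \ldots, (L{+}1)/2$. This is a valid assignment because $v_{2s-1}$ and $v_{2s-2}$ are adjacent in $H$ (parent and child in the tree), so $v_{2s-2} \in M_{v_{2s-1}}$. Second, for every non-leaf machine $m$ \emph{not} on this path, assign its unique heavy-job child to $m$; again validity is immediate from adjacency in $H$.

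Then I would verify the three claimed properties by a short counting argument. Every heavy job $j$ of $\tau'$ is either on the path (handled by the first piece) or not; if not on the path, its unique parent machine $m$ is a non-leaf machine, and $m$ cannot be on the path either, because the single heavy-job child of a path machine $v_{2s}$ is $v_{2s-1}$, which \emph{is} on the path. Hence $j$ is covered exactly once by the second piece. Every non-leaf machine is covered exactly once: those on the path receive a heavy job from the path assignment, those off the path receive their unique heavy-job child. The designated leaf $i = v_0$ receives $v_1$, while every other leaf is a machine with no child and does not appear on the path, so it receives nothing.

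I do not anticipate a major obstacle here; the lemma is essentially a combinatorial observation about the very rigid structure of $\tau'$ enforced by the star-like decomposition and by the bound $x_{i,j} > 1/2$ that guarantees each machine has at most one heavy-job child. The only subtlety to watch is ensuring that each intermediate $v_{2s}$ on the path is genuinely a non-leaf machine (so that classifying it under the ``path machines'' rather than ``other leaves'' is consistent); this is automatic because $v_{2s}$ has a child $v_{2s-1}$ in the tree for $s \geq 1$.
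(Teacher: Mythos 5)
Your proposal is correct and matches the paper's proof exactly: both assign each heavy job on the unique $i$-to-root path to its child machine on that path, and assign every other heavy job to its parent machine; you simply spell out the counting/partition verification that the paper leaves as "easy to see."
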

\begin{proof}
Focus on the path from the root of $\tau'$ to the leaf-machine $i$.  We assign each heavy job in this path to its child in the path.  For all the other heavy jobs, we assign them to their parents.  It is easy to see this assignment satisfies all the conditions.
\end{proof}
We now create a new set of heavy jobs to \emph{replace} the heavy jobs in $\tau'$.  For each non-leaf machine $i$, we create a new heavy job $j$ with $M_j = \set{i}$. We also create a new heavy job $j$ with $M_j$ being the set of leaf machines.  By \Lem{choose-any-leaf-machine}, a valid assignment for the new machines implies a valid assignment for the original machines.  Notice that new created heavy jobs $j$ have disjoint $M_j$. This is true even if we consider the new jobs for all trees $\tau'$ as the machines in these trees are disjoint. For every new created heavy job $j$, we have $\sum_{i \in M_j} z_i \geq 1/2$: if $M_j = \set{i}$ for a non-leaf machine $i$, then $z_i > 1/2$ as the weight of edge from $i$ to its child has weight at least $1/2$; if $M_j$ is the set of all leaves, then by \Lem{sum-z-is-large}, $\sum_{i \in M_j}z_i \geq 1/2$. 

We have created a new set $J'_\sfH$ of heavy jobs for $\cI$ and the sets $\set{M_j : j \in J'_\sfH}$ are disjoint. An assignment of these big jobs imply an assignment of $J_\sfH$ via \Lem{choose-any-leaf-machine}. From now on we let $J_\sfH = J'_\sfH$ and only consider the set of new heavy jobs. Thus, Property~\ref{property:ci-M} is satisfied.  

Since we haven't modified $x_{i,j}$ and $z_i$ for $i \in M$ and $j\in J_\sfL$, Constraint~\eqref{CLP:zi-atmost-1}, \eqref{CLP:machine-small-job} and \eqref{CLP:compact} are satisfied. Constraint~\eqref{CLP:job-covered} is satisfied for light jobs $j \in J_\sfL$. We did not define $x_{i,j}$'s for the new created heavy jobs $j \in J_\sfH$ and thus Constraint~\eqref{CLP:job-covered} for heavy jobs $j \in J_\sfH$ and Constraint~\eqref{CLP:machine-big-job} are meaningless and henceforth will be ignored. 

%\mcomment{sk: Changed 0.5 to 0.4.}
We now scale down $z_i$ by a factor of $1 - \rho = 0.4$ for all machines in $i \in M$, then Constraint~\eqref{CLP:compact} is strengthened to
\begin{equation}
z_i + x_{i,j} \leq 1, \qquad \forall j\in J_\sfL, i\in M.
\label{equ:CLP-strengthened-compact}
\end{equation}
For every $j \in J_\sfH$, we have $\sum_{i \in M_j}z_i \geq 0.5(1-\rho) = 0.2$. Every $z_i$ is between $0$ and $0.4$. Moreover, if for some $j \in J_\sfH$ and some $i \in M_j$ we have $z_i = 0$, we remove $i$ from $M_j$. Then, Property~\ref{property:ci-z} holds and Property~\ref{property:canonical-instance-big-job-covered} holds with $\theta = 0$.

\subsection{Processing Light Jobs}
Now let $H$ be the weighted bipartite graph between $M$ and $J_\sfL$. In this step, we make sure that each light job is fractionally assigned to exactly $2$ machines.   To achieve this, perform the rotation operations to cycles in $H$, as we did before for heavy jobs. Note that the rotation preserves the sum $\epsilon \sum_{j \in \cap J_\sfL: i \in M_j} x_{i,j}$. In order to maintain Condition~\eqref{equ:CLP-strengthened-compact}, we may not be able to break all cycles in $H$.  We say an edge $(i, j)$ in $H$ is tight if $x_{i,j} + z_i = 1$.  We can perform rotation operations so that the non-tight edges form a forest.  Also, since $z_i \leq 0.4$ for all machines $i$, $x_{i,j} \geq 1-0.4= 0.6$ for a tight edge $(i,j)$.  Thus, each light job $j$ is incident to at most 1 tight edge.

For each non-singleton tree $\tau$ in the forest formed by the non-tight edges,  we root $\tau$ at an arbitrary light job and {\em permanently} assign each {\em non-leaf} light job in $\tau$ arbitrarily to one of its children. These light jobs are then removed from $J_\sfL$. 
Notice that each machine can get permanently assigned at most 1 light job during this process.  Each unassigned light job in the tree $\tau$ is incident to exactly one non-tight edge (since it was a leaf).

Therefore, each remaining light job $j$ must be one of the following. First, $j$ can be completely assigned to some machine $i$ (thus $x_{i, j} = 1$ and $z_i = 0$), then, we say $j$ is of type-$(i, i)$. Second, $j$ maybe incident to two edges, one tight, the other non-tight. Let $(k, j)$ be a tight edge and $(h, j)$ be the other edge; then we say $j$ is of type-$(h, k)$.   This lets us define the matrix $w$: we let $w_{h,k}$ be the total load of light jobs of type-$(h, k)$, or equivalently, $\eps$ times the number of light jobs of type-$(h, k)$.  For every light job $j$ of type-$(h,k)$, $h\neq k$, we have $x_{h,j} = z_k$ and $x_{k,j} = 1- z_k$.   Thus $w$ satisfies Property~\ref{property:ci-w} and Property~\ref{property:canonical-instance-light-load-large} with $q = 1/\eps$.  Property~\ref{property:canonical-instance-load-small} holds with $\theta = \rho\delta$ as the $z_h + \sum_{k\in M}w_{k, h}(1-z_h) + \sum_{k \in M}w_{h,k}z_k$ is exactly the total fractional load assigned to $h$ which is at most $1+\rho\delta$ by \eqref{CLP:machine-small-job}

Property~\ref{property:canonical-instance-z-i-large} holds for a sufficiently large number $p = \exp(\poly(n))$ as each $z_i$ can be represented using polynomial number of bits.  However, we would like to start with $p = m/(\delta\rho)$, where $m$ is the number of machines. If $0 < z_i < \rho\delta/m$ for some $i \in M_j$, we change $z_i$ to $0$ and remove $i$ from $M_j$. Then, Property~\ref{property:canonical-instance-big-job-covered} still holds for $\theta = \rho\delta/m \times m = \rho\delta$ as there are at most $m$ machines. Thus, our canonical instance is $(m/\rho\delta, 1/\eps, \rho\delta)$-canonical.
This ends the proof of \Thm{1}.

% Shi : This is the newer version where I start to make big changes

\section{Reducing Parameters \texorpdfstring{$p$}{p} and \texorpdfstring{$q$}{q} in Canonical Instances}\label{sec:reducing-p-and-q}
This section is devoted to proving \Thm{2}. The proof is analogous to a similar theorem proved by Feige~\cite{Fei08} for max-min allocations, and 
therefore we only provide a sketch in the main body. 
All omitted proofs from this section can be found in \App{reducing-p-and-q}.
\redpq*

Using the characterization of $\delta$-good assignment given in \Thm{alpha-good-equivalent-to-no-bad-set}, we define a {\em $\delta$-witness} 
as a pair of sets which rules out any $\delta$-good assignment.

\begin{definition}[$\delta$-witness]
A pair $(S, T)$ of subsets of machines is called a {\bf $\delta$-witness} if $T \subseteq S$ and 
\begin{equation}
\cardinal{T} +  w_{S, S} > (2-\delta)|S|. \label{equation:witness}
\end{equation}
Moreover, we call a $\delta$-witness  $(S, T)$ {\bf connected} if $S$ is (weakly) connected in the light load graph $G_w$.
\end{definition}
\noindent
%We now state a few claims whose proofs are in \App{reducing-p-and-q}.

\begin{restatable}{claim}{connectedwitness}
\label{claim:witness-implies-connected-witness}
\label{clm:witness-implies-connected-witness}
If $(S, T)$ is a $\delta$-witness, then there is a connected $\delta$-witness $(\tilde S, \tilde T)$, with $\tilde S \subseteq S$ and $\tilde T \subseteq T$.
\end{restatable}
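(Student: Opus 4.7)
The plan is to decompose $S$ into its weakly connected components with respect to the light load graph $G_w$ restricted to $S$, and then apply an averaging argument to locate one component that inherits the witness inequality.

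More concretely, let $S_1, S_2, \dots, S_\ell$ be the weakly connected components of the subgraph of $G_w$ induced on $S$. The key structural observation is that $w$ decomposes cleanly across these components: for any $h \in S_a, k \in S_b$ with $a \neq b$, we must have $w_{h,k} = 0$, since otherwise $(h,k)$ would be an edge of $G_w$ (and hence of the induced subgraph on $S$) joining two distinct weakly connected components, a contradiction. Consequently
\[
w_{S,S} \;=\; \sum_{r=1}^\ell w_{S_r, S_r},
\]
and trivially $|T| = \sum_r |T \cap S_r|$ and $|S| = \sum_r |S_r|$. Summing the defining inequality of a witness,
\[
\sum_{r=1}^\ell \bigl(|T \cap S_r| + w_{S_r, S_r}\bigr) \;=\; |T| + w_{S,S} \;>\; (2-\delta)|S| \;=\; (2-\delta)\sum_{r=1}^\ell |S_r|.
\]

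By an averaging argument, there exists an index $r^*$ for which $|T \cap S_{r^*}| + w_{S_{r^*}, S_{r^*}} > (2-\delta)|S_{r^*}|$. Setting $\tilde S := S_{r^*}$ and $\tilde T := T \cap S_{r^*}$ yields a pair with $\tilde T \subseteq \tilde S \subseteq S$ and $\tilde T \subseteq T$ that satisfies inequality \eqref{equation:witness}. Finally, $\tilde S$ is weakly connected as a component of the induced subgraph, and since any weakly connecting path inside the induced subgraph remains a valid weakly connecting path in $G_w$ itself, $\tilde S$ is weakly connected in $G_w$. Hence $(\tilde S, \tilde T)$ is a connected $\delta$-witness, completing the proof.

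There is no substantive obstacle here; the only point requiring any care is noticing that the decomposition of $w_{S,S}$ across components is clean (no ``cross'' contributions), which is forced precisely by how weakly connected components are defined. Once that is observed, an averaging step finishes the argument.
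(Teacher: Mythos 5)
Your proof is correct and takes essentially the same approach as the paper: decompose $S$ into the weakly connected components of $G_w[S]$, note that $w_{S,S}$ splits cleanly across them, and apply an averaging argument to extract a single component satisfying the witness inequality. The only difference is that you spell out the clean decomposition of $w_{S,S}$ (including that diagonal terms $w_{h,h}$ each land in exactly one component), which the paper leaves implicit.
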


%The following claim immediately follows from \Thm{alpha-good-equivalent-to-no-bad-set} and \Clm{witness-implies-connected-witness}.
\begin{claim}
\label{claim:alpha-good-equivalent-to-no-witness}
\label{clm:alpha-good-equivalent-to-no-witness}
$f$ is a $\delta$-good assignment iff for every connected $\delta$-witness $(S, T)$ of $w$, $T \not \subset f({J_\sfH})$.
\end{claim}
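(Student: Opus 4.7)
The plan is to derive both directions directly from Theorem~\ref{thm:alpha-good-equivalent-to-no-bad-set} and Claim~\ref{clm:witness-implies-connected-witness}. The pivot is the trivial observation that if $T \subseteq S$ and $T \subseteq f(J_\sfH)$, then $|S \cap f(J_\sfH)| \geq |T|$, so any lower bound stated in terms of $|T|$ upgrades to one stated in terms of $|S \cap f(J_\sfH)|$, and conversely we can always take $T = S \cap f(J_\sfH)$ to go the other way.

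For the ``only if'' direction I would argue by contrapositive: assume there is a connected $\delta$-witness $(S,T)$ with $T \subseteq f(J_\sfH)$. By \eqref{equation:witness} and $|S \cap f(J_\sfH)| \geq |T|$ (using $T \subseteq S$), we obtain $|S \cap f(J_\sfH)| + w_{S,S} \geq |T| + w_{S,S} > (2-\delta)|S|$, which by Theorem~\ref{thm:alpha-good-equivalent-to-no-bad-set} violates $\delta$-goodness. For the ``if'' direction, again by contrapositive, suppose $f$ is not $\delta$-good. Theorem~\ref{thm:alpha-good-equivalent-to-no-bad-set} gives some $S \subseteq M$ with $|S \cap f(J_\sfH)| + w_{S,S} > (2-\delta)|S|$; set $T := S \cap f(J_\sfH) \subseteq S$, so $(S,T)$ is a $\delta$-witness satisfying $T \subseteq f(J_\sfH)$. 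Applying Claim~\ref{clm:witness-implies-connected-witness} yields a connected $\delta$-witness $(\tilde S, \tilde T)$ with $\tilde T \subseteq T \subseteq f(J_\sfH)$, contradicting the assumption that no such connected witness exists.

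There is no real obstacle here; the claim is a convenient repackaging of the two prior results for subsequent use. The only mild bookkeeping is to ensure that the ``$T$-part'' of the connected sub-witness produced by Claim~\ref{clm:witness-implies-connected-witness} is still contained in $f(J_\sfH)$, but this is automatic from $\tilde T \subseteq T$.
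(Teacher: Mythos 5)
Your proof is correct and matches the paper's intent exactly: the paper simply states that this claim ``follows immediately from Theorem~\ref{thm:alpha-good-equivalent-to-no-bad-set} and Claim~\ref{clm:witness-implies-connected-witness},'' and your two-contrapositive argument (using $T \subseteq S$ with $T \subseteq f(J_\sfH)$ in one direction, and taking $T = S \cap f(J_\sfH)$ then invoking the connectedness reduction in the other) is precisely the bookkeeping that remark elides.
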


Now, we prove two main lemmas for our algorithm that alternatively reduce $q$ and $p$.   Let $\cI$ be a $(p, q, \theta)$-canonical instance. If $q \geq \max\set{p, q_0}$, \Lem{reduce-q} reduces it to a $(p, q/2, \theta')$-canonical instance; if $p \geq \max\set{q, q_0}$, \Lem{reduce-p} reduces it to a $(p/2, q, \theta')$-canonical instance. 

\begin{restatable}{lemma}{reduceq}
\label{lemma:reduce-q}\label{lem:reduce-q}
 Let $\cI = \left(\set{M_j : j \in {J_\sfH}}, w, z \right)$ be a $(p, q, \theta)$-canonical instance. Assume $q \geq \max\set{p, q_0}$.  Then, we can find in polynomial time a $(p, q', \theta')$-canonical instance $\cI' = \left(\set{M_j:j \in {J_\sfH}}, w', z\right)$, such that any $\delta'$-good assignment $f$ for $\cI'$ is $\delta$-good for $\cI$, where $q' = q/2, \theta' = \theta + 8\sqrt{\log q/q}, \delta' = \delta + 8\sqrt{\log q/q}$.
\end{restatable}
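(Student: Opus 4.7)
The plan is to adapt Feige's merging technique used for max-min allocation. Every non-zero entry $w_{h,k}$ already satisfies $w_{h,k} \geq 1/q$ by the current light-load lower bound property. The only obstacle to upgrading $\cI$ to a $(p, q/2, \theta')$-canonical instance is the set of ``small'' entries $w_{h,k} \in [1/q, 2/q)$, which I will merge into larger aggregates of weight at least $1/q' = 2/q$, while approximately preserving the load on each machine and the sum $w_{S,S}$ for every connected $S$ (so that Claim~\ref{clm:alpha-good-equivalent-to-no-witness} can transfer witnesses between $\cI'$ and $\cI$).

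Concretely, for each machine $k \in M$ I would look at all small entries $\{(h, k) : w_{h,k} \in [1/q, 2/q)\}$ pointing into $k$ and greedily partition them into bundles $B_k^1, B_k^2, \ldots$ whose total weight lies in $[1/q', 2/q')$, discarding any leftover scrap of total weight below $1/q'$ (which only decreases $w_{S,S}$). For each bundle $B = \{(h_1, k), \ldots, (h_t, k)\}$ of total weight $W$, I would independently pick a single winner $i \in [t]$ with probability $w_{h_i, k}/W$ and set $w'_{h_i, k} = W$, $w'_{h_j, k} = 0$ for $j \neq i$. Entries not in any bundle are kept unchanged. By construction every nonzero $w'_{h, k}$ satisfies $w'_{h, k} \geq 1/q'$, and $\{M_j\}$ and $z$ are untouched, so the only canonical-instance requirements that need further work are the per-machine load bound and the witness count.

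The analysis applies the asymmetric Lovász Local Lemma to two families of bad events. \emph{Load events} are indexed by machines: for each $h$, the event that the new light load $\sum_k w'_{k, h}(1-z_h) + \sum_k w'_{h, k} z_k$ exceeds the old one by more than $8\sqrt{\log q/q}$. \emph{Witness events} are indexed by connected (in $G_w$) subsets $S$: the event that $w'_{S,S} > w_{S,S} + 8\sqrt{\log q / q}\cdot|S|$. In both cases, $w'$ is a sum of mutually independent bundle-contributions each of magnitude $O(1/q)$, whose expectation equals the corresponding $w$-quantity, so Bernstein's inequality gives probabilities $q^{-\Omega(1)}$ for a single load event and $q^{-\Omega(|S|)}$ for a witness event of size $|S|$. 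The load-bound property together with the $1/q$ lower bound on nonzero entries forces every vertex of $G_w$ to have degree $O(q)$, so each load event depends on $O(q)$ bundles and the number of connected subsets of size $s$ containing any fixed bundle is at most $q^{O(s)}$. This makes the asymmetric LLL condition (with weights $x(E) \approx q^{-cs}$ for size-$s$ witness events) verifiable, and the constructive versions of Moser--Tardos and Haeupler--Saha--Srinivasan yield a polynomial-time algorithm.

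Once every bad event is avoided, the load events directly give the load property with new parameter $\theta' \leq \theta + 8\sqrt{\log q/q}$, while the witness events, combined with Claim~\ref{clm:witness-implies-connected-witness} and Claim~\ref{clm:alpha-good-equivalent-to-no-witness}, imply that every $\delta$-witness for $\cI$ induces a connected $\delta'$-witness for $\cI'$; equivalently, any $\delta'$-good assignment for $\cI'$ is $\delta$-good for $\cI$. The main technical obstacle I expect is the LLL verification for the witness events: although there are exponentially many candidate sets $S$, the $q^{-\Omega(s)}$ Chernoff decay must be balanced against the $q^{O(s)}$ count of connected neighborhoods through each bundle, and it is precisely in choosing $q_0$ large enough that the resulting exponents point the right way that the constant $8$ in $8\sqrt{\log q/q}$ gets pinned down.
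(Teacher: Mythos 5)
Your approach is in the same LLL family as the paper's, but the rounding step is genuinely different, and one of your two bad-event families has the inequality pointing the wrong way, which as written breaks the argument.

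The paper rounds each small entry \emph{independently}: for every $(h,k)$ with $0<w_{h,k}<1/q'$, set $w'_{h,k}=1/q'$ with probability $q'w_{h,k}$ and $w'_{h,k}=0$ otherwise. You instead partition the small in-edges of each machine $k$ into bundles of weight in $[1/q',2/q')$, discard leftover scrap, and within each bundle choose one winner proportionally and give it the whole bundle weight. Both schemes preserve $\E[w'_{h,k}]=w_{h,k}$ with per-coordinate magnitude $O(1/q)$, so the Chernoff/Bernstein calculations go through in either case; your scheme has the mild bonus that $w'_{M,k}$ becomes deterministic (bundle totals are fixed, scrap only decreases weight), so all the randomness in the load of a machine $h$ sits in its out-terms $\sum_k w'_{h,k}z_k$. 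The paper's scheme is simpler and avoids the bookkeeping around pairing and scrap; yours is closer in spirit to Feige's literal merging of small items. Either can be made to work.

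The genuine error: you define the witness bad event as ``$w'_{S,S} > w_{S,S} + 8\sqrt{\log q/q}\,|S|$''. This is the wrong direction. A connected $\delta$-witness $(S,T)$ of $\cI$ satisfies $|T| + w_{S,S} > (2-\delta)|S|$; for it to survive as a $\delta'$-witness of $\cI'$ you need $|T| + w'_{S,S} > (2-\delta')|S|$, i.e.\ $w'_{S,S}$ must not \emph{drop} by more than $(\delta'-\delta)|S| = 8\sqrt{\log q/q}\,|S|$. So the bad event is $w'_{S,S} < w_{S,S} - 8\sqrt{\log q/q}\,|S|$, a lower-tail event, not an upper-tail one. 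Avoiding your stated upper-tail event would do nothing to preserve witnesses. Your summary sentence (``every $\delta$-witness for $\cI$ induces a connected $\delta'$-witness for $\cI'$'') shows you understood the right direction, but the event as written would have to be flipped. Note also that your deterministic scrap loss of at most $1/q'$ per machine in $S$ is exactly a loss in $w'_{S,S}$ and must be charged against this lower-tail budget; it is negligible compared with $8\sqrt{\log q/q}\,|S|$, but it belongs on that side of the ledger, not dismissed as harmless. Finally, a minor point: the degree of $G_w$ is $O(pq)=O(q^2)$, not $O(q)$ (in-degree $O(q)$, out-degree $O(pq)$); this does not change the LLL outcome but should be stated correctly when counting connected subsets.
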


The proof follows from an application of asymmetric LLL. We want that Property~\ref{property:canonical-instance-light-load-large} holds for $q'$.  We apply the following natural procedure. For each $(h, k)$ such that $0 < w_{h,k} < 1/q' = 2/q$, we change $w_{h,k}$ to $1/q'$ with probability $q'w_{h,k}$ and to $0$ with probability $1-q'w_{h,k}$. We need to show that Property~\ref{property:canonical-instance-load-small} holds. Also, we need to show that any $\delta$-witness in the original instance is also a $\delta'$-witness in the new instance. We apply the asymmetric LLL (\Thm{lll}) to show that all these properties can hold.  The idea is that a bad event depending on many other bad events must have a smaller probability.  The detailed proof is in \App{reducing-p-and-q-1}.

\begin{restatable}{lemma}{reducep}
\label{lemma:reduce-p}\label{lem:reduce-p}
 Let $\cI = \left(\set{M_j : j \in {J_\sfH}}, w,  z\right)$ be a $(p, q, \theta)$-canonical instance, where $p \geq \max\set{q, q_0}$.  We can find in polynomial time a $(p', q, \theta')$-canonical instance $\cI' = \left(\set{M_j : j \in {J_\sfH}}, w, z'\right)$ such that any $\delta$-good solution $f$ for $\cI'$ is also $\delta$-good for $\cI$, where $p' =  p/2, \theta' = \theta + 8\sqrt{\log p/p}$.
\end{restatable}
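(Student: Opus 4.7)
I plan to mirror the proof of \Lem{reduce-q}, but now randomizing the $z$-values rather than the $w$-entries. For each $i$ with $0<z_i<2/p=1/p'$, independently set $z'_i=2/p$ with probability $pz_i/2$ and $z'_i=0$ otherwise; keep $z'_i=z_i$ when $z_i\geq 2/p$, and delete $i$ from every $M_j$ as soon as $z'_i=0$. The matrix $w$ is left unchanged. By construction $\E[z'_i]=z_i$, and Properties~\ref{property:ci-M}, \ref{property:ci-w}, \ref{property:canonical-instance-z-i-large} (with $p'=p/2$), \ref{property:canonical-instance-light-load-large}, and the range bound $z'_i\in[0,0.4]$ in Property~\ref{property:ci-z} all hold deterministically (we use $p\geq q_0$ so that $2/p\leq 0.4$). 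Because $w$ and $J_\sfH$ are unchanged and $M'_j\subseteq M_j$, every valid assignment for $\cI'$ is valid for $\cI$; the witness characterization of \Thm{alpha-good-equivalent-to-no-bad-set} depends only on $w$ and $f(J_\sfH)$, so a $\delta$-good $f$ for $\cI'$ is automatically $\delta$-good for $\cI$.

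\textbf{Bad events.} To secure a $(p',q,\theta')$-canonical instance with $\theta'=\theta+8\sqrt{\log p/p}$, I introduce, for each $j\in J_\sfH$, the bad event $B_j:=\{\sum_{i\in M_j}z'_i<0.2-\theta'\}$, and for each $h\in M$ the bad event $C_h:=\{z'_h+\sum_k w_{k,h}(1-z'_h)+\sum_k w_{h,k}z'_k>1+\theta'\}$. Avoiding every $B_j$ and $C_h$ is equivalent to Properties~\ref{property:canonical-instance-big-job-covered} and~\ref{property:canonical-instance-load-small} for $\cI'$. Since $\E[z'_i]=z_i$, each event represents a deviation of at least $t:=8\sqrt{\log p/p}$ from the mean of a sum of independent, bounded, scaled Bernoullis. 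For $B_j$, the summands lie in $[0,2/p]$ and $|M_j|\leq O(p)$, so Hoeffding gives $\Pr[B_j]\leq\exp(-\Omega(t^2p))=p^{-K}$ for any constant $K$ (by tuning the Hoeffding constant in $t$). For $C_h$ the argument relies on the invariant $w_{h,k}\leq 2$, which holds after \Thm{1} (each light job contributes $\eps$ and $w_{h,k}(1-z_k)$ is bounded by the LP load on $k$) and is preserved by \Lem{reduce-q} (which only lifts a small $w_{h,k}$ up to $2/q'\leq 2/q_0\leq 2$). Combined with $\sum_k w_{h,k}z_k\leq 1+\theta$ and $z_k\geq 1/p$, this gives $\sum_k w_{h,k}^2\leq 2\sum_k w_{h,k}\leq O(p)$, and Hoeffding again yields $\Pr[C_h]\leq p^{-K}$.

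\textbf{Dependencies and LLL.} Distinct $B_j$'s share no randomness by Property~\ref{property:ci-M}. A $B_j$ and a $C_h$ share a random $z'_i$ only when $i\in M_j\cap(\{h\}\cup N_{G_w}(h))$; the in-degree of any vertex in $G_w$ is at most $(1+\theta)q/0.6$ because $\sum_k w_{k,h}(1-z_h)\leq 1+\theta$, $1-z_h\geq 0.6$, and $w_{k,h}\geq 1/q$, so each $B_j$ is adjacent to at most $O(|M_j|q)=\poly(p,q)$ of the $C_h$'s. A similar count bounds the number of $C_{h'}$ dependent on a given $C_h$ by $\poly(p,q)$. With bad-event probabilities $p^{-K}$ for arbitrarily large $K$ against polynomial dependency degrees, the asymmetric Lov\'asz Local Lemma applies, and the Moser--Tardos algorithm~\cite{MT10} (or~\cite{HSS11}) furnishes a polynomial-time construction of the desired $z'$, yielding $\cI'$ as required.

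\textbf{Main obstacle.} The delicate step is the concentration for $C_h$. Using only the trivial bound $w_{h,k}\leq p$ (from $w_{h,k}z_k\leq 1+\theta$ and $z_k\geq 1/p$), $\sum_k w_{h,k}^2$ could be as large as $p^2$, and Hoeffding would fail to beat the tiny deviation $t=8\sqrt{\log p/p}$. Establishing and invoking the stronger invariant $w_{h,k}=O(1)$---maintained inductively through \Thm{1} and prior applications of \Lem{reduce-q}---is what makes the analysis succeed, and tracking this invariant carefully is the main care required beyond a direct translation of the reducing-$q$ argument.
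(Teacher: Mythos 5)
Your proposal is correct and follows essentially the same route as the paper: same randomized rounding of $z$ (round each small $z_i$ up to $2/p$ or down to $0$, preserving its mean), same pair of bad-event families (one per heavy job $j$ for Property~\ref{property:canonical-instance-big-job-covered}, one per machine $h$ for Property~\ref{property:canonical-instance-load-small}), the same observation that $w$ and $f(J_\sfH)$ determine $\delta$-goodness so any good assignment for $\cI'$ is good for $\cI$, and the same Chernoff/Hoeffding-plus-LLL finish via Moser--Tardos.

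One small correction to your ``main obstacle'' paragraph: the bound $w_{h,k}=O(1)$ is not an invariant that needs to be tracked through \Thm{1} and prior applications of \Lem{reduce-q}. It is a direct one-line consequence of Property~\ref{property:canonical-instance-load-small} of \emph{any} $(p,q,\theta)$-canonical instance: since $z_h + w_{M,h}(1-z_h) \leq 1+\theta$ and $z_h \leq 0.4$, one gets $w_{M,h} \leq (1+\theta)/0.6 \leq 1.1$ (for $\theta\le 0.05$), and in particular $w_{h,k}\le w_{M,k}\le 1.1$ for all $h,k$. This is exactly Claim~\ref{clm:degree-is-small} in the paper, and it is the bound the paper uses to control both the Chernoff range parameter $K$ and the dependency degree. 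So the ``delicate step'' you highlight is in fact immediate from the definition, and your constant $2$ can be tightened to $1.1$ without any inductive bookkeeping.

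Two further technical notes. First, as the paper mentions, when $z'_k$ drops to $0$ one also reclassifies type-$(h,k)$ light load as type-$(k,k)$ load so that Property~\ref{property:ci-w} is maintained; this is cosmetic and (as the paper notes) does not affect the concentration or LLL analysis, but it should be stated. Second, the paper handles the possibly negative coefficient $1-w_{M,h}$ on $z'_h$ in event $C_h$ by replacing $z'_h$ with $1/p'-z'_h$ before applying Chernoff; your Hoeffding version absorbs this automatically via the two-sided range, which is an equally valid bookkeeping choice.
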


This lemma is by symmetric LLL. To guarantee that each positive $z_i$ has $z_i \geq 1/p' = 2/p$, we apply the following natural process: if $1/p \leq z_i < 1/p'$, we change $z_i$ to $1/p'$ with probability $p'z_i$, and to $0$ with the probability $1-p'z_i$. All bad events in the proof are local; they only depend on a few variables.  Thus, a symmetric LLL suffices to prove the lemma. The detailed proof is in \App{reducing-p-and-q-2}.

To complete the proof of \Thm{2}, we apply \Lem{reduce-q} and \Lem{reduce-p} repeatedly till we obtain a $(q_0, q_0, \theta)$-canonical instance $\cI = (\set{M_j : j \in J_\sfH}, w, z)$, where $\theta \leq \rho\delta_1 + 16\sqrt{\log q_0/q_0}$
with the guarantee that a $\delta$-good solution to $\cI$ implies a $(\delta - 16\sqrt{\log q_0/q_0})$-good solution to the original instance. 

\section{Solving Canonical Instances with Small Values of \texorpdfstring{$p$}{p} and \texorpdfstring{$q$}{q}}
\label{sec:fixq}

This section is devoted to proving \Thm{small-q-and-m}.
\smallpq*

\noindent For convenience, we will make $\theta = 0$ by scaling down the light load matrix $w$ by a factor of $\frac{0.6}{0.6+\theta}$. After this operation, Property~\ref{property:canonical-instance-load-small} will hold with $\theta = 0$ as we have $z_h \leq 0.4$. Let $q = (0.6+\theta)q_0/0.6$; then the new instance is $(q, q, 0)$-canonical except that Property~\ref{property:canonical-instance-big-job-covered} only holds with right side being $0.19$ instead of $0.2$ (as $\theta \leq .01$ for large enough $C$).  At the end, we can scale up light loads by $(0.6+\theta)/0.6$. As each machine is assigned strictly less than  $2$  units of total load, the scaling will increase the light load on each machine by at most $\theta/0.6 \times 2 \leq 4\theta$.  

Thus, we can assume $\theta = 0$  and focus on a $(q, q, 0)$-canonical instance $\cI = (\set{M_j : j \in J_\sfH}, w, z)$ from now on.  With $\theta = 0$, Property~\ref{property:canonical-instance-load-small} implies that $w_{M,h} \leq 1$ for every $h \in M$.

 Given an assignment $f:J_\sfH \to M$, for convenience we use $X = f(J_\sfH)$ to denote the set of machines that are assigned heavy jobs. We define the concept of a `boundary' of a set which will be crucial in what follows.

\def\bnd{{\tt bnd}}
\begin{definition}[Boundary of a set]
Given a subset $S$ of machines, we define its boundary as 
$$\bnd(S) = \sum_{h\in S} \sum_{k\notin S}  \left( w_{k,h}(1-z_h) + w_{h,k}z_{k} \right).$$ 
\end{definition}

\begin{definition}
Let the \emph{deficiency} of a machine $h \in M$ be $\phi_h = 1 \!-\! z_h\! -\! \sum_{k\in M} \big(w_{k,h}(1-z_h) + w_{h,k}z_k\big)$. The deficiency of a subset $S \subseteq M$ is $\phi(S) = \sum_{h \in S}\phi_h$. 
\end{definition}

Thus, $\phi_h \geq 0$ measures how far away Property~\ref{property:canonical-instance-load-small} is from being tight.  With the definition, we can rewrite the condition for $\delta$-good assignments.  From \Thm{alpha-good-equivalent-to-no-bad-set}, $f$ is $\delta$-good iff for every $S\subseteq M$, we have 
\begin{equation}
\label{eq:farkas}
w_{S,S} + |S\cap X| \leq (2-\delta)|S|.
\end{equation}

Adding the definition of $\phi_h$ for every $h \in S$, we get that  $\phi(S) + z(S) + w_{S, S} + \bnd(S)  = |S|$.

\ifdefined\CR
The left hand side of \eqref{eq:farkas} is
\begin{align*}
&\quad |S| + \cardinal{S \cap X} - \phi(S) -z(S)- \bnd(S) \\
&=2|S| - |S \setminus X| - (\phi(S)  + z(S) + \bnd(S)).
\end{align*}
\else
\begin{align*}
\text{LHS of \eqref{eq:farkas}}  &=  |S| + \cardinal{S \cap X} - \phi(S) -z(S)- \bnd(S) \\
&=2|S| - |S \setminus X| - (\phi(S)  + z(S) + \bnd(S)).
\end{align*}
\fi

Thus,  $f$ is $\delta$-good iff for every $S \subseteq M$ we have 
\begin{align}
\label{equ:good-assignments-condtion1}
\cardinal{S \setminus X} + \phi(S) + z(S) + \bnd(S) \geq \delta |S|.
\end{align}

%%%%%%%%%%%%%%%%%
%%%  Begin Shi's Change   %%%
%%%%%%%%%%%%%%%%%
We fix $\delta_0 \in (0, 0.001)$ whose value will be decided later. We say a machine $h \in M$ is {\bf green} if $\phi_h + z_h \geq \delta_0$  and {\bf red} otherwise.  Let $R$ be the set of red machines.  
%%%%%%%%%%%%%%%%%
%%%   End Shi's Change    %%%
%%%%%%%%%%%%%%%%%

To check if $f$ is $\delta$-good, we can check the Inequality~\eqref{equ:good-assignments-condtion1} for every $S \subseteq M$.   With some condition on $X$ and some loss on the goodness, we only need to check the above condition for $S \subseteq X \cap R$.  To be more specific,

%%%%%%
%%%%%% Shi: now we only need to check sets S \subseteq X \cap R
%%%%%%
\begin{lemma}
\label{lem:reduce-to-star}
Let $c_1 \geq 1, \delta\in (0, 1)$.  Suppose $f : J_\sfH \to M$ is a valid assignment with $X=f(J_\sfH)$ satisfying
\begin{properties}
\item for every $h \in M$, we have $\sum_{k\in X\cap R} w_{h,k} \leq c_1 \log q$; \label{property:weight-to-x-small}
\item for every subset $T \subseteq X \cap R$,
$\phi(T) + z(T) + \bnd(T)  \geq \delta|T|$. \label{eq:desire}
\end{properties}
Then $f$ is a $\frac{\delta\delta_0}{2c_1\log q}$-good assignment.
\end{lemma}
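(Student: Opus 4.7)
The plan is to apply the characterization of $\delta'$-good assignments from \Thm{alpha-good-equivalent-to-no-bad-set}: setting $\delta' := \delta\delta_0/(2c_1\log q)$, it suffices to verify the condition $|S \cap X| + w_{S,S} \leq (2-\delta')|S|$ for every $S \subseteq M$. Using the identity $\phi(S) + z(S) + \bnd(S) = |S| - w_{S,S}$ derived earlier in the excerpt, this is equivalent to the form $|S \setminus X| + \phi(S) + z(S) + \bnd(S) \geq \delta'|S|$. Fix such an $S$ and partition it as $S = T \cup G' \cup U_R \cup U_G$, where $T = S \cap X \cap R$, $G' = S \cap X \cap G$, $U_R = (S \cap R) \setminus X$, and $U_G = (S \cap G) \setminus X$. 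I split into two cases based on the threshold $\alpha := \delta'/\delta_0 = \delta/(2c_1\log q)$.

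\emph{Green-rich case: $|T| \leq (1-\alpha)|S|$.} Each green machine contributes $\phi_h + z_h \geq \delta_0$, so $\phi(S) + z(S) \geq \delta_0(|G'| + |U_G|)$. Since $|S\setminus X| = |U_R| + |U_G|$, $\bnd(S) \geq 0$, and $\delta_0 \leq 1$, the second form of the inequality gives
\[
\text{LHS} \geq |U_R| + (1+\delta_0)|U_G| + \delta_0|G'| \geq \delta_0(|S|-|T|) \geq \delta_0\alpha|S| = \delta'|S|.
\]

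\emph{Red-heavy case: $|T| > (1-\alpha)|S|$, so $|S'| := |S \setminus T| < \alpha|S|$.} I decompose $w_{S,S} = w_{T,T} + w_{T,S'} + w_{S',T} + w_{S',S'}$ and bound each piece separately. First, the second hypothesis of the lemma, rewritten via the same identity, yields $w_{T,T} \leq (1-\delta)|T|$. Second, Property~\ref{property:canonical-instance-load-small} applied with $k$ as the subject of the constraint gives $\sum_h w_{h,k}(1-z_k) \leq 1$; combined with $z_k \leq 0.4$, this produces the column bound $\sum_h w_{h,k} \leq 5/3$, so $w_{T,S'}, w_{S',S'} \leq \tfrac{5}{3}|S'|$. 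Third, the first hypothesis applied to each $h \in S'$, using $T \subseteq X \cap R$, gives $w_{S',T} \leq c_1\log q \cdot |S'|$. Combining with $|S \cap X| \leq |T| + |S'|$ yields $|S \cap X| + w_{S,S} \leq (2-\delta)|T| + (c_1\log q + \tfrac{13}{3})|S'|$, and the target bound reduces to $(\delta - \delta')|T| \geq (c_1\log q + \tfrac{7}{3} + \delta')|S'|$. Using the case hypothesis $|T|/|S'| > (1-\alpha)/\alpha = \delta_0/\delta' - 1 = 2c_1\log q/\delta - 1$ and the explicit value of $\delta'$, a direct calculation reduces the requirement to $c_1 \log q \geq \delta + \delta_0 + 7/3$, which is easily satisfied for $q \geq q_0 \geq 100$ and $c_1 \geq 1$.

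The main obstacle is that the first hypothesis is \emph{one-sided}: it bounds $\sum_{k \in X \cap R} w_{h,k}$ but not the transposed sum $\sum_{h \in X \cap R} w_{h,k}$. Consequently, the ``wrong-way'' cross-term $w_{T,S'}$ cannot be attacked by the first hypothesis directly. Fortunately, the crude column bound $\sum_h w_{h,k} \leq 5/3$ from the canonical-instance constraint suffices, because $|S'|$ is small in the red-heavy case and its contribution is dominated by the $c_1\log q \cdot |S'|$ term already present from the other cross-term.
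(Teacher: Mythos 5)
Your proof is correct and follows essentially the same strategy as the paper: the same case split on whether $T = S\cap X\cap R$ occupies more or less than a $(1-\alpha)$-fraction of $S$ with $\alpha = \delta/(2c_1\log q)$, the same per-machine $\delta_0$ accounting for the green-rich case, and the same use of hypothesis (C2) on $T$ together with (C1) and the universal column bound $w_{M,k}\leq 1$ to control the red-$X$ / rest interaction. The only cosmetic difference is that in the red-heavy case the paper works directly with the boundary, bounding $\bnd(S)\geq\bnd(P) - |Q|(c_1\log q + 1)$, while you dualize via the identity $\phi(S)+z(S)+\bnd(S)=|S|-w_{S,S}$ and decompose $w_{S,S}$ into four blocks; these are algebraically interchangeable. (You could tighten your column bound from $5/3$ to $1$ by noting $\sum_h w_{h,k}(1-z_k)\leq 1-z_k$ rather than $\leq 1$, as the paper does, but your looser constant changes nothing.)
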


\begin{proof}
 Decompose $S = P \cup Q$ where $P = S \cap X \cap R, Q = S \setminus (X \cap R)$. Each machine in $Q$ contributes at least $\delta_0$ to the left-side of Inequality~\eqref{equ:good-assignments-condtion1}. If $|Q| \geq \delta|S|/(2c_1 \log q)$, then the inequality with $\delta$ replaced with $\frac{\delta\delta_0}{2c_1\log q}$ holds trivially.  

  So, assume $|Q| < \delta|S|/(2c_1 \log q)$. 
 For large enough $q$, we get $|P| > 0.99|S|$. 
  By Condition~\ref{eq:desire} for $T$ being $P$,  we have $\phi(P) + z(P) + \bnd(P) \geq \delta|P|$.  Notice that $\bnd(S) \geq \bnd(P) - \sum_{h \in P, k \in Q}\big(w_{k,h}(1-z_h) + w_{h,k}z_k\big)$.  Notice that for any machine $k\in Q$,  $w(k, P) \leq c_1 \log q$ by Condition~\ref{property:weight-to-x-small} and $w(P, k) \leq w(M, k) \leq 1$. We have $\bnd(S) \geq \bnd(P) - |Q|(c_1 \log q + 1) \geq \bnd(P) - 0.51\delta|S|$. Thus, $\phi(S) + z(S) + \bnd(S) \geq \phi(P) + z(P) + \bnd(P) - 0.51\delta|S| \geq \delta|P| - 0.51\delta|S| \geq 0.48\delta|S|$.
\end{proof}

To prove \Thm{small-q-and-m},  it suffices to find an assignment  which satisfies Properties~\ref{property:weight-to-x-small} and \ref{eq:desire} for suitable $\delta$ and $c_1$. In the remainder of this section, we will focus on finding such an assignment.

\paragraph{Sketch of the Proof.}  Suppose for the time being all the positive $w_{h,k}$'s are very close to $1$. $w_{M, h} \leq 1$  implies that in the light load graph $G_w$ any machine $h$ can have in-degree at most $1$. In other words, $G_w$ is a collection of disjoint cycle-rooted trees. Suppose again there are no cycles, and so $G_w$ is a collection of trees. 

Now consider the random process which allocates job $j\in J_\sfH$ to machine $i$ with probability proportional to $z_i$.  We now describe some bad events such that if none of them occur we satisfy   \ref{property:weight-to-x-small} and \ref{eq:desire}. The first bad event is of course the negation of \ref{property:weight-to-x-small} (corresponding to $\cA$ in what follows). The second  bad event (corresponding to $\cB$ in what follows) occurs if the forest induced by $X \cap R$ contains a connected component larger than $\Theta(\log q)$. 

Note that if these bad events don't occur then any subset $S\subseteq X \cap R$ which does not contain roots of trees in the forest satisfy \ref{eq:desire}   (for $\delta = \Theta(\log^{-1} q)$). This is because every connected component contributes $1$ in-degree to $\bnd(S)$. 

Now suppose $S$ contains the root $r$ as well. Since $r$ has no incoming edges and is red, by \ref{property:canonical-instance-load-small}, it has many out-edges. The third bad event (corresponding to $\cC$ in what follows) occurs if lots of out-neighbors of a machine have been picked in $X \cap R$. If $\cC$'s doesn't occur in addition to the $\cB$'s, then the out-edges of $r$ which exit $S$  contributes the boundary term in \ref{eq:desire}. In summary, if no bad event of type $\cA,\cB,\cC$ occur, then \ref{property:weight-to-x-small} and \ref{eq:desire} are satisfied. The formal statement of this is \Lem{no-bad-events-and-we-are-done}.

To show that with positive probability none of the bad events occur, we invoke the asymmetric  Lovasz Local Lemma. To do so, we need to argue about the dependency structure of the various events.  We use a connection  to Galton-Watson branching processes described by Moser and Tardos to prove that $\cB$ has `good' dependency properties (the concrete statement  is \Lem{galton-watson}.). The algorithm follows from the constructive versions of LLL due to~\cite{MT10,HSS11}.

Till now we have been assuming positive $w_{h,k}$'s are close to $1$. In reality, we divide the edges into two classes: {\em dense}, if $w_{h,k} \geq (c_2\log q)^{-1}$ and {\em sparse} otherwise.  Note that dense edges no longer form trees. However, for each machine which has at least one dense edge coming into it (which are called {\em in-dense} later); we arbitrarily pick one of them and color it {\em red} %(actually, we color some {\em green} as well, but that's a technicality) 
and only count the red edges towards the boundary. 

The reason such a `sparsification' helps is that it decreases the dependence among events leading to the application of LLL. To take care of machines with only sparse in-coming edges, we need another type of bad events  (corresponding to $\cD$ in what follows) whose non-existence implies a large contribution to the boundary for such machines.  This ends the sketch of the proof.  We begin the details with some definitions.

\begin{definition}
An edge $(h,k) \in G_w$ is {\bf dense} if ${w_{h,k}}\geq (c_2\log q)^{-1}$, and {\bf sparse} otherwise, where $c_2$ is large enough constant ($c_2 = 300$ suffices).
\end{definition}

\begin{definition}
A machine $h$ is {\bf in-sparse} if {\em all} incoming edges $(k,h)\!\!\in\!\! G_w$ are sparse. Otherwise, $h$ is  {\bf in-dense}.
\end{definition}

\begin{definition}
A machine $h$ is called {\bf out-dense} if $\sum_{k: (h,k) \text{ is dense}} z_k \geq \frac{1}{c_3}$ where $c_3$ is a large enough constant ($c_3 = 200$ suffices). Otherwise, the machine is called {\bf out-sparse}.
\end{definition}

We are now ready to describe our algorithm for assigning heavy jobs to machines satisfying \ref{property:weight-to-x-small} and \ref{eq:desire}.  Our algorithm starts with a pre-processing of the fractional solution, and then recovers a good integral assignment of heavy jobs from the pre-processed solution using randomized rounding. %Recall that we de
note by $X = f(J_\sfH)$ is the set of machines getting heavy jobs.

\subsection{Pre-processing of the Instance}\label{sec:preprocessing}
%%%%%%
%%%%%% Shi change
%%%%%%
%We call a machine $h \in M$  {\bf red} if $z_h + \phi_h < \delta_0$ and is {\bf green} otherwise, where $\delta_0 = (C_0\log q)^{-1}$. (Caution: this is {\em not} the same $\delta_0$ as in \Thm{main-2}). Here $C_0$ is a constant which depends on $c_1,c_2,c_3$ and will be specified later on.
For every in-dense, red machine $h \in M$, we arbitrarily select an incoming dense edge $(k, h)$ of $h$.  If $k$ is red and out-sparse, we color the edge $(k, h)$ in  red. %; otherwise we color the edge $(k, h)$ in {\bf green}. %All other edges in $G_w$ are {\bf green}. 
%If some $M_j$ contains a green machine, we just assign $j$ to the green machine. 
Every machine has at most one incoming red edge. Moreover, the two end points of a red edge are also red. Then each connected component formed by red edges is either a tree, or a tree plus an edge. Recall $X$ is the set of machines which we will assign heavy jobs and $R$ is the set of red machines.

We want to  ensure that $G_w[X \cap R]$ does not contain any red cycles. That is, for any cycle of red edges in $G_w$, we wish to ensure that at least one of these machines is not assigned a heavy job.  For each heavy job $j$, we now identify a subset $M'_j \subseteq M_j$ such that (i) $z(M'_j) \ge 0.49z(M_j)$, and (ii) the subgraph of $G_w$ induced by $\cup_{j \in J_\sfH} M'_j$ does not contain a red cycle. 

We reduce the task of identifying $M'_j$ to an instance of the generalized assignment problem where red cycles correspond to jobs and groups correspond to machines. If a red cycle $C$ contains two machines from a group $M_j$, we can ignore it since one machine in the cycle will not get a heavy job. So assume $C$ contains at most one job from each group $M_j$. The cost of assigning a red cycle $C$ to a group $M_j$ is $z_h$ if some machine $h \in M_j$ participates in the cycle $C$ and $\infty$ otherwise. Since each red cycle $C$ contains at least two machines(if a red cycle contains one machine $h$, then $w_{h,h} > 0$, implying $z_h = 0$ by Property~\ref{property:ci-w} and thus $h$ is not in any $M_j$), a solution that assigns each $C$ uniformly to groups of all machines contained in $C$, is a feasible fractional solution where the load assigned to $M_j$ is at most $z(M_j)/2$. We can now use the Lenstra-Shmyos-Tardos~\cite{LST90} algorithm to recover an integral assignment of red cycles to groups such that maximum load on any machine is at most the fractional load $z(M_j)/2$ plus the largest job size, which is at most $\max_{h\text{ red}}z_h \leq \delta_0$.  Thus for any group $M_j$, the total $z$-value of machines chosen in the red cycle elimination step is at most $z(M_j)/2 + \delta_0 \leq 0.51z(M_j)/$ since $\delta_0 \leq 0.001 \leq 0.01z(M_j)$.

\subsection{Randomized Assignment of Heavy Jobs and Bad Events}
Now we are ready to describe the randomized algorithm to get the heavy job assignment. For every heavy job $j$, assign it to a machine $f(j) := i \in M'_j$ with probability proportional to $z_i$. Note that the probability $p_i$ a machine $i$ gets a heavy job is at most $p_i \leq z_i/(0.49\cdot 0.19) \leq 11z_i$. We describe the bad events.

\begin{itemize}[itemsep=2pt,parsep=2pt]
\item {Bad events $\cA_h$, $h \in M$.} $\cA_h$ occurs if  $\sum_{k\in X\cap R}w_{h,k} > c_1\log q$. %or if  $\sum_{k\in X \cap R: (h,k) \textrm{ is sparse}} w_{h,k} > 80$. 
Setting $c_1 = 12$ is sufficient.
% Property~\ref{property:weight-to-x-small} is violated for $h$, i.e,
\item {Bad events $\cB_T$, for a set $T$ of $L=10\log q$ machines connected by red edges.} $\cB_T$ occurs if $T \subseteq X \cap R$.
\item {Bad events $\cC_h$, $h \in M$.} $\cC_h$ occurs if  $\cardinal{\set{k \in X \cap R:(h, k)\text{ is dense}}} > 17c_2\log q$.
\item {Bad events $\cD_h$, $h$ is in-sparse.}  $\cD_h$ 
%occurs if $\phi_h \!+\! z_h \!+\! \sum_{k\notin X} \left[w_{k,h}(1 - z_h)\! +\! {w_{h,k}}z_k\right]\! <\! 0.15$. This was what was before
occurs if $\sum_{k\in X \cap R} [w_{k,h}(1-z_h) + w_{h,k}z_k] > 0.1$.
\end{itemize}

\begin{restatable}{lemma}{chernoff}
\label{lem:probabilities}
The bad events described above have probabilities bounded as follows:
\ifdefined\CR
\begin{align*}
\Pr[\cA_h] \leq q^{-c_1}, &\quad \Pr[\cB_T] \leq \prod_{i \in T} p_i, \\
\Pr[\cC_h] \leq q^{-c_2}, &\quad \Pr[\cD_h] \leq q^{-c_2}.
\end{align*}
\else
\begin{equation*}
\Pr[\cA_h] \leq q^{-c_1}, \quad \Pr[\cB_T] \leq \prod_{i \in T} p_i, \quad \Pr[\cC_h] \leq q^{-c_2}, \quad \Pr[\cD_h] \leq q^{-c_2}.
\end{equation*}
\fi
\end{restatable}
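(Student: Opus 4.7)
The plan is to handle the four bad events separately by standard concentration, exploiting two features of the randomized assignment: (i) for distinct heavy jobs $j \neq j'$, the choices $f(j)$ and $f(j')$ are independent, because the sets $M_j \supseteq M'_j$ are pairwise disjoint by Property~\ref{property:ci-M}; (ii) for any machine $k$, $p_k := \Pr[k \in X]$ satisfies $p_k \leq z_k/(0.49 \cdot 0.19) \leq 11 z_k$, which was already noted in the paper. A global bound I will rely on throughout is $w_{h,k} \leq 1$ for all $h,k$, which follows from Property~\ref{property:canonical-instance-load-small} with $\theta = 0$: the constraint gives $(1-z_k) \sum_{h'} w_{h',k} \leq 1 - z_k$, hence $\sum_{h'} w_{h',k} \leq 1$.

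For $\cA_h$, I would decompose $\sum_{k \in X \cap R} w_{h,k} = \sum_{j \in J_\sfH} W_j$, where $W_j := w_{h,f(j)} \mathbf{1}[f(j) \in R] \in [0,1]$ is a function of the single choice $f(j)$; by (i) the $W_j$'s are independent $[0,1]$-valued. Their total expectation is $\sum_{k \in R} w_{h,k} p_k \leq 11 \sum_k w_{h,k} z_k \leq 11$, again by Property~\ref{property:canonical-instance-load-small}. The multiplicative Chernoff bound then yields $\Pr[\sum_j W_j > c_1 \log q] \leq (11e/(c_1 \log q))^{c_1 \log q} \leq q^{-c_1}$ once $c_1 \log q \geq 11 e^2$, and in particular $c_1 = 12$ suffices for $q_0$ large enough. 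For $\cB_T$: if two machines of $T$ lie in the same $M_j$, then $T \subseteq X$ is impossible; otherwise the machines of $T$ come from distinct $M_j$'s, so by (i) the indicators $\{\mathbf{1}[k \in X]\}_{k \in T}$ are mutually independent, giving $\Pr[T \subseteq X] = \prod_{k \in T} p_k$.

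For $\cC_h$: let $N := \{k : (h,k)\text{ dense}\}$. By the density lower bound $w_{h,k} \geq 1/(c_2 \log q)$ and Property~\ref{property:canonical-instance-load-small}, $\sum_{k \in N} z_k \leq c_2 \log q \cdot \sum_{k \in N} w_{h,k} z_k \leq c_2 \log q$. Hence $E[|X \cap R \cap N|] \leq 11 \sum_{k \in N} z_k \leq 11 c_2 \log q$. Since the threshold $17 c_2 \log q$ is a constant multiplicative factor above the mean, the standard multiplicative Chernoff bound for a sum of (negatively correlated) Bernoullis gives $\Pr[\cC_h] \leq q^{-c_2}$ provided $c_2$ is large enough. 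For $\cD_h$: each summand $Y_k := [w_{k,h}(1-z_h) + w_{h,k}z_k] \mathbf{1}[k \in X \cap R]$ is bounded by $b := 1/(c_2 \log q) + \delta_0$, using the in-sparseness of $h$ ($w_{k,h} \leq 1/(c_2 \log q)$), the redness of $k$ ($z_k < \delta_0$), and $w_{h,k} \leq 1$. A calculation using $z_k < \delta_0$ for $k \in R$ together with $\sum_k w_{k,h} \leq 1$ and $\sum_k w_{h,k} z_k \leq 1$ bounds the mean by $E[\sum_k Y_k] = O(\delta_0)$. Since the threshold $0.1$ is a large multiplicative factor above this mean while each summand is $O(1/\log q)$, Bernstein's inequality (with variance proxy $b \cdot E[\sum_k Y_k]$) gives $\Pr[\cD_h] \leq \exp(-\Omega(\log q)) \leq q^{-c_2}$ for $c_2$ taken large enough.

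The main obstacle is the $\cD_h$ bound. For $\cA_h$, $\cB_T$, and $\cC_h$, Property~\ref{property:canonical-instance-load-small} directly controls both the mean and the size of a single summand by absolute constants, and a vanilla multiplicative Chernoff suffices. For $\cD_h$, however, the small-mean bound and the small-per-term bound come from two distinct structural features, the redness of $k$ and the in-sparseness of $h$, and must be exploited \emph{simultaneously}: a Chernoff bound using only an $O(1)$ bound on each summand would not give any decay in $q$, whereas Bernstein's form $\exp(-\Theta(t^2/(b\mu + bt)))$ with $b = O(1/\log q)$ extracts the required $\Theta(\log q)$ factor in the exponent to produce a $q^{-\Omega(c_2)}$ bound.
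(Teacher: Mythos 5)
Your proof follows the paper's own argument essentially step by step: the same independence structure across heavy jobs (giving independent $[0,1]$-valued summands, since the $M'_j$'s are disjoint), the same bounds $p_k \leq 11z_k$ and $w_{M,h}\leq 1$ via Property~\ref{property:canonical-instance-load-small}, and a concentration bound applied to each of $\cA_h,\cC_h,\cD_h$ with the identical decompositions and moment estimates. The only stylistic difference is your use of Bernstein for $\cD_h$ where the paper invokes its multiplicative Chernoff bound; both exploit, exactly as you note, that in-sparseness of $h$ and redness of $k$ (with $\delta_0 = O(1/\log q)$) jointly make each summand $O(1/\log q)$, which is the crux of that bound.
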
	

\begin{proof}
The second inequality is trivial. The remaining follow as easy consequences of \Thm{Chernoff}. For any machine $h$, $\Exp[\sum_{k\in X} w_{h,k}] \leq 11\sum_{k\in M} w_{h,k}z_k \leq 11$ since $p_i\leq 11z_i$ and by Property~\ref{property:canonical-instance-load-small} with $\theta = 0$. Since $w_{h,k} \leq 1$, and $c_1,q$ are large enough, \Thm{Chernoff} implies $\Pr[\sum_{k\in X \cap R} w_{h,k} > c_1\log q] \leq \exp(-c_1\log q)$. %Since for sparse $(h,k)$, we have $w_{h,k} \leq (c_2\log q)^{-1}$, we get that the second event in $\cA_h$ occurs with probability at most $\exp(-c_2\log q)$. By union bound we get the lemma.

$\Exp[\sum_{k\in X \cap R} w_{h,k}] \leq 11$ implies for any machine $h$, the expected number of out-neighbors $k\in X \cap R$ such that $(h,k)$ is dense is at most $11c_2\log q$.  Therefore, the probability that $\cC_h$ occurs, by the second inequality of \Thm{Chernoff},  is at most $\exp(-(6/11)^2 \cdot 11c_2\log q /3) \leq q^{-c_2}$.

For any machine $h$, the expected value of $\sum_{k\in X \cap R} w_{k,h}(1-z_h) + w_{h,k}z_k$ is at most $11\delta_0$ since red machines are sampled with probability at most $11z_k \leq 11\delta_0$. If $q$ is large enough, $11\delta_0 < 0.005$.  Since $h$ is in-sparse, each $w_{k,h} < (c_2\log q)^{-1}$, and since $k$ is red, $w_{h,k}z_k \leq (C_0 \log q)^{-1}$.  Therefore $w_{k,h}(1-z_h) + w_{h,k}z_k \leq 2/(c_2\log q)$. By the first inequality of  \Thm{Chernoff}, the probability $\cD_h$ occurs is at most $\exp(-20\cdot 0.1 \cdot c_2\log q/2) = q^{-c_2}$.
\end{proof}

\begin{lemma}\label{lem:no-bad-events-and-we-are-done}
If none of the bad events occur,  then $f$ is $\frac{\delta\delta_0}{2c_1\log q} $-good  for $\delta = (340c^2_2c_3\log^3 q)^{-1}$ and  $\delta_0 = (34c_2c_3\log q)^{-1}$.
\end{lemma}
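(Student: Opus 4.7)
My plan is to verify the two hypotheses of \Lem{reduce-to-star} under the assumption that no bad event $\cA$, $\cB$, $\cC$, or $\cD$ occurs. Hypothesis \ref{property:weight-to-x-small} is exactly the negation of the events $\cA_h$ (with $c_1=12$), so it holds immediately. The bulk of the work is hypothesis \ref{eq:desire}: for every $T \subseteq X \cap R$, show $\phi(T) + z(T) + \bnd(T) \geq \delta|T|$. Since $\phi(T), z(T) \geq 0$, it suffices to prove $\bnd(T) \geq \delta|T|$, and I will do so by decomposing $\bnd(T) = \sum_{v \in T} \bnd_v$ with $\bnd_v := \sum_{k\notin T}[w_{k,v}(1-z_v) + w_{v,k}z_k]$ and arguing that enough vertices of $T$ contribute.

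The first structural step uses event $\cB$. By the preprocessing every vertex has at most one incoming red edge, and red-cycle elimination makes the red-edge subgraph acyclic, so $G_R[T]$ is a forest. Non-occurrence of the events $\cB_T$ forces every connected component to have fewer than $L = 10\log q$ vertices, so $G_R[T]$ has at least $|T|/L$ ``roots'' (vertices whose red in-edge, if any, lies outside $T$). For each root $r$ I will lower-bound $\bnd_r$ by cases: (i) if $r$ is in-sparse, $\cD_r$ yields $\bnd_r \geq (1 - z_r - \phi_r) - 0.1 \geq 0.89$; (ii) if $r$ is in-dense and the chosen dense in-edge $(k_r, r)$ satisfies $k_r \notin T$, the edge alone gives $\bnd_r \geq (1-\delta_0)/(c_2\log q)$; (iii) otherwise the chosen in-edge has $k_r \in T$, which forces $k_r$ red (because $T\subseteq R$) and also \emph{out-dense} (because otherwise the edge would have been coloured red and $r$ would not be a root). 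I will call case (iii) a \emph{bad orphan}.

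The main obstacle is case (iii), where $r$ contributes nothing directly. I will handle it by a dual use of event $\cC$. First, for each bad orphan $v$ the witness $k^* := k_v$ is out-dense in $T$, and $\cC_{k^*}$ limits $|\{v' \in X\cap R : (k^*, v')\text{ dense}\}| \leq 17c_2\log q$, so the number of bad orphans with $k_v = k^*$ is at most $17c_2\log q$; summing, $|\text{bad orphans}| \leq 17c_2\log q \cdot |T_d|$ where $T_d$ denotes the out-dense vertices of $T$. Second, every out-dense $k^* \in T$ contributes independently to $\bnd(T)$: combining the definition of out-dense, $\cC_{k^*}$, and $z_{v'} \leq \delta_0$ for $v' \in T$ yields $\sum_{(k^*,v')\text{ dense},\,v'\notin T} z_{v'} \geq 1/c_3 - 17c_2\log q \cdot \delta_0 = 1/(2c_3)$, whence $\bnd_{k^*} \geq (1/(c_2\log q))\cdot(1/(2c_3)) = 1/(2c_2c_3\log q)$ via the dense out-edges escaping $T$. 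These are exactly the constant choices $\delta_0 = (34c_2c_3\log q)^{-1}$ needed to make the subtraction $1/c_3 - 17c_2\log q\cdot\delta_0$ equal $1/(2c_3)$.

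Putting the pieces together, let $G$ denote the set of good roots (cases (i)--(ii)); then $|G| + 17c_2\log q\cdot |T_d| \geq |G| + |\text{bad orphans}| \geq |T|/L = |T|/(10\log q)$. Since every vertex receives its own non-negative $\bnd_v$, I assign each $v \in G$ its root bound and each $v \in T_d \setminus G$ its out-dense bound to obtain
\[
\bnd(T) \;\geq\; \frac{1-\delta_0}{c_2\log q}\,|G| \;+\; \frac{1}{2c_2c_3\log q}\,|T_d\setminus G|.
\]
Using $|T_d \cap G| \leq |G|$ to relax the counting inequality to $(1+17c_2\log q)|G| + 17c_2\log q\cdot|T_d\setminus G| \geq |T|/(10\log q)$, a straightforward two-variable LP (extremal at $|G| = 0$, $|T_d\setminus G| = |T|/(170c_2\log^2 q)$) shows the lower bound equals $|T|/(340c_2^2c_3\log^3 q) = \delta|T|$, as required. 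This establishes hypothesis \ref{eq:desire}, and \Lem{reduce-to-star} then delivers the $\frac{\delta\delta_0}{2c_1\log q}$-good assignment, completing the proof.
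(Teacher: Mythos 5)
Your proposal is correct and follows essentially the same approach as the paper's proof: both exploit the red-edge forest structure to identify at least $|T|/L$ roots, run the same three-way case analysis on each root (in-sparse via $\cD$, in-dense with the selected in-edge escaping $T$, or the ``bad orphan'' case where the selected in-edge comes from an out-dense red machine in $T$), and then charge bad orphans to out-dense machines with $\cC$ controlling the fan-in, arriving at the same $\delta = (340c_2^2c_3\log^3 q)^{-1}$. The only presentational difference is that the paper phrases the final accounting as a per-machine charging scheme (each machine asks its root, roots relay to out-dense machines, each contributor absorbs at most $L\cdot 17c_2\log q$ demands), whereas you make the count of contributors explicit and separate $T_d\setminus G$ from $G$ before running a small two-variable LP; your separation is a bit more careful about machines that could simultaneously be a good root and an out-dense target, and you also bound $\bnd_r\geq 0.89$ directly for in-sparse roots rather than using the combined quantity $\phi_r+z_r+\bnd_r\geq 0.9$, but these are cosmetic refinements rather than a different route.
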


\begin{proof}
We show that if no bad events occur then both conditions of \Lem{reduce-to-star} hold. In fact, since $\cA_h$ doesn't occur for any $h$, we get \ref{property:weight-to-x-small} holds.  Fix a subset $T \subseteq X\cap R$.  We now prove $\phi(T) + z(T) + \bnd(T) \geq \delta|T|$.   This is done by careful accounting.
%Note that, by definition,  a green machine contributes at least $\delta_0$ to the LHS. Let $G$ be the set of green machines in $S$.
%Over the course of the proof, we use machines in $G$ to pay for many red machines in $S$.
%For each machine in $S$ we charge at least $\delta$ amount to the LHS.
% % % % % % % % % % % % % % % % % % % % % % % % % % % % % % %
% % % ADDED BY DEEPC ON 29/9/2014
% % % % % % % % % % % % % % % % % % % % % % % % % % % %
%First we recognize a set of ``bad'' in-sparse machines in $S$. Let $h$ be an in-sparse machine. Call $h$ bad if  its contribution to the LHS of \ref{eq:desire} is small, that is, $\phi_h + z_h + \sum_{k\notin S} (w_{k,h}(1-z_h) + w_{h,k}z_k) < \delta_0 < 0.1$.  Then, by definition of $\phi_h$, we have $\sum_{k\in S} (w_{k,h}(1-z_h) + w_{h,k}z_k) >  0.9$. Let $B$ be the set of bad, in-sparse machines.
%First observe that $B$ must be a set of red machines, otherwise $\phi_h + z_h \geq \delta_0$ itself. 

Focus on an in-sparse machine $h \in T$.  Since $\cD_h$ doesn't occur, we have $\sum_{k \in T}[w_{k,h}(1-z_h)  + w_{h,k}z_k] \leq 0.1$.  By the definition of $\phi_h$ we have $\phi_h + z_h + \sum_{k \notin T}\big(w_{k,h}(1-z_h)  + w_{h,k}z_k\big) \geq 0.9$. Thus, the contribution of $h$ to $\phi(T) + z(T) + \bnd(T)$ is at least $0.9$.

%Let $B_1$ be the subset for which  $\sum_{k\in G} w_{k,h}(1-z_h) > 0.4$ and $B_2$ be the subset for which $\sum_{k\in G} w_{h,k}z_k > 0.4$. Note $B = B_1 \cup B_2$. Let $G_1 \subseteq G$ be the set of machines $k$ such that $w_{k,h} > 0$ for some $h\in B_1$. Similarly, $G_2\subseteq G$ be the set of machines $k$ with $w_{h,k} > 0$ for some $h\in B_2$. We now bound $|B_i|$  in terms of $|G_i|$ for $i=1,2$. Since every machine of $B_1$ is in-sparse, and since $\cA_h$ doesn't occur, for every machine $k\in G_1$, we get $\sum_{h\in B_1} w_{k,h} \leq 40$.  Since $\sum_{k\in S_1} w_{h,k} > 0.4$ for all $h\in B_1$, we get $|G_1| > 0.01|B_1|$. For every machine $k\in S_2$, we have $\sum_{h\in B_2} w_{h,k} \leq w_{M, k} \leq 1$. Since $\sum_{k\in G_2} w_{h,k} > 0.4$ for every $h\in B_2$, we get $|G_2| > 0.3|B_1|$. In summary, we get that  $|G| > 0.01|B|$. Therefore each bad, in-sparse machine can be mapped to a green machine such that each green machine is mapped to at most $100$ bad machines.

%
%In what follows, we won't account for machines in $B$, and show that LHS of \ref{eq:desire} is at least $\delta' |S\setminus B|$ for some $\delta'$. The above shows, that this is at least $0.01\delta' |S|$. Henceforth, all in-sparse machines will be assumed to be good. \smallskip

% % % % % % % % % % % % % % % % % % % % % % % % % % % % % % % % % % % % % %
% % % % % % % % % % % END OF DEEPC ADDITION $ % % % % % % % % % % %
% % % % % % % % % % % % % % % % % % % % % % % % % % % % %

%A green machine $h \in S$ has $\phi_h + z_h \geq \delta_0$ and contributes $\delta_0$ to the LHS.  

The red edges induced by $ST\subseteq X \cap R$ form a forest of rooted trees, by the preprocessing step.   For each machine in the forest, we ask the root of the tree to contribute to $\phi(T) + z(T) + \bnd(T)$. Let $h$ be such a root.

If $h$ is in-sparse, then its contribution is at least $0.9$. Otherwise, $h$ is in-dense and we have selected an dense incoming edge $(k, h)$ in the pre-processing step.  If $k$ is green, then $k \notin T$.  If $k$ is red and out-sparse, then the edge $(k, h)$ is red and thus $k \notin T$.  In either case, the contribution of $h$ is at least $(1-z_h)/(c_2\log q) \geq (1-\delta_0)/(c_2 \log q) \geq 0.99/(c_2 \log q)$. 

It remains to consider the case $k$ is red and out-dense, and $k \in T$. In this case, we ask $k$ to contribute to $\phi(T) + z(T) + \bnd(T)$.   Let $T' = \{k' \in T :(k, k')\text{ is dense}\}$. Since  $T \subseteq X \cap R$ and $\cC_k$ does not happen, we have $|T'| \leq 17c_2 \log q$. $z(T') \leq 17c_2\delta_0\log q$ as all machines in $T'$ are red.
\ifdefined\CR
\begin{align*}
\sum_{k' \notin T}w_{k,k'}z_k' &\geq  \frac{1}{c_2\log q} \sum_{ k' \notin T:(k, k') \text{ dense}}z_\ell\\ %\geq \frac{1}{c_2\log q}\left(\sum_{\ell \in M}z_\ell - \sum_{\ell \in S'} z_\ell\right) \\
&\geq \frac{1}{c_2\log q}\left(\frac{1}{c_3}-17c_2\delta_0\log q\right).
\end{align*}
\else
\begin{align*}
\sum_{k' \notin T}w_{k,k'}z_k' &\geq  \frac{1}{c_2\log q} \sum_{ k' \notin T:(k, k') \text{ dense}}z_\ell
\geq \frac{1}{c_2\log q}\left(\frac{1}{c_3}-17c_2\delta_0\log q\right).
\end{align*}
\fi
The quantity is at least $1/(2c_2c_3\log q)$ since  $\delta_0 = (34c_2c_3\log q)^{-1}$.

We count the number of times each machine is asked to contribute.  Since  $\calB$ events do not happen, every root $h$ is asked at most $L$ times.  Since $\cC_k$ does not happen, every $k$ is asked by at most $17c_2\log q$ roots $h$. Overall,  we have proved the lemma for $\delta = \big(2c_2c_3\log q\cdot L \cdot 17c_2\log q\big)^{-1} = \big(340c_2^2c_3\log^3q\big)^{-1}$.

%However, there is `double-charging' on green machines, but no green machine has paid more than $(10\log q \times 100 \times 17c_2\log q \times 1.1c_2\log q ) < 2^{12}c^2_2\log^3 q$ times.
%for every root machine $h$ of the red trees, we contribute $\min(\frac{1}{2c_2c_3\log q},\delta_0, 0.15)$  to the LHS. 
%Furthermore, each green machine pays for at most $(17c_2\log q \times 1.1c_2\log q) < 31c^2_2\log^2 q$ machines. Finally, recalling that each root red-machine pays for at most $10\log q$ 
%red machines, we get that the total contribution to the LHS is at least  
% Therefore the LHS is at least $\delta |S|$ for $\delta \geq (2^{18}c_2^3c_3\log^4 q)^{-1}$.
\end{proof}

\subsection{Applying the Asymmetric LLL}
In this section, we show via LLL that no bad event occurs with positive probability. Using the results in~\cite{MT10,HSS11}, we get a polynomial time procedure to obtain an assignment such that no bad events occur. \Lem{no-bad-events-and-we-are-done} proves \Thm{small-q-and-m} for $C = 2^{15}c_1c_2^3c_3^2\log^5 q$.  %Thus, we can set $C_1 = 2^{18}c_1c_2^3c_3\log^5 q$.

We assign each bad event the following $\x$ values: $\x(\cE) = 2\Pr[\cE]$ for any bad event. The key is arguing about the dependence structure of these events which we undertake next by defining the notion of relevant machines for each event.
For any event $\cA_h$,  the relevant subset of machines is the set $\Gamma(\cA_h) = \{k: (h,k)\in G_w\}$ of out-neighbors of $h$.  For any event $\cB_T$,  the relevant subset of machines is the set $\Gamma(\cB_T) = \{k: k\in T \}$. For any event $\cC_h$,  the relevant subset of machines is the set $\Gamma(\cC_h) = \{k: (h,k) \in G_w \text{ and } (h,k)  \text{ is heavy} \}$ of heavy out-neighbors of $h$.  For any event $\cD_h$,  the relevant subset of machines is the set $\Gamma(\cD_h) = \{k: (h,k) \in G_w \text{ or } (k,h) \in G_w \}$ of in-neighbors and out-neighbors of $h$. 
By the facts that $w_{M, h}\leq 1$ , and {\em that all positive $z_k$ and $w_{k,h}$ are at least} $\frac{1}{q}$, we get that
$\max_h \{ |\Gamma(\cA_h)|, |\Gamma(\cC_h)|, |\Gamma(\cD_h)| \} \leq 2q^2$. 
\def\group{{\tt group}}
For a machine $h\in M$, we let $\group(h)$ denote the set $M_j$ where $h\in M_j$ if it exists; otherwise $\group(h) = \set{h}$. 
\begin{definition}
Two sets $S$, $T$ of machines are {\bf group-disjoint} if no machine in $S$ is in the same group with a machine in $T$. That is, for any $u\in S, v \in T$, $\group(u) \neq \group(v)$.
\end{definition}

\begin{claim}
An event $\cA_h/\cB_T/\cC_h/\cD_h$ is independent of $\cA_k/\cB_{T}/\cC_k/\cD_k$ if the relevant subset of machines for these events are  group-disjoint.
\end{claim}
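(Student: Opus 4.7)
\medskip
\noindent\textbf{Proof proposal.} The plan is to exhibit, for each bad event $\cE \in \{\cA_h, \cB_T, \cC_h, \cD_h\}$, a small subset of the independent random variables $\{f(j)\}_{j\in J_\sfH}$ that completely determines $\cE$, and then show that when two events have group-disjoint relevant machine sets these subsets of random variables are disjoint. Independence then follows immediately from mutual independence of the $f(j)$'s across heavy jobs.

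First I would note that the rounding step assigns each heavy job $j \in J_\sfH$ independently to some $f(j) \in M'_j$, so the collection $\{f(j)\}_{j \in J_\sfH}$ is mutually independent. Next, by Property \ref{property:ci-M}, the sets $\{M_j\}_{j \in J_\sfH}$ are pairwise disjoint. Hence for every machine $i \in \bigcup_{j} M_j$ there is a unique heavy job $j(i)$ with $i \in M_{j(i)}$, and the indicator $\mathbb{1}[i \in X]$ is a deterministic function of $f(j(i))$ alone; for $i \notin \bigcup_j M_j$ we have $\mathbb{1}[i \in X]\equiv 0$. Since the set $R$ of red machines is determined solely by the (non-random) quantities $\phi_i$ and $z_i$ of the canonical instance, each of $\cA_h, \cB_T, \cC_h, \cD_h$ is a function of $\{\mathbb{1}[i \in X]\}_{i \in \Gamma(\cdot)}$, and therefore of $\{f(j(i)) : i \in \Gamma(\cdot) \cap \bigcup_{j'} M_{j'}\}$.

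Finally, suppose $\Gamma(\cE)$ and $\Gamma(\cE')$ are group-disjoint. For any $i \in \Gamma(\cE)$ and $i' \in \Gamma(\cE')$ that lie in some group, we have $\group(i) = M_{j(i)}$ and $\group(i') = M_{j(i')}$, and group-disjointness forces $M_{j(i)} \neq M_{j(i')}$, i.e.\ $j(i) \neq j(i')$. Thus the two events depend on disjoint subsets of the independent family $\{f(j)\}_{j \in J_\sfH}$, and so are independent. There is no real obstacle here beyond carefully unpacking the definitions; the only thing worth double-checking is that the ``redness'' labels are deterministic and that machines outside every $M_j$ contribute no randomness, both of which are immediate from the canonical-instance setup and the rounding rule.
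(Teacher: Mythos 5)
Your proof is correct and supplies what the paper leaves implicit --- the claim is stated without any proof in the paper, treating it as a direct consequence of the definitions. Your argument is the natural one and, I believe, the one the authors had in mind: the variables $\{f(j)\}_{j\in J_\sfH}$ are independent; by Property~\ref{property:ci-M} the groups $M_j$ are pairwise disjoint, so $\mathbb{1}[i\in X]$ is a function of $f(j(i))$ alone (or identically $0$ when $i\notin\bigcup_j M_j$); redness is deterministic; each bad event is thus a function of $\{\mathbb{1}[i\in X]\}_{i\in\Gamma(\cdot)}$; and group-disjointness of the relevant sets forces the two events to depend on disjoint sub-collections of the independent family $\{f(j)\}$.

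One small detail worth flagging explicitly: the defining sums for $\cA_h$, $\cC_h$, $\cD_h$ range over $k\in X\cap R$, which formally includes $k=h$ even though $h\notin\Gamma(\cA_h),\Gamma(\cC_h),\Gamma(\cD_h)$ (since $G_w$ has self-loops removed). These diagonal summands are benign: $w_{h,h}>0$ forces $z_h=0$ by Property~\ref{property:ci-w}, which by Property~\ref{property:ci-z} gives $h\notin\bigcup_j M_j$ and hence $h\notin X$ under every realization, so the $k=h$ term vanishes identically. With that observation, your assertion that each event is a function of $\{\mathbb{1}[i\in X]\}_{i\in\Gamma(\cdot)}$ is fully justified, and the rest of the argument goes through exactly as you wrote it.
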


The main non-trivial lemma is the following.
\begin{restatable}{lemma}{galtonwatson}
\label{lem:galton-watson}
For every red machine $h \in R$, we have $\prod_{S: h\in S} \left(1-\x(\cB_S)\right) \geq \exp(-4/3^L)$ where $S$ is over all sets of size $L$ containing $h$ and connected by red edges.
\end{restatable}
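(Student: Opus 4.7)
The plan is a Galton--Watson / generating-function argument that exploits the red-forest structure built in \Sec{preprocessing}. First I would dispense with irrelevant $S$: any $S \ni h$ with $S \not\subseteq \bigcup_j M'_j$ has $\Pr[\cB_S] = 0$, since some machine in $S$ cannot receive a heavy job, so the corresponding factor $(1-\x(\cB_S)) = 1$ may be dropped; the case $h \notin \bigcup_j M'_j$ makes the product trivially $1$. On $\bigcup_j M'_j$, every vertex has at most one incoming red edge and the cycle-elimination step has destroyed every red cycle, so the red subgraph restricted there is a forest, and every remaining $S$ is a subtree of size $L$ containing $h$.

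Next I would control the local branching mass. For a red vertex $v$, its neighbors in the undirected red forest consist of at most one red parent (so $z_{\text{par}} \le \delta_0$) together with red children, which can exist only when $v$ is out-sparse. When $v$ is out-sparse, $\sum_{k:(v,k)\text{ red}} z_k \le \sum_{k:(v,k)\text{ dense}} z_k < 1/c_3$. Combined with $p_u \le 11 z_u$ from \Sec{fixq}, this yields
\[
\beta \;:=\; \sup_v \sum_{u \in N(v)} p_u \;\le\; 11\bigl(\delta_0 + 1/c_3\bigr),
\]
which is much smaller than $(\log 7)/28$ for $q \ge q_0 \ge 100$, since $c_3 = 200$ and $\delta_0 = (34 c_2 c_3 \log q)^{-1}$.

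Then I would set up the generating function $F_v^h(x) := \sum_{S \ni v,\,S \subseteq T_v^h} x^{|S|} \prod_{u \in S} p_u$, where $T_v^h$ is the component of $v$ in the red forest after deleting $h$. The tree recursion $F_v^h(x) = x p_v \prod_{u \in N(v) \setminus \{h\}}(1 + F_u^v(x))$ together with $1+z \le e^z$ gives $F_v^h(x) \le x p_v \exp(\sum_u F_u^v(x))$. A depth-induction with ansatz $F_v^h(x) \le C x p_v$ closes whenever $C x \beta \le \log C$; the choice $x = 4$, $C = 7$ reduces this to $28\beta \le \log 7$, which holds by the previous paragraph. Applying the same estimate at $h$ itself yields $F_h(4) \le 4 p_h \exp(28\beta) \le 28 p_h \le 28$, and hence
\[
A_L(h) \;:=\; \sum_{S \ni h,\ |S|=L} \prod_{i \in S} p_i \;\le\; \frac{F_h(4)}{4^L} \;\le\; \frac{28}{4^L} \;\le\; \frac{1}{3^L},
\]
where the last inequality uses $(4/3)^L \ge 28$, which holds for $L = 10\log q$ with $q \ge q_0 \ge 100$. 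Since $\x(\cB_S) \le 2 \prod_i p_i$ (from \Lem{probabilities}) and each $\x(\cB_S)$ is tiny, $\log(1-y) \ge -2y$ gives
\[
\prod_S (1 - \x(\cB_S)) \;\ge\; \exp\!\Bigl(-2 \sum_S \x(\cB_S)\Bigr) \;\ge\; \exp(-4 A_L(h)) \;\ge\; \exp(-4/3^L),
\]
as required.

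The main obstacle is the local branching estimate: one must combine the in-red-degree-at-most-one property, the cycle elimination inside $\bigcup_j M'_j$, and out-sparseness controlling the red out-mass to keep $\beta$ strictly below the subcritical Galton--Watson threshold $(\log C)/(C x)$. Once $\beta$ is small enough, the fixed-point calculation and coefficient extraction proceed routinely.
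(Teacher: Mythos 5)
Your proposal is correct. It proves the same subcriticality bound $\sum_{S \ni h, |S|=L} \prod_{i\in S} p_i \le 3^{-L}$ that the paper establishes, and from there the final estimate is identical, but the mechanism is a genuinely different one. The paper couples to an explicit branching process (select $h$ with probability $4p_h$; recursively select each child with probability $4p_k$), writes $\pi_S$ for the probability of obtaining exactly $S$, shows $\pi_S \geq 3^L \Pr[\cB_S]$ using $1-4p_u \geq e^{-5p_u}$ together with the out-sparse/out-dense case analysis, and concludes from the normalization $\sum_S \pi_S \leq 1$. You instead bound the tree generating function $F_h(x)=\sum_{S\ni h}x^{|S|}\prod_{i\in S}p_i$ directly via the recursion $F_v^h(x)=xp_v\prod_u(1+F_u^v(x))$ and a fixed-point ansatz $F_v^h(x)\le Cxp_v$, which closes precisely when $Cx\beta\le\log C$ with $\beta=\sup_v\sum_{u\in N(v)}p_u\le 11(\delta_0+1/c_3)$, and then extract the coefficient at $|S|=L$ by evaluating at $x=4$. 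These are two standard ways to certify that a Galton--Watson tree is subcritical: a probabilistic coupling to a measure that sums to one versus an analytic generating-function bound. Your version makes the $\delta_0$ contribution from the (at most one) incoming red neighbor explicit in $\beta$, whereas the paper silently absorbs it, and your handling of the ``$S\not\subseteq\bigcup_j M'_j$'' degenerate cases is a clean observation the paper leaves implicit. Both routes rest on the same three structural facts from preprocessing --- in-red-degree at most one, no red cycles among candidate machines, and out-sparseness controlling red out-mass --- so the proofs would be interchangeable in the paper.

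One small bookkeeping remark: your ansatz needs to hold for every vertex, including $h$ itself where $N(h)$ rather than $N(v)\setminus\{h\}$ appears, but since you defined $\beta$ as a supremum over the full neighborhood this is already covered. Also, the numerical check $(4/3)^L\ge 28$ for $L=10\log q$, $q\ge 100$ leaves considerable slack, so the argument is robust to the minor constant choices.
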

\begin{proof} %[Proof of Lemma~\ref{lem:galton-watson}]
To argue about the probability of a connected set $S$ being chosen so that $h\in S$,  focus on the red machines $\bigcup_{j \in J_\sfH}M'_j$ and the red edges induced on these machines.  The graph is a directed forest. We remove the directions, and focus on the connected component containing $h$. Root this tree at $h$ and let $\Lambda(v)$ be the set of children of $v$ for any $v$ in this rooted tree. Consider a branching process which selects $h$ with probability $4p_h$ (recall $p_h$ is the probability a machine $h$ gets a heavy job), and if $h$ is chosen, pick each child $k \in \Lambda(v)$ with probability $4p_k$, and continue thus.
When this process terminates, we end up with some connected set $S$. The probability that we get a specific set $S$  is precisely 
\ifdefined\CR
\begin{align*}
&\quad \pi_S := \Pr_{\text{branching process}} [S] \\
&= (1-4p_h) \prod_{k \in S}\left(\frac{4p_k}{1-4p_k}\prod_{u \in \Lambda(k)} \left(1-4p_u\right) \right).
\end{align*}
\else
$$\textstyle \pi_S := \Pr_{\text{branching}} [S] = (1-4p_h) \prod_{v\in S}\left(\frac{4p_k}{1-4p_k}\prod_{u \in \Lambda(k)} \left(1-4p_u\right) \right).$$
\fi
Since $p_u \leq 11\delta_0$ as $u$ is red, we have $1-4p_u \geq \exp(-5p_u)$. Therefore, $\prod_{u \in \Lambda(k)} \left(1-4p_u\right)\geq \exp\left(-5\sum_{u\in \Lambda(k)} p_u\right)$. If $k$ is out-dense, then $k$ has degree 1 thus has no children; the quantity is 1.
If $k$ is out-sparse, then $\sum_{u\in \Lambda(k)} p_u \leq 11\sum_{u\in \Lambda(k)} z_u \leq 11/c_3$. As $c_3 \geq 200$, we get that  $\exp\left(-5\sum_{u\in \Lambda(k)} p_u\right) \geq 3/4$. This implies $\pi_S \geq \prod_{k\in S}(3p_k) \geq 3^L\Pr[B_S]$.
Since $\sum_{S: |S| = L} \pi_S \leq 1$ (the branching process leads to one set $S$), we get $\sum_{S \ni h} \Pr[B_S] \leq \frac{1}{3^L}$.
Now, 
$\prod_{S: h\in S} \left(1-\x(B_S)\right) \geq \exp(-2\sum_{S:h\in S} \x(B_S)) = \exp(-4\sum_{S:h\in S} \Pr[B_S]) \geq \exp(-4/3^L)$.
\end{proof}
\noindent
To check the conditions of the asymmetric LLL is now just book keeping. 
For any bad event $\cB_S$ we have 
\ifdefined\CR
\begin{align*}
&\prod_{T:\cB_T\sim \cB_S}\left(1-\x(\cB_T) \right) \geq  \prod_{h \in S} \prod_{k \in \group(h)} \prod_{T \ni k} (1-\x(\cB_T))\\
&\qquad \geq \exp(-\frac{4}{3^L}|S||M_j|) \geq \exp(-\frac{4qL}{3^L}),
\end{align*}
\else
$$\prod_{T:\cB_T\sim \cB_S}\left(1-\x(\cB_T) \right) \geq  \prod_{h \in S} \prod_{k \in \group(h)} \prod_{T \ni k} (1-\x(\cB_T)) \geq \exp(-\frac{4}{3^L}|S||M_j|) \geq \exp(-\frac{4qL}{3^L}),$$
\fi
where $j$ is the heavy job which can go to $h$, and since $z_h \geq 1/q$, we have $|M_j| \leq q$. %\comment{deepc: again using upper bound on $\sum_{i\in M_j} z_i$.}
Since $L = 10\log q$, the RHS is at least $0.9$ whenever $q > 10$.

Fix an event $\cE_h$ where $\cE \in \{\cA,\cC,\cD\}$. %\cA_h$. We need to show
Let us calculate upperbound the product
\ifdefined\CR
\begin{align*}
&\quad \prod_{S:\cB_S\sim \cE_h}\left(1-\x(\cB_S)\right)\\
&\geq \prod_{k \in \Gamma(\cE_h)} \prod_{k' \in \group(k)} \prod_{S: k'\in S} (1-\x(\cB_S))\\
&\geq \exp(-\frac{4}{3^L}|\Gamma(\cE_h)||\group(k)|) \geq \exp(-\frac{4q^3}{3^L}),
\end{align*}
\else
$$\prod_{S:\cB_S\sim \cE_h}\left(1-\x(\cB_S)\right) \geq \prod_{k \in \Gamma(\cE_h)} \prod_{k' \in \group(k)} \prod_{S: k'\in S} (1-\x(\cB_S)) \geq \exp(-\frac{4}{3^L}|\Gamma(\cE_h)||\group(k)|) \geq \exp(-\frac{4q^3}{3^L}),$$
\fi
where $j$ is the heavy job which contains $k'$ in its $M_j$. Since $L > 10\ln q$, the RHS is at least $0.9$ since $q > 10$.
Similarly, 
\ifdefined\CR
\begin{align*}
&\quad\prod_{h: \cE_h \sim \cB_S} \left(1 - \x(\cE_h)\right)\\
&\geq \prod_{k\in S} \prod_{k'\in \group(k)} \prod_{h: k'\in \Gamma(\cE_h)} \left(1 - \x(\cE_h)\right)\\ 
&\geq \exp\left(-2\sum_{k\in S}\sum_{k'\in \group(k)}\sum_{h: k'\in \Gamma(\cE_h)} \x(\cE_h)\right).
\end{align*}
\else
$$\prod_{h: \cE_h \sim \cB_S} \left(1 - \x(\cE_h)\right) \geq \prod_{k\in S} \prod_{k'\in \group(k)} \prod_{h: k'\in \Gamma(\cE_h)} \left(1 - \x(\cE_h)\right) \geq \exp\left(-2\sum_{k\in S}\sum_{k'\in \group(k)}\sum_{h: k'\in \Gamma(\cE_h)} \x(\cE_h)\right).$$
\fi
The number of terms in the RHS is at most $|S|q^3_0 \leq q^4_0$. Since $\x(\cE_h) = 2\Pr[\cE_h] \leq q^{-6}$ (if $c_2 \geq  25$ and $c_1 \geq 6 $), the RHS is atleast $0.9$ for large enough $q$.

Finally, we have
\ifdefined\CR
\begin{align*}
&\prod_{k:\cE_k\sim \cE_h}\left(1-\x(\cE_k)\right) \geq \prod_{h' \in S_h} \prod_{h'' \in \group(h')} \prod_{k: h''\in S_k} (1-\x(\cC_k))\\
 &\geq \exp\left(-2\sum_{h'\in S_h}\sum_{h''\in \group(h')}\sum_{k: h''\in S_k} \x(\cC_k)\right).
\end{align*}
\else
$$\prod_{k:\cE_k\sim \cE_h}\left(1-\x(\cE_k)\right) \geq \prod_{h' \in S_h} \prod_{h'' \in \group(h')} \prod_{k: h''\in S_k} (1-\x(\cC_k)) \geq \exp\left(-2\sum_{h'\in S_h}\sum_{h''\in \group(h')}\sum_{k: h''\in S_k} \x(\cC_k)\right).$$
\fi
For any machine $h''$, the set $\{k: v\in S_k\}$ is precisely the neighbors of $h''$ (the machines of which $h''$ is a neighbor of are precisely the neighbors of $h''$). Therefore, the number of terms in the summation is at most 
$q^2\cdot q\cdot q^2 \leq q^5$. Since $\x(\cC_k) \leq q^{-6}$, we get that the RHS is at least $0.9$ for large enough $q$. In sum, we get that for any event $\cE \in \{\cA_h, \cB_S,\cC_h,\cD_h\}$, we have
\ifdefined\CR
\begin{align*}
&\x(\cE)\prod_{T:\cB_T\sim \cE} \left(1-\x(\cB_T)\right)\prod_{k:\cC_k\sim \cE}(1 - \x(\cC_k))\\
&\quad \times \prod_{k:\cD_k\sim \cE} (1- \x(\cD_k)) \quad >\quad 0.5\x(\cE) \quad = \quad \Pr[\cE],
\end{align*}
\else
$$\x(\cE)\prod_{T:\cB_T\sim \cE}\left(1-\x(\cB_T)\right) \prod_{k:\cC_k\sim \cE}(1 - \x(\cC_k)) \prod_{k:\cD_k\sim \cE} (1- \x(\cD_k)) > 0.5\x(\cE) = \Pr[\cE],$$
\fi
implying that $x$ satisfies the LLL condition. 

Finally note that the number of events of type $\cA,\cC,\cD$ are polynomially many, and given an assignment of heavy jobs, one can easily check if one of the $\cB_T$ occurs or not. Therefore, \Thm{hss-1} applies  
and this completes the proof of \Thm{small-q-and-m}.

\bibliographystyle{plain}
\bibliography{reflist}

\appendix
\section{A \texorpdfstring{$(2-\eps)$}{2-epsilon} algorithm for the \texorpdfstring{$(1,\eps)$}{(1,epsilon)}-restricted assignment problem.}
\label{app:large-eps}

\begin{theorem}
There exists a polynomial time algorithm which returns a $(2-\epsilon)$-approximation to the makespan minimization problem in $(1,\epsilon)$-restricted assignment instances. There exists a polynomial time $11/6 \approx 1.833$-factor algorithm to estimate the optimal makespan in $(1,\epsilon)$-restricted assignment instances.
\end{theorem}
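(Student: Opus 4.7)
The plan is to treat the two claims independently, relying on the Lenstra--Shmoys--Tardos (LST) rounding of the assignment LP for the approximation bound, and on Svensson's integrality-gap bound for the configuration LP for the estimation bound.

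For the $(2-\epsilon)$-approximation, I would first run LST on the assignment LP, which returns a schedule of makespan at most $T^\star + p_{\max} \leq T^\star + 1$, where $T^\star \leq \OPT$ is the LP optimum and $p_{\max} = 1$ is the largest job size. If there are no heavy jobs, every job has size $\epsilon$ and LST already yields a $(1+\epsilon) \leq (2-\epsilon)$-approximation, so assume at least one heavy job exists; this forces $T^\star \geq 1$ and the ratio $(T^\star+1)/\OPT \leq (\OPT+1)/\OPT$ is at most $2-\epsilon$ precisely when $\OPT \geq 1/(1-\epsilon)$. Since every integer makespan in a $(1,\epsilon)$-instance has the form $a+b\epsilon$ with $a,b \geq 0$ integers, the complementary regime $1 \leq \OPT < 1/(1-\epsilon)$ contains only a small constant number of candidate values: for $\epsilon \leq 1/2$ the only survivors are $\OPT \in \{1, 1+\epsilon\}$, and for $\epsilon > 1/2$ each machine can host at most one light job, so the entire problem reduces to bipartite $b$-matching and is solvable exactly.

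For each of the surviving small candidate values I would decide feasibility and construct the schedule directly in polynomial time. When $\OPT = 1$, no machine can simultaneously host a heavy and a light job, so the task separates into (i) a bipartite matching of heavy jobs to distinct machines and (ii) a bipartite $b$-matching of the remaining light jobs onto the non-matched machines with per-machine capacity $\lfloor 1/\epsilon \rfloor$, which is encoded as a polynomial-time flow question whose constraint matrix is totally unimodular. When $\OPT = 1+\epsilon$, the same framework applies with each heavy-hosting machine additionally allowed one light job, captured by adjusted integer capacities. If both candidates are infeasible, then $\OPT \geq 1+2\epsilon \geq 1/(1-\epsilon)$ and LST alone already gives the claimed bound.

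For the $11/6$-estimation of $\OPT$, my plan is to combine two complementary estimators. Svensson's upper bound of $5/3+\epsilon$ on the integrality gap of the configuration LP for $(1,\epsilon)$-instances yields a polynomial-time $(5/3+\epsilon)$-estimator: compute the configuration LP optimum $\OPT_f$ using its polynomial-time separation oracle and output $(5/3+\epsilon)\,\OPT_f$, which bounds $\OPT$ from above within that factor. Since $5/3+\epsilon \leq 11/6$ iff $\epsilon \leq 1/6$, this estimator already realizes the $11/6$ bound in that regime, and the threshold $1/6$ is chosen precisely so that $5/3+1/6 = 11/6$. For the complementary regime $\epsilon > 1/6$, every machine in a schedule of makespan at most $2$ hosts at most $\lfloor 2/\epsilon \rfloor \leq 11$ jobs, so the number of relevant configurations per machine is polynomial; one can then solve a bounded-support configuration LP, or perform a direct case analysis that exploits the two-size structure, to meet the $11/6$ bound.

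The main obstacle is handling the residual regimes cleanly: in Part~1, encoding the heavy-light exclusivity on each machine (for $\OPT = 1$) as a polynomial-time integer-feasible flow, for which a K\"onig/Hall-type argument and total unimodularity of the encoding are the key tools; in Part~2, tightening the configuration LP analysis for $\epsilon > 1/6$ to exactly match $11/6$, which relies on the bounded per-machine configuration support afforded by only two job sizes. In both cases, the obstacles are technical rather than conceptual, because the $(1,\epsilon)$ structure constrains the instance sharply enough to make the reductions go through.
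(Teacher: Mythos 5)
Your overall structure (a constructive $(2-\eps)$ algorithm combined with Svensson's $(5/3+\eps)$ estimator) matches the paper's, but both halves diverge from the paper's argument, and the second half has a genuine gap.

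For the $(2-\eps)$ approximation, the paper does not use LST at all. It directly builds a bipartite matching instance: for each machine it creates $\lfloor \OPT \rfloor$ slots that may receive \emph{any} assignable job and $\lfloor \OPT/\eps\rfloor - \lfloor \OPT\rfloor$ slots that may receive only assignable \emph{light} jobs. The optimal schedule induces a perfect matching covering all jobs (a machine carrying $a$ heavy and $b$ light jobs satisfies $a\leq \lfloor \OPT\rfloor$ and $a+b \leq \lfloor \OPT/\eps\rfloor$), and any such matching produces a schedule of load at most $(1-\eps)\lfloor \OPT\rfloor + \eps\lfloor \OPT/\eps\rfloor \leq (2-\eps)\OPT$; $\OPT$ is handled by enumerating the polynomially many candidate values $a+b\eps$. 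This single construction works for \emph{all} $\eps\in(0,1)$. Your LST-plus-case-split route is more convoluted and is not quite right as stated: the candidate set in $[1,\,1/(1-\eps))$ is not always $\{1,1+\eps\}$ (e.g.\ $\eps=0.3$ admits the candidate $4\eps=1.2$), and for $\eps>1/2$ the assertion that each machine hosts at most one light job is false, since $1/(1-\eps)>2>2\eps$ so a machine with load in your residual regime can host several light jobs. These are patchable, but the paper's slot argument avoids the case split entirely.

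For the $11/6$ estimator, the paper simply takes the minimum of the two bounds: $\min\{2-\eps,\,5/3+\eps\}\leq 11/6$ for every $\eps$, with the maximum of the min attained at $\eps=1/6$. You correctly locate the threshold $\eps=1/6$ but then, for $\eps>1/6$, do not use the fact that your own Part~1 already delivers a $(2-\eps)<11/6$ guarantee. Instead you propose a ``bounded-support configuration LP'' or an unspecified ``direct case analysis.'' This is a real gap: having polynomially many configurations per machine makes the LP solvable but has no bearing on its integrality gap, which for these problems can approach $2$, so solving the LP does not by itself certify an $11/6$ estimate. The missing idea is the trivial one-line combination of your two estimators rather than a new algorithm for $\eps>1/6$.
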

\begin{proof}
We construct a bipartite matching problem. Vertices on the right side correspond to jobs. Suppose $\OPT$ is the optimum makespan.  For each machine $i$, we create $\floor{\OPT}$ heavy slots and  $\floor{\OPT/\eps}$ - $\floor{\OPT}$ light slots. There is an edge between a heavy slot and all jobs that can be assigned to the machine; there is an edge between a light slot and all light jobs that can be assigned to the machine. It is easy that there is a matching that covers all jobs. Each machine gets a total load at most $\floor{\OPT} + \epsilon(\floor{\OPT/\eps} - \floor{\OPT}) = (1-\eps)\floor{\OPT} + \epsilon\floor{\OPT/\eps} \leq (2-\eps)\OPT$.

This gives a $2-\eps$ approximation for the problem. By combining this with the $(5/3+\eps)$-estimation algorithm of Svensson~\cite{Sve11}, we obtain an algorithm that estimates the make span  up to a factor of $\min \set{2-\eps, 5/3+\eps} \leq 11/6$.

\end{proof}

\section{Hardness of \texorpdfstring{$(1,\epsilon)$}{(1,epsilon)}-restricted assignment problem}
\label{app:hardness}
We complement our algorithmic result with the following hardness of approximation result.

\begin{theorem}\label{thm:3}
For any $\epsilon> 0$, 
it is NP-hard to approximate the makespan of the $(1,\epsilon)$-restricted assignment problem to a factor better than $7/6$.
\end{theorem}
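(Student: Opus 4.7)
The plan is to reduce from the NP-hard \textsc{Vertex Cover} decision problem: given $(G=(V,E),k)$, decide whether $G$ has a vertex cover of size at most $k$. I would construct a $(1,\epsilon)$-restricted assignment instance whose optimal makespan separates YES- and NO-instances by a multiplicative factor of $7/6$; this immediately gives the claimed hardness, since any $(7/6-\alpha)$-approximation would solve \textsc{Vertex Cover}.

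The construction I have in mind is the natural one. Create one machine $m_v$ per vertex $v\in V$; for each edge $e=\{u,v\}\in E$ create a unit-size heavy job $j_e$ with $M_{j_e}=\{m_u,m_v\}$; and at each machine $m_v$ pin a calibrated number of $\epsilon$-sized light jobs (private to $m_v$). A vertex cover $C$ then yields a schedule by orienting each edge to its endpoint in $C$, so that machines in $V\setminus C$ receive no heavy load. Conversely, in any valid schedule the set $X\subseteq V$ of machines that receive any heavy job is itself a vertex cover of $G$. The goal is to choose the light-job counts so that in the YES case the cover-induced schedule makes every machine's load equal to a common target $T$, while in the NO case a pigeonhole argument forces some machine to receive an extra heavy job and therefore load at least $\tfrac{7}{6}T$.

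Completeness is then immediate from the cover-to-schedule translation. For soundness, since $|X|\ge k+1$ in a NO-instance, a counting argument on $\sum_v (\text{heavy load on }m_v)=|E|$ shows that some machine must pick up strictly more heavy load than any size-$k$ cover-induced schedule assigns to it; combined with the calibrated padding, this forces its makespan past $\tfrac{7}{6}T$.

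The main obstacle I anticipate is the numerical calibration to produce the \emph{constant} ratio $7/6$ rather than a $(1+o(1))$ or $3/2$ gap. Since light jobs come in fixed size $\epsilon$ and heavy jobs in fixed size $1$, I plan to (i) pick the target $T$ to be a small constant, so that the additive ``$+1$'' of an extra heavy job is exactly a $\tfrac{1}{6}$ fraction of $T$, and (ii) most likely work on a bounded-degree (e.g.\ cubic) graph class, where \textsc{Vertex Cover} remains NP-hard and each machine's degree upper-bounds the heavy load it can accept. The remaining step, fixing the exact per-machine light-job count so that the YES-load equals $T$ and one extra heavy unit equals $\tfrac{7}{6}T$, is routine arithmetic.
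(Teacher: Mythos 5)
Your reduction starts from Vertex Cover as the paper does, but your gadget is essentially the transpose of the one that works, and this is not a cosmetic difference: it cannot produce a gap. You make the \emph{edges} into heavy jobs (so $j_e$ has $M_{j_e}=\{m_u,m_v\}$) and the \emph{per-machine padding} into private light jobs. The trouble is that with private light jobs the load on $m_v$ is $(\text{heavy load on }m_v) + \epsilon\ell_v$ where $\ell_v$ is fixed at construction time and independent of the schedule, and in a cubic graph the heavy load on any machine is between $0$ and $3$ in \emph{every} schedule, YES or NO. You cannot calibrate $\ell_v$ so that the YES-load is a common target $T$: whether $m_v$ gets $0$, $1$, $2$ or $3$ heavy jobs depends on which cover is chosen and how edges are oriented toward it, which you do not know when you set $\ell_v$. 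And the pigeonhole step does not fire: $\sum_v(\text{heavy load on }m_v)=|E|$ identically in both cases, and no machine can ever receive a fourth heavy job in a $3$-regular graph, so nothing ``must pick up strictly more heavy load.'' Uniform padding ($\ell_v\equiv\ell$) gives makespan $3+\epsilon\ell$ in both cases and a trivial ratio of $1$.

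The paper's construction flips the roles. The edges become \emph{light} gadgets: for each $e=\{u,v\}$ a bundle of $\frac{1}{3\epsilon}$ jobs of size $\epsilon$, each restricted to $\{m_u,m_v\}$, so the bundle carries total load $\frac{1}{3}$. The heavy jobs are $n-K(n)$ \emph{unrestricted} unit jobs (assignable to any machine). In a YES instance with cover $U$, $|U|=K(n)$, put the heavy jobs one per machine in $V\setminus U$ and push each edge's light bundle entirely onto its endpoint in $U$; cubic regularity gives each cover machine light load at most $3\cdot\frac{1}{3}=1$, so the makespan is $1$. In a NO instance, the $K(n)$ machines \emph{not} receiving heavy jobs cannot be a cover, so some edge $e=\{u,v\}$ has both endpoints loaded with a heavy job; the $\frac{1}{3}$ light load of $S_e$ must be split between $m_u$ and $m_v$, so one of them receives at least $1/6$ light load on top of its unit heavy job, makespan $\ge 7/6$. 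Note how the unrestricted heavy jobs are what lets the construction speak about the cover size $K(n)$ at all, and how the \emph{shared} (two-machine) light bundle is the only object whose placement depends on the schedule and therefore the only thing that can open a gap. Your private light jobs can never do this; to repair the proof you must move the Vertex Cover ``edge'' structure into the light jobs and decouple the heavy jobs from edges.
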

\begin{proof}
We reduce from the problem of finding a vertex cover in cubic graphs: there exists parameter $K(n)$ so that it is NP hard to decide whether an $n$-vertex cubic graph has a vertex cover of size $\le K(n)$ or not.
Given a vertex cover instance $G=(V,E)$ on $n$ vertices, we construct an instance of $(P|\gamma|C_{max})$ as follows: we have a machine for every vertex $v\in V(G)$, a set of $n - K(n)$ heavy jobs that can be assigned to any machine,
a set of $\frac{1}{3\eps}$ light jobs $S_e$ for every edge $e=(u,v)\in E(G)$ with job $j\in S_e$ having
$p_j = \eps$ and can be scheduled on machine $u$ or machine $v$.%\smallskip

If $G$ has a vertex cover of size $\le K(n)$, then we can find a schedule of makespan $1$. 
Let $U\subseteq V$ be the vertex cover; allocate all heavy jobs to machines corresponding to $V\setminus U$. For every edge $e = (u,v)$, we are guaranteed one of the end points lies in $U$ and thus doesn't have a heavy job. Allocate all jobs of $S_e$ to that machine. Any machine gets a total small load of at most $1$, and any machine getting a heavy job doesn't get a light job. %\smallskip

If $G$ doesn't have a vertex cover of size $\le K(n)$, then no matter how the heavy jobs are allocated, there must be an edge $e = (u,v)$ such that both $u$ and $v$ are allocated heavy jobs. The total load on one of these two machines is at least $1 + 1/6 = 7/6$.
\end{proof}

\section{Some Useful Tools}\label{app:useful-tools}
We state below two results that we will frequently utilize in our analysis. 

%We show that if none of the bad events occur then $\bnd(S) \geq 1/\poly\log q$ for every $S\subseteq X$.
%We then apply asymmetric LLL to show that with positive probability none of the bad events occur. By the constructive versions of LLL, we get a polytime algorithm.
\begin{theorem}
\label{thm:Chernoff}
Let $Z$ be the sum of independent scalar random variables each individually in range $[0,K]$ and $\mu = \Exp[Z]$. Then for any $\lambda \geq 7$, we have
\[ \Pr[Z \geq \lambda \mu] \leq  e^{-\lambda\mu/K}. \]
For any $\lambda  \in (0, 1)$, we have
\ifdefined\CR
\begin{align*}
\Pr[Z \geq (1+\lambda) \mu] &\leq  e^{-\lambda^2\mu/3K}, \\
\Pr[Z \leq (1-\lambda) \mu] &\leq  e^{-\lambda^2\mu/2K}.
\end{align*}
\else
\[\Pr[Z \geq (1+\lambda) \mu] \leq  e^{-\lambda^2\mu/3K}, \text{ and } \Pr[Z \leq (1-\lambda) \mu] \leq  e^{-\lambda^2\mu/2K}.\]
\fi
\end{theorem}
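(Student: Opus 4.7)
The plan is to prove this as a standard Chernoff--Hoeffding concentration bound via the moment-generating-function (Cram\'er--Chernoff) method, after a rescaling that eliminates the parameter $K$. First I would set $Y_i := Z_i/K \in [0,1]$, $Y := Z/K = \sum_i Y_i$, and $\mu_Y := \mu/K$. The event $Z \geq \lambda\mu$ is identical to $Y \geq \lambda\mu_Y$, and likewise for the $(1\pm\lambda)$ forms; the three claimed right-hand sides become $e^{-\lambda\mu_Y}$, $e^{-\lambda^2\mu_Y/3}$, $e^{-\lambda^2\mu_Y/2}$. So it suffices to prove the three bounds for independent $[0,1]$-valued summands (the case $K=1$).

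Next, for any $t>0$, Markov's inequality on $e^{tY}$ together with independence gives $\Pr[Y \geq a] \leq e^{-ta}\prod_i \mathbb{E}[e^{tY_i}]$. Since $Y_i \in [0,1]$, convexity yields $e^{tY_i} \leq 1 + (e^t-1)Y_i$, so $\mathbb{E}[e^{tY_i}] \leq 1 + (e^t-1)\mathbb{E}[Y_i] \leq \exp((e^t-1)\mathbb{E}[Y_i])$; multiplying gives the key bound
\[
\mathbb{E}[e^{tY}] \leq \exp\bigl((e^t-1)\mu_Y\bigr).
\]
An analogous inequality with $t<0$ (equivalently, applied to $-Y_i$ after shifting) handles the lower tail.

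Now I would optimize $t$ separately for each of the three regimes. For the large-deviation upper tail with $\lambda \geq 7$, choose $t = \ln\lambda$ and $a = \lambda\mu_Y$ to obtain $\Pr[Y \geq \lambda\mu_Y] \leq (e^{\lambda-1}/\lambda^{\lambda})^{\mu_Y}$; it then suffices to verify $\lambda \ln \lambda \geq 2\lambda - 1$ for $\lambda \geq 7$, which holds at $\lambda = 7$ (numerically $7\ln 7 \approx 13.62 > 13$) and extends by monotonicity since the derivative $\ln\lambda - 1$ is positive for $\lambda > e$. For the multiplicative upper tail with $\lambda \in (0,1)$, choose $t = \ln(1+\lambda)$ and $a = (1+\lambda)\mu_Y$ to obtain $\Pr[Y \geq (1+\lambda)\mu_Y] \leq (e^\lambda/(1+\lambda)^{1+\lambda})^{\mu_Y}$, and the inequality $(1+\lambda)\ln(1+\lambda) - \lambda \geq \lambda^2/3$ on $(0,1)$ yields the exponent $-\lambda^2\mu_Y/3$. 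The lower-tail case is analogous, picking $t = -\ln(1/(1-\lambda))$ and using $(1-\lambda)\ln(1-\lambda) + \lambda \geq \lambda^2/2$ on $(0,1)$.

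The main obstacle, such as it is, lies only in verifying those three elementary calculus inequalities with the right numerical constants. Each follows in one line by examining the Taylor expansion of $h(\lambda) = (1\pm\lambda)\ln(1\pm\lambda) \mp \lambda$ about $\lambda = 0$ (whose leading term is $\lambda^2/2$, giving the lower-tail exponent cleanly, while a slightly looser coefficient $1/3$ absorbs the third-order correction on the upper side); the $\lambda\ln\lambda \geq 2\lambda - 1$ check for $\lambda \geq 7$ is handled by the monotonicity argument above. Once the $K=1$ bounds are established, undoing the rescaling $\mu_Y = \mu/K$ in the exponents recovers the three stated inequalities, completing the proof.
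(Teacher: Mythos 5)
Your proof is correct and takes essentially the same route as the paper's: rescale by $1/K$ to reduce to $[0,1]$-valued summands, obtain the standard multiplicative Chernoff bounds $\bigl(e^{\pm\delta}/(1\pm\delta)^{1\pm\delta}\bigr)^{\mu}$, and verify the three elementary exponent inequalities. The only cosmetic difference is that the paper cites the standard bounds and checks the exponent inequalities via $\ln(1\pm\delta)\gtrless \pm\tfrac{2\delta}{2\pm\delta}$, whereas you derive the moment-generating-function bound explicitly and verify the same inequalities by Taylor expansion; both verifications are routine and equivalent.
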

\begin{proof}
All those bounds are simple application of standard Chernoff bounds.  Let $X$ be the sum of $n$ independent random variables, each take value in $[0, 1]$.  Let $\mu = \E[X]$. Then for every $\delta > 0$, we have 
\begin{equation*}
\Pr[X \geq (1 + \delta)\mu] \leq \left(\frac{e^{\delta}}{(1+\delta)^{1+\delta}}\right)^\mu.
\end{equation*}
For every $\delta \in (0, 1)$, we have 
\begin{equation*}
\Pr[X \leq (1-\delta)\mu] \leq \left(\frac{e^{-\delta}}{(1-\delta)^{1-\delta}}\right)^\mu.
\end{equation*}

To prove the theorem, we can scale the random variables by a factor of $1/K$ and then mean of $Z$ is changed to $\mu/K$. Thus, we can assume $K=1$.  

The first inequality is obtained by setting $\delta = \lambda - 1$ and observing that $e^{\lambda - 1}/\lambda^\lambda \leq e^{-\lambda}$ for $\lambda \geq 7$.  For the second inequality and third inequality, let $\delta = \lambda$. The second inequality holds since $e^{\delta}/(1+\delta)^{1+\delta} = \exp\big(\delta - (1+\delta)\ln(1+\delta)\big) \leq \exp\big(\delta - (1+\delta)\frac{2\delta}{2+\delta}\big) = \exp\left(\frac{-\delta^2}{2+\delta}\right) \leq \exp(-\delta^2/3)$. The third inequality holds since $e^{-\delta}/(1-\delta)^{1-\delta} = \exp\big(-\delta - (1-\delta)\ln(1-\delta)\big) \leq \exp\big(-\delta - (1-\delta)(-\frac{2\delta}{2-\delta })\big) = \exp\left(-\frac{\delta^2}{2-\delta}\right) \leq \exp(-\delta^2/2)$.

%This follows from Bernstein's inequality which states if $Y = Y_1 + \cdots Y_n$ where $Y_i$s are $0$-mean, mutually independent, and $|Y_i| \leq K$,  then for any $t > 0$,
%$$\Pr[Y > t] \leq \exp\left(-\frac{\frac{1}{2}t^2}{\sum_{i=1}^n \Exp[Y_i^2] + tK/3}\right).$$
%For our application, set $Y_i = X_i - \Exp[X_i]$. We get $|Y_i| \leq K$ since $X_i \in [0,K]$. Furthermore,
%$\Exp[Y^2_i] = \Exp[X^2_i] - (\Exp[X_i]^2) \leq K\Exp[X_i] $  implying $\sum_{i=1}^n \Exp[Y^2_i] \leq K\mu$.
%Therefore, $\Pr[Z \geq \lambda \mu] = \Pr[Y \geq (\lambda -1)\mu]\leq \exp(- \frac{\frac{1}{2}(\lambda - 1)^2\mu^2}{K\mu + (\lambda - 1)\mu K/3})$. 
%If $\lambda \geq 12$, we have $\frac{(\lambda-1)^2}{2+2(\lambda-1)/3} \geq \lambda$, implying the first part.
%Setting $t = \lambda\mu$, gives $\Pr[Z > (1+\lambda)\mu] \leq \exp\left(- \frac{\frac{1}{2}\lambda^2\mu^2}{K\mu + K\mu/3}\right) < \exp(-\frac{\lambda^2\mu}{3K})$.
%To get the other tail, use $Y_i = \Exp[X_i]-X_i$.
\end{proof}
\begin{theorem}[Asymmetric LLL]\label{thm:lll}
Let ${\cal E} = \{ E_1,\ldots, E_m \}$ be a finite collection of (bad) events in a probability space.
For each $E_i$, let $\Gamma(E_i)$ denote a subset of events such that $E_i$ is independent of each event in ${\cal E}\setminus (E_i \cup \Gamma(E_i))$. Then if there exists an assignment 
$\x: {\cal E} \rightarrow (0,1)$ satisfying the property $\Pr[E_i]\leq \x(E_i)\cdot \prod_{E_j \in \Gamma(E_i)} \left(1-\x(E_j)\right)$,
the probability that none of the events in ${\cal E}$ occurs is at least $\prod_i \left(1 - \x(E_i)\right)$.
\end{theorem}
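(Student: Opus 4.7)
The plan is to prove the asymmetric LLL by the standard induction argument, establishing the stronger intermediate claim
$$\Pr\!\left[E_i \;\Big|\; \bigcap_{E_j \in S} \bar E_j\right] \;\leq\; \x(E_i)$$
for every event $E_i$ and every subset $S \subseteq \cE \setminus \{E_i\}$ such that the conditioning event has positive probability. Once this bound is in hand, the conclusion of the theorem follows by a telescoping product: writing $\Pr\!\left[\bigcap_i \bar E_i\right] = \prod_{i=1}^m \Pr\!\left[\bar E_i \,\big|\, \bigcap_{j<i} \bar E_j\right]$, each factor is at least $1 - \x(E_i)$, giving the desired lower bound $\prod_i (1 - \x(E_i))$.

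The inductive claim is proved by induction on $|S|$. The base case $|S| = 0$ is immediate from the hypothesis, since $\Pr[E_i] \leq \x(E_i) \prod_{E_j \in \Gamma(E_i)} (1-\x(E_j)) \leq \x(E_i)$. For the inductive step, I would split $S = S_1 \cup S_2$ where $S_1 = S \cap \Gamma(E_i)$ and $S_2 = S \setminus \Gamma(E_i)$, and then write
$$\Pr\!\left[E_i \,\Big|\, \bigcap_{E_j \in S} \bar E_j\right] \;=\; \frac{\Pr\!\left[E_i \cap \bigcap_{E_j \in S_1} \bar E_j \,\Big|\, \bigcap_{E_j \in S_2} \bar E_j\right]}{\Pr\!\left[\bigcap_{E_j \in S_1} \bar E_j \,\Big|\, \bigcap_{E_j \in S_2} \bar E_j\right]}.$$
The numerator is bounded above by $\Pr[E_i \mid \bigcap_{E_j \in S_2} \bar E_j] = \Pr[E_i]$, using the independence of $E_i$ from the events in $S_2$ (since $S_2 \cap \Gamma(E_i) = \emptyset$). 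The denominator is bounded below by enumerating $S_1 = \{E_{j_1}, \ldots, E_{j_k}\}$, applying the chain rule, and invoking the inductive hypothesis on each factor (each conditioning set has size strictly less than $|S|$): this yields
$$\Pr\!\left[\bigcap_{E_j \in S_1} \bar E_j \,\Big|\, \bigcap_{E_j \in S_2} \bar E_j\right] \;\geq\; \prod_{E_j \in S_1} (1 - \x(E_j)) \;\geq\; \prod_{E_j \in \Gamma(E_i)} (1 - \x(E_j)).$$
Combining the two bounds with the hypothesis on $\x$ gives the desired inequality $\leq \x(E_i)$.

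The main subtlety, which I would state carefully, is that the induction is well-founded: the inductive hypothesis is invoked with conditioning sets of size at most $|S| - 1$, so the recursion terminates. A secondary technical point is to ensure that all conditional probabilities appearing in the argument are well-defined; this follows because, by induction, each denominator of the form $\Pr[\bigcap_{E_j \in T} \bar E_j]$ appearing along the way is strictly positive (being at least $\prod_{E_j \in T}(1 - \x(E_j)) > 0$ since each $\x(E_j) \in (0,1)$). Neither step requires any real calculation beyond the conditional-probability identity and monotonicity; the only mild obstacle is organizing the induction so that the split into $S_1, S_2$ is clean and the appeal to independence through $S_2$ is explicit.
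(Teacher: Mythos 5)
The paper never proves this statement: it is quoted in the appendix as a known tool (the classical asymmetric Lov\'asz Local Lemma), with only the constructive versions attributed to Moser--Tardos and Haeupler et al. Your proposal is the standard Erd\H{o}s--Lov\'asz/Spencer induction proof, and it is correct: the induction on $|S|$ with the split $S = S_1 \cup S_2$ ($S_1 = S\cap\Gamma(E_i)$), the upper bound on the numerator, the chain-rule lower bound on the denominator via the inductive hypothesis, the monotonicity step $\prod_{E_j\in S_1}(1-\x(E_j)) \ge \prod_{E_j\in\Gamma(E_i)}(1-\x(E_j))$, and the final telescoping product are exactly the classical argument, and your care about positivity of the conditioning events is the right bookkeeping. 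One caveat you should make explicit: in bounding the numerator you use that $E_i$ is independent of the \emph{conjunction} $\bigcap_{E_j \in S_2}\bar E_j$, i.e., that $E_i$ is mutually independent of the whole family of events outside $\Gamma(E_i)$, not merely independent of each such event separately, which is what the theorem's wording literally says. The lemma is actually false under the literal pairwise-independence reading, so the hypothesis must be read (and in your write-up should be stated) in the mutual-independence form; this is harmless for the paper, since in all of its applications the events are determined by disjoint sets of independent random variables, which gives the required mutual independence.
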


While the version stated above is only an existence statement, the recent work of Moser and Tardos~\cite{MT10}, and Haeupler et al.~\cite{HSS11} has given polynomial-time algorithms for finding a solution that avoids all bad events. We  use the notation $E_j \sim E_i$ to indicate that $E_j \in \Gamma(E_i)$.
Let $V = \set{v_1, v_2, \cdots, v_n}$ be $n$ independent random variables. Let ${\cal E} = \set{E_1, E_2, \cdots, E_m}$ be a finite collection of (bad) events where each $E_i$ only depends on a subset $V_i \subseteq V$ of variables.   Let $\Gamma(E_i) = \set{E_{i'}: i' \neq i, V_i \cap V_{i'} \neq \emptyset}$.  

The Moser-Tardos (MT, henceforth) algorithm does the following: a) Initially sample $v_i$'s independently, and b) until all $E_i$'s are dissatisfied, pick an arbitrary satisfied $E_i$ and resample the $v_j$'s present in $V_i$.
Moser and Tardos~\cite{MT10} showed that if the LLL condition held, the above algorithm terminated in $O\left(\sum_{i=1}^m \x(E_i)(1-\x(E_i))^{-1}\right)$ iterations. 
This suffices for many applications; however there are two issues -- a) $m$ could be superpolynomial in $n$, and b) given a setting of $v_i$'s there may not be an efficient method to detect if a satisfied $E_i$ exists or not.
Haeupler et al.~\cite{HSS11} addressed these issues in the following ways. 

\begin{theorem}(Paraphrasing of Theorem 3.1 in~\cite{HSS11})\label{thm:hss-1}
Suppose the LLL condition holds, and let $\delta := \min_{j} \x(E_i)\prod_{j\sim i} (1-\x(E_j))$. Then the expected number of resamplings of the MT algorithm is at most 
$n\log(1/\delta)\max_i (1-\x(E_i))^{-1}$.
\end{theorem}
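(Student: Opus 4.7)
The statement is a paraphrasing of a known result of Haeupler, Saha, and Srinivasan, so my proof would essentially recover theirs. The plan is to build on the witness-tree framework of Moser--Tardos, then replace their event-by-event accounting with a variable/potential argument that yields the factor of $n$ rather than $m$ and the logarithmic dependence on $\delta$.

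First I would recall the witness-tree machinery. For any fixed execution of the MT algorithm, each resampling of an event $E_i$ can be associated with a proper labelled tree $T$ rooted at $E_i$: we scan the execution log backwards, and for each previous resampling of some $E_j$ that shares a variable with an already-placed node, we attach a child labelled $E_j$ at the highest compatible position. Two standard facts are needed. (i) \emph{Validation:} the probability that a given proper tree $T$ appears as the witness tree of some resampling in a coupled ``resampling table'' execution is at most $\prod_{v\in T}\Pr[E_{\ell(v)}]$; this just uses independence of the table entries. (ii) \emph{Injectivity:} distinct resamplings yield distinct witness trees. Thus the expected total number of resamplings is bounded by the sum, over all proper witness trees, of the tree's probability.

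Next I would plug in the LLL condition $\Pr[E_j]\le \x(E_j)\prod_{k\sim j}(1-\x(E_k))$ into each node of a witness tree. Following the Moser--Tardos generating-function calculation, the sum over all proper trees rooted at any particular event $E_i$ telescopes to $\x(E_i)/(1-\x(E_i))$, giving the classical bound $\sum_i \x(E_i)/(1-\x(E_i))$. To sharpen this to $n\log(1/\delta)\max_i(1-\x(E_i))^{-1}$, I would pass from summing over events to a potential argument on the execution itself. Define $\Phi_t$ as the ``log-likelihood slack'' after $t$ resamplings: using the LLL condition at every step, each resampling of $E_j$ contributes at least $\log(1/\x(E_j)(1-\x(E_j))^{-1}\prod_{k\sim j}(1-\x(E_k))^{-1})\ge \log(1/\delta)$ to a total that can be bounded by $n$ (the number of independent random variables), because only touched variables contribute and each variable carries bounded ``entropy mass.'' Dividing then yields the claimed bound on expected resamplings, with the $\max_i(1-\x(E_i))^{-1}$ factor coming from the worst-case per-event charge.

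The main obstacle in executing this plan is the step that replaces the event count $m$ with the variable count $n$. Naively, summing $\x(E_i)/(1-\x(E_i))$ over events already gives a bound, but it can be superpolynomial when there are many bad events (as is the case for us with the $\cB_T$ family). The key insight to overcome this is that witness trees are controlled by the underlying \emph{variables}, not the events: two resamplings that touch disjoint variable sets are effectively independent and can be charged separately. Turning this intuition into the quantitative $n\log(1/\delta)$ factor is the only subtle part; everything else is bookkeeping on top of the Moser--Tardos validation lemma.
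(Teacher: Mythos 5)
The paper does not prove this statement---it is explicitly a ``paraphrasing of Theorem~3.1 in~\cite{HSS11}'' and is invoked as a black box, so there is no in-paper proof to compare against.

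Evaluating your sketch on its own merits: the Moser--Tardos scaffolding you describe (witness trees, the validation lemma, injectivity, and the generating-function telescope to $\sum_i \x(E_i)/(1-\x(E_i))$) is correct and standard. However, the step you yourself flag as the crux---replacing the sum over events by an $n\log(1/\delta)$ bound via a ``potential argument''---is not actually carried out, and as written it does not clearly work. The potential $\Phi_t$ is never defined as a concrete function of the execution state; the claimed per-resampling decrease $\log\bigl(1/\bigl[\x(E_j)(1-\x(E_j))^{-1}\prod_{k\sim j}(1-\x(E_k))^{-1}\bigr]\bigr)\ge\log(1/\delta)$ is not established and is not even manifestly true (with $\delta=\min_i\x(E_i)\prod_{j\sim i}(1-\x(E_j))$, the $(1-\x(E_j))^{-1}$ factor in your expression pushes in the wrong direction); and the assertion that the total is bounded by $n$ because ``each variable carries bounded entropy mass'' is a metaphor rather than an argument. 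What a complete proof needs at this point is an explicit charging of witness-tree nodes to the $n$ underlying random variables, combined with a bound showing that, because of the slack $\delta$ in the LLL inequality, witness trees of size or depth exceeding $\Theta(\log(1/\delta))$ contribute negligibly in expectation. Your sketch gestures at the right intuition---that variables, not events, should be the unit of accounting---but a reader could not reconstruct a proof of the quantitative bound from what you have written.
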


The above theorem takes care of situations where the number of events may be superpolynomial; however, `efficient verifiability' occurs, that is, given a setting of $v_i$'s one can detect a satisfied  $E_j$ or assert none hold.
To take care of issue (b) above, Haeupler et al.~\cite{HSS11} modified the MT algorithm as follows. It parametrizes the events 
with a set ${\cal E}' \subseteq {\cal E}$ of {\em core events}. Randomly and independently assign a value to each random variable in $V$.  In each iteration, we check if any bad event $E_i \in {\cal E}'$ happens. If there is such a bad event $E_i \in {\cal E}'$, we resample all variables in $V_i$ and start a new iteration. Otherwise we terminate the algorithm return the current assignment.  
%This algorithm is a generalization of the MT algorithm where ${\cal E}' = {\cal E}$.

\begin{theorem}(Paraphrasing of Theorem 3.4 in~\cite{HSS11})\label{thm:hss-2}
Suppose there exists an $\eps \in (0,1)$ and assignment $\x:{\cal E} \mapsto (0,1-\epsilon)$ such that a slightly stronger-than-LLL condition holds:
\begin{equation}
\label{eq:HSS}
{\mathrm{Pr}}^{1-\eps}[E_i]\leq \x(E_i)\cdot \prod_{E_j \in \Gamma(E_i)} \left(1-\x(E_j)\right)
\end{equation}
Suppose further, that $\log(1/\delta) \leq \poly(n)$. Then 
\begin{enumerate}
\item For any $p \geq 1/\poly(n)$, the set ${\cal E}' := \{E_i: \Pr[E_i] \geq p\}$ is of size at most $\poly(n)$.
\item With probability $(1-n^{-c})$, the HSS algorithm with core events ${\cal E}'$ terminates after $O(n\log n)$ resamplings and returns an assignment such that no event in $\cal E$ occurs.
\end{enumerate} 
\end{theorem}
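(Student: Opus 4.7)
The theorem has two parts and I would address them in order.

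For Part~1, the argument is a short counting consequence of Theorem~3.1. The strengthened LLL condition gives $\Pr[E_i]^{1-\eps} \leq \x(E_i)\prod_{E_j\sim E_i}(1-\x(E_j)) \leq \x(E_i)$, so every $E_i \in \cE'$ satisfies $\x(E_i) \geq p^{1-\eps}$. Theorem~3.1 bounds the expected total number of MT resamplings by $\sum_i \x(E_i)/(1-\x(E_i)) \leq n\log(1/\delta)\cdot(1/\eps)$, which is $\poly(n)$ since $\log(1/\delta) \leq \poly(n)$ and $\x(E_i) < 1-\eps$. Because $\x(E_i)/(1-\x(E_i)) \geq \x(E_i) \geq p^{1-\eps}$ for every $E_i \in \cE'$, we obtain $|\cE'|\cdot p^{1-\eps} \leq \poly(n)$, hence $|\cE'| = \poly(n)$ whenever $p \geq 1/\poly(n)$.

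For Part~2, I would adapt the Moser--Tardos witness-tree framework to the core-event setting. For each core-event resampling in the HSS execution log, build a witness tree in the standard MT manner. The MT lemma bounds the probability that a given tree $\tau$ appears in the log by $\prod_{E_i \in \tau}\Pr[E_i]$, and the strengthened LLL condition with exponent $1-\eps$ yields a Galton--Watson-style tail that makes witness trees of size $\Omega(\log n)$ super-polynomially unlikely. Summing over the $\poly(n)$ core events from Part~1 and using this exponential tail, the algorithm terminates within $O(n\log n)$ resamplings with probability at least $1-n^{-c}$.

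The delicate part is showing that the terminating assignment avoids \emph{every} event in $\cE$, including non-core ones, despite the algorithm never explicitly checking them. I would introduce ``phantom'' witness trees: if a non-core $E_j$ held at termination, one could append a virtual resampling of $E_j$ to the execution log and produce a witness tree rooted at $E_j$. The usual tree-probability bound applies, and crucially the $\eps$-slack in the strengthened LLL condition lets one factor out $\Pr[E_j]^{\eps}$ per tree, absorbing the sum over the (possibly super-polynomially many) non-core events into a total failure bound $O(p^{\eps}\cdot\poly(n)) = n^{-\Omega(1)}$. The main obstacle is making this virtual-resampling analysis rigorous, since MT's standard bookkeeping is driven by actual resamplings; the fix is a careful coupling between the HSS terminating distribution and a hypothetical one-step MT continuation, closed up using the $(1-\eps)$-strengthened LLL to gain the necessary union-bound slack.
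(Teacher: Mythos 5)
This statement is not proved in the paper at all: it is a paraphrase of Theorem~3.4 of \cite{HSS11}, quoted as a black box (just like \Thm{hss-1}), so there is no in-paper proof to compare against. Your proposal is essentially an attempt to reconstruct the HSS proof, and its overall strategy does match theirs (witness trees for the termination bound, an $\eps$-slack argument to control non-core events). Within Part~1, however, there is a logical slip: you derive the bound $\sum_i \x(E_i)/(1-\x(E_i)) \leq \poly(n)$ \emph{from the statement} of Theorem~3.1, but that theorem only asserts that the expected number of resamplings is at most $n\log(1/\delta)\max_i(1-\x(E_i))^{-1}$, and the Moser--Tardos guarantee bounds the expected resamplings \emph{above} by $\sum_i \x(E_i)/(1-\x(E_i))$ --- so the inequality you need goes in the wrong direction. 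The desired bound on the sum is true, but it requires reproving the internal lemma of \cite{HSS11}: for each variable $v$, all events depending on $v$ are mutual neighbors, so $\prod_{j: v\in V_j}(1-\x(E_j)) \geq \delta$, whence $\sum_{j: v\in V_j}\x(E_j) = O(\log(1/\delta))$, and summing over the $n$ variables gives $\sum_i \x(E_i) \leq O(n\log(1/\delta))$. With that lemma in hand, your counting $|{\cal E}'|\, p^{1-\eps} \leq \poly(n)$ is fine.

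The genuine gap is in Part~2, and you flag it yourself: the claim that the terminating assignment avoids every \emph{non-core} event is exactly the content of the conditional-distribution bound in \cite{HSS11} (that for any event $B$, the probability $B$ holds in the output of the resampling algorithm is at most $\Pr[B]\prod_{E_j\sim B}(1-\x(E_j))^{-1}$, which together with \eqref{eq:HSS} yields a bound of roughly $\Pr[B]^{\eps}\x(B)$ per non-core event, summable against $\sum_i\x(E_i)\leq\poly(n)$). Your ``phantom witness tree'' sketch is indeed the right mechanism for proving this --- one appends a virtual resampling of $B$ and runs the witness-tree coupling rooted at $B$ --- but as written you only name the obstacle (``a careful coupling \ldots'') without carrying out the coupling or the tree-probability accounting, and without verifying that the Galton--Watson bound still applies when the algorithm only ever resamples core events. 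Since this is the step that distinguishes Theorem~3.4 from the plain MT analysis, the proposal as it stands is an outline of the HSS argument rather than a proof of it.
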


%\comment{sk: Need to check that we satisfy all conditions for invoking the algorithmic version.}

\section{Omitted details from \Sec{reducing-p-and-q}}
\label{app:reducing-p-and-q}
\connectedwitness*
\begin{proof}
Consider all the (weakly) connected components of $G_w[S]$(the sub-graph of $G_w$ induced by $S$). There must be some connected component induced by $\tilde S \subseteq S$ such that $\cardinal{T \cap \tilde S} + w_{\tilde S, \tilde S} > (2-\delta)\cardinal{\tilde S}$, since summing up the left side over all connected components $\tilde S$ gives $\cardinal{T} + w_{S,S}$ and summing up the right side gives $(2-\delta)\cardinal{S}$. Thus, $\left(\tilde S, \tilde T =T \cap \tilde S\right)$ is a connected $\delta$-witness.
\end{proof}

\Clm{alpha-good-equivalent-to-no-witness} follows
immediately from \Thm{alpha-good-equivalent-to-no-bad-set} and \Clm{witness-implies-connected-witness}.

\subsection{Proof of \Lem{reduce-q}}
\label{app:reducing-p-and-q-1}

Before the proof of the Lemma, we need one simple claim. 
\begin{claim}
\label{clm:degree-is-small}
For any $(p, q, \theta)$-canonical instance, % $(\set{M_j\!:j \!\in\! {J_\sfH}},\! w,\! z)$, 
we have ${w_{M,h}} \leq 1.1, {w_{h,M}} \leq 1.1p$ for every $h \in M$. 
\end{claim}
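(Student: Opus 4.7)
The plan is to derive both inequalities directly by rearranging Property~(A4) of the canonical instance, namely
\[
z_h + (1-z_h)\sum_{k\in M} w_{k,h} + \sum_{k\in M} w_{h,k}\, z_k \;\leq\; 1+\theta,
\]
and invoking the structural bounds $z_h \in [0,0.4]$ (Property~(A3)) together with the granularity bound $z_k \geq 1/p$ whenever $z_k > 0$ (Property~(A5)). There is no real obstacle here; the claim is essentially a bookkeeping consequence of the definition.

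For the first bound, I drop the nonnegative terms $z_h$ and $\sum_k w_{h,k} z_k$ to obtain $(1-z_h)\, w_{M,h} \leq 1+\theta$. Since $z_h \leq 0.4$, dividing through gives $w_{M,h} \leq (1+\theta)/0.6$, which is at most $1.1$ provided $\theta$ is small enough (as is the case whenever this claim is invoked downstream in Section~5).

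For the second bound, I instead drop the first two terms of Property~(A4) to get $\sum_{k\in M} w_{h,k}\, z_k \leq 1+\theta$. Here one must be slightly careful about the self-loop $k=h$: by Property~(B2) of the canonical instance, $w_{h,h} > 0$ forces $z_h = 0$, so that term contributes nothing to the weighted sum. For all $k\neq h$ with $w_{h,k} > 0$, Property~(A5) gives $z_k \geq 1/p$, hence
\[
\tfrac{1}{p}\sum_{k\neq h} w_{h,k} \;\leq\; \sum_{k\neq h} w_{h,k}\, z_k \;\leq\; 1+\theta,
\]
so $\sum_{k\neq h} w_{h,k} \leq p(1+\theta)$. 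The remaining self-loop contribution $w_{h,h}$ is bounded by $1+\theta$ directly (when $z_h = 0$, Property~(A4) reads $w_{h,h} \leq 1+\theta$). Adding the two contributions yields $w_{h,M} \leq (p+1)(1+\theta) \leq 1.1\,p$ for $p \geq 1$ and $\theta$ small, which is exactly the claimed bound.
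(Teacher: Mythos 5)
Your first bound is where the argument breaks. After dropping both $z_h$ and $\sum_k w_{h,k} z_k$ from Property~(A4), you are left with $(1-z_h)\,w_{M,h} \leq 1+\theta$, hence $w_{M,h} \leq (1+\theta)/(1-z_h) \leq (1+\theta)/0.6$. But $(1+\theta)/0.6 \geq 1/0.6 > 1.66$ for \emph{every} $\theta \geq 0$, so no amount of ``$\theta$ small enough'' rescues the claimed $w_{M,h}\leq 1.1$. The error is in discarding the $z_h$ term: that term is exactly what cancels the $1$ in the numerator. Keeping it gives $(1-z_h)\,w_{M,h} \leq 1+\theta-z_h$, hence
\[
w_{M,h} \;\leq\; \frac{1+\theta-z_h}{1-z_h} \;=\; 1 + \frac{\theta}{1-z_h} \;\leq\; 1 + \frac{\theta}{0.6},
\]
which is $\leq 1.1$ for the (small) values of $\theta$ in play. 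This is the route the paper takes, and the cancellation is essential — the bound $w_{M,h}\leq 1.1$ is genuinely tight up to the $\theta/(1-z_h)$ slack, not a ``drop-everything-and-divide'' bound.

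Your treatment of the second bound is actually more careful than the paper's: the paper asserts ``$z_k \geq 1/p$ if $w_{h,k} > 0$'' without noting that this fails at $k=h$ (if $w_{h,h}>0$ then Property~(B2) forces $z_h = 0$, not $z_h\geq 1/p$); your separate bound $w_{h,h}\leq 1+\theta$ via (A4) at $z_h=0$ correctly patches this. The price is the weaker aggregate $(p+1)(1+\theta)$, which is only $\leq 1.1p$ once $p$ is roughly $\geq (1+\theta)/(0.1-\theta)$ (about $10$–$20$). That suffices in every place the claim is invoked, since there $p \geq q_0$ is a large constant, but you should flag that the inequality as stated is not literally uniform over all $p\geq 1$ under your decomposition.
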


\begin{proof}
Since $z_h + {w_{M,h}}(1-z_h) \leq 1 + \theta \leq 1.05$ and $z_h \leq 1/2$, we have $w_{M,h} \leq \frac{1.05-z_h}{1-z_h} \leq 1.1$.  

Consider any machine $h \in M$. Notice that $\sum_{k\in M} w_{h,k} z_{k} \leq  1 + \theta \leq 1.1$ and $z_{k} \geq 1/p$ if $w_{h,k} > 0$. We have $w_{h,M} \leq \frac{1.1}{(1/p)} = 1.1p$.
\end{proof}

\reduceq*
\begin{proof}

For each pair $(h, k)$ with $0 < w_{h, k} < 1/q' = 2/q$, we let $w'_{h,k} = 1/q'$ with probability $q'w_{h,k}$ and let $w'_{h,k} = 0$ with probability $1-q'w_{h,k}$.  For all other pairs $(h,k)$, we let $w'_{h,k} = w_{h,k}$. Then $\cI' = (\set{M_j:j\in J_\sfH}, w', z)$ is the new canonical instance. 
\begin{enumerate}
\item ${\calA}_h$, for every machine $h \in M$:  ${\calA}_h$ occurs if $z_h+w'_{M,h}(1-z_h) + \sum_{k \in M}w'_{h,k}z_{k} > 1 + \theta'$;
\item ${\calB}_{S, T}$, for every connected $\delta$-witness $(S, T)$ of $\cI$: ${\calB}_{S, T}$ occurs if $(S, T)$ is not a $\delta'$-witness of $\cI'$.
\end{enumerate}

If none of the bad events occur, then $\cI'$ is a $(p,q/2,\theta')$-canonical instance; furthermore, any $\delta'$-good assignment for $\cI'$ must be a $\delta$-good assignment for $\cI$ since
otherwise $\cB_{S,T}$ would occur for some connected $\delta$-witness.
In the rest of the proof, we use LLL to show that none of the bad events occur with positive probability. Using the techniques of ~\cite{MT10,HSS11}, there is a polynomial time procedure which obtains $\cI'$ with the desired property. 

Focus on the quantity $W$ on the left side of the inequality defining $\calA_h$.  All random variables (the $w'_{h,k}$s) in $W$ take value in $\set{0, 1/q'}$ and the coefficient before each random variable in $Z$ is at most 1; moreover, $\E(w'_{h,k}) = w_{h,k}$.
By Property~\ref{property:canonical-instance-load-small}, we have $z_h + {w_{M,h}}(1-z_h) + \sum_{k \in M}w_{h,k}z_{k} \leq 1+\theta < 1.1$.    The Chernoff bound in \Thm{Chernoff} gives that the probability that ${\calA}_h$ occurs is at most $\exp\left(-(\theta' - \theta)^2q'/3.3\right)\leq \exp(-8\log q) = q^{-8}$.  

Now consider the bad event ${\calB}_{S, T}$. Since $(S, T)$ is a $\delta$-witness of $\cI$, we have $\cardinal{T} + {w_{S,S}} > (2-\delta)\cardinal{S}$.   ${\calB}_{S, T}$ occurs if $\cardinal{T} + {w'_{S,S}} \leq (2-\delta')\cardinal{S}$. Again, by Chernoff bound, the probability that ${\calB}_{S, T}$ occurs is at most $\exp\left(-(\delta' - \delta)^2q'\cardinal{S} /4\right) \leq e^{-8(\log q)\cardinal{S}} = q^{-8\cardinal{S}}$.

Now we apply the (asymmetric) LLL. In order to apply LLL, we need to define the $\x$ values for the bad events. 
Define $\x({\calA}_h) = q^{-7}$ and $\x({\calB}_{S, T}) = q^{-7\cardinal{S}}$. 

Focus on some bad event ${\calA}_h$. If ${\calA}_k$ is dependent of ${\calA}_h$, then either $w_{k, h} > 0$, or $w_{h, k} > 0$ and $z_k > 0$. By \Clm{degree-is-small}, the number of events ${\calA}_{k}$ dependent of ${\calA}_h$ is at most $(1.1 + 1.1p)/(1/q) < q^3/4$ since each positive $w_{h,k}$ has $w_{h,k} \geq 1/q$ and $p\leq q$.  We count the number of events ${\calB}_{S, T}$ dependent on ${\calA}_h$ satisfying $\cardinal{S} = t$.  ${\calB}_{S,T}$ is dependent on ${\calA}_h$ only if $h \in S$. Since the degree of vertices in $G_w$ is at most $q^3/4$, and $G_w[S]$ is connected, the number of sets $S$ is at most $(q^3)^t$. \footnote{We can use the same argument as in \cite{Fei08}. Given a graph $G$ of degree $d$ and a vertex $v$, we want to bound the number of induced connected sub-graphs of $s$ containing $v$. Fix an arbitrary spanning tree for the sub-graph and root it at $v$. There are at most $2^{2s} = 4^s$ tree structures: visiting the tree in the DFS order and we only need to specify which $s-1$ edges are forward moves. Given a tree structure, there at at most $d^s$ choices for the tree. Thus the number is bounded by $(4d)^s$.} 
For a fixed $S$, there are at most $2^t$ different sets $T$.  Thus, the number of such dependent events is at most $(q^3)^t \times 2^t \leq q^{4t}$.  This gives, 
\ifdefined\CR
\begin{align*}
&\quad \x\left({\calA}_h\right)\prod_{{\cE}\sim {\calA}_h}\left(1 - \x({\cE})\right)\\
&\geq q^{-7} \left(1 - q^{-7}\right)^{q^3}\prod_{t \geq 1}\left(1 - q^{-7t}\right)^{q^{4t}} \geq q^{-8} \geq \Pr({\calA}_h),
\end{align*}
\else
\[
\x\left({\calA}_h\right)\prod_{{\cE}\sim {\calA}_h}\left(1 - \x({\cE})\right) \geq q^{-7} \left(1 - q^{-7}\right)^{q^3}\prod_{t \geq 1}\left(1 - q^{-7t}\right)^{q^{4t}} \geq q^{-8} \geq \Pr({\calA}_h),
\]
\fi
where the product in the LHS is over all events $\cE$ dependent on $\cA_h$.  %and $\Pr({\calA}_h)$ is the probability that ${\calA}_h$ occurs. 

Now consider some bad event ${\calB}_{S, T}$ with $\cardinal{S} = s$. Using a similar counting argument, the number of events ${\calA}_h$ that are dependent on ${\calB}_{S, T}$ is at most $sq^3$ and the number of events ${\calB}_{\tilde S, \tilde T}$ dependent on ${\calB}_{S, T}$ satisfying $\tilde S = t$ is at most $sq^{4t}$.  Thus, 
\ifdefined\CR
\begin{align*}
&\quad \x\left({\calB}_{S, T}\right) \prod_{{\cE}\sim {\calB}_{S, T}}(1 - \x({\cE}))\\
&\geq q^{-7s} \left(1 - q^{-7}\right)^{sq^3}\prod_{t \geq 1}\left(1 - q^{-7t}\right)^{sq^{4t}}\\
& \geq q^{-8s} \geq \Pr({\calB}_{S, T}).
\end{align*}
\else
\[
\x\left({\calB}_{S, T}\right) \prod_{{\cE}\sim {\calB}_{S, T}}(1 - \x({\cE})) \geq q^{-7s} \left(1 - q^{-7}\right)^{sq^3}\prod_{t \geq 1}\left(1 - q^{-7t}\right)^{sq^{4t}} \geq q^{-8s} \geq \Pr({\calB}_{S, T}).
\]
\fi
We have verified that the conditions for the asymmetric LLL, and this completes the proof of the lemma.

To see how the theorems of~\cite{MT10,HSS11} can be applied, note that that there are at most $m$ events of the type $\cA_h$. The events $\cB_{S,T}$ are exponentially many and do not seem to be efficiently verifiable. 
This is where one uses \Thm{hss-2} of~\cite{HSS11}. Note in the above analysis, \eqref{eq:HSS} holds with $\eps = 1/7$. The theorem implies the `core bad events' which have $\Pr[\cB_{S,T}] \geq 1/\poly(m)$, that is, those with  $|S| = O(\frac{\log m}{\log q})$, are at most $m^{O(1)}$. Since these can be enumerated over using a BFS tree, 
we can find a `good' assignment in polynomial time.
\end{proof}

\subsection{Proof of \Lem{reduce-p}}
\label{app:reducing-p-and-q-2}
\reducep*
\begin{proof}
For every $h \in M$ with $ 0 < z_h < 1/p' = 2/p$, we let $z'_h = 1/p'$ with probability $p'z_h$ and let $z'_h = 0$ with probability $1-p'z_h$. For all other machines $h$, we let $z'_h = z_h$.  Note that $\Exp[z'_h] = z_h$. 

To make $(\set{M_j:J \in J_\sfH}, w, z')$ a canonical instance, we need to apply more operations. If some $h \in M_j$ has $z'_h = 0$, we need to remove $h$ from $M_j$. If $z'_k = 0$, for every $h \neq k$, we need to change the $w_{h,k}$ light load of type-$(h, k)$ to load of type-$(k, k)$.  However, these operations do not affect our proof. Thus, we can pretend our new instance is $\cI' = (\set{M_j:J \in J_\sfH}, w, z')$.

Since the definition of a $\delta$-good assignment is independent of $z$,  a $\delta$-good assignment for $\cI'$ is a $\delta$-good assignment for $\cI$. The non-trivial part is to show that $\cI'$ is $(p',q,\theta')$-canonical. The non-trivial properties are  \ref{property:canonical-instance-big-job-covered} and \ref{property:canonical-instance-load-small}. Note that \ref{property:canonical-instance-z-i-large} is satisfied by the construction above and \ref{property:canonical-instance-light-load-large} is untouched.

To this end, consider the following two types of bad events. If none of the bad events occur we are done. Once again, we use LLL to show that none of the bad events occur with positive probability, and the lemma is proven by the theorems of~\cite{MT10,HSS11}.
\begin{enumerate}
 \item ${\calA}_h, h \in M$: ${\calA}_h$ occurs if $z'_h+w_{M,h}(1-z'_h) + \sum_{k \in M}w_{h,k}z'_{k} > 1+\theta'$.
 
\item ${\calB}_j, j \in {J_\sfH}$. ${\calB}_j$ occurs if $\sum_{h\in M_j}z'_h < 0.2 - \theta'$.
\end{enumerate}

Consider the quantity $Z$ on the left side of the inequality defining ${\calA}_h$. Notice that we have $z_h+{w_{M,h}}(1-z_h) + \sum_{k \in M}w_{h,k}z_{k} \leq 1+\theta < 1.1$.  All random variables $z'_k$ take value in $\set{0, 1/p'}$; moreover, we have $\E[z'_k] = z_k$.   The coefficient before each $z'_k, k \neq h$ in $Z$ is at most $w_{h,k}\leq 1.1$. The coefficient before $z'_h$ is at most $1$ but might as small as $-0.1$. If $z'_h$ is not fixed and the coefficient before it is negative, we define $y = 1/p' - z'_h$ and replace the random variable $z'_h$ with $y$. The Chernoff bound in \Thm{Chernoff} gives that $\calA_h$ happens with probability
$\exp\left(-(\theta' - \theta)^2p'/4\right) = \exp\left(-(\theta' - \theta)^2p/8\right)=e^{-8\log p}=p^{-8}$.

Now focus on ${\calB}_j$ for some $j \in {J_\sfH}$.  Since the unfixed random variable $z'_h$ takes value between $0$ and $1/p' = 2/p$, the Chernoff bound gives that the probability that ${\calB}_j$ occurs is at most $\exp\left(-(\theta' - \theta)^2p/8\right) = p^{-8}$.
In order to apply the uniform LLL,  we need to upper-bound the number of bad events that each ${\calA}_h$ (or ${\calB}_j$) depends on.  
$\cA_h$ and $\cA_k$ are dependent only if $h$ and $k$ are adjacent $G_w$; $\cA_h $ and $\cB_j$ are dependent if there exists $k\in M_j$ such that $h$ and $k$ are adjacent in $G_w$. 
Since each $|\set{k \in M_j:z_k > 0}| \leq p$ and  since the degree of the graph $G_w$ is at most $2qp + 4q \leq 3p^2$ by \Clm{degree-is-small},  any bad event is dependent on at most $(3p^2)p \leq p^8/e$ other events.  Thus the symmetric LLL conditions hold, and thus with positive probability none of the bad events occur.
The polynomial time algorithm follows directly from~\cite{MT10} since the number of bad events is polynomially many.
\end{proof}

\end{document}